\newtheorem{theorem}{Theorem}[section]
\newtheorem{corollary}[theorem]{Corollary}
\newtheorem{lemma}[theorem]{Lemma}
\newtheorem{fact}[theorem]{Fact}
\newtheorem{assumption}[theorem]{Assumption}
\theoremstyle{definition}
\newtheorem{definition}[theorem]{Definition}
\newenvironment{fminipage}%
{\begin{Sbox}\begin{minipage}}%
		{\end{minipage}\end{Sbox}\fbox{\TheSbox}}
\def\defeq{\stackrel{\mathrm{def}}{=}}
\def\setof#1{\left\{#1  \right\}}
\def\sizeof#1{\left|#1  \right|}
\def\ceil#1{\left\lceil #1 \right\rceil}
\def\norm#1{\left\| #1 \right\|}
\newcommand{\SStil}{\boldsymbol{\mathit{\widetilde{S}}}}
\newcommand{\CCtil}{\boldsymbol{\mathit{\widetilde{C}}}}
\newcommand{\matlowtil}{\boldsymbol{\mathit{\widetilde{\mathcal{L}}}}}
\newcommand{\calDtil}{\boldsymbol{\mathit{\widetilde{\mathcal{D}}}}}
\newcommand{\cctil}{\boldsymbol{\mathit{\widetilde{c}}}}
\newcommand{\SShat}{\boldsymbol{\mathit{\widehat{S}}}}
\newcommand{\matlowhat}{\boldsymbol{\mathit{\widehat{\mathcal{L}}}}}
\newcommand{\calDhat}{\boldsymbol{\mathit{\widehat{\mathcal{D}}}}}
\newcommand{\calD}{\boldsymbol{\mathit{\mathcal{D}}}}
\newcommand{\matlow}{\boldsymbol{\mathit{{\mathcal{L}}}}}
\newcommand{\PP}{\boldsymbol{\mathit{P}}}
\newcommand{\rea}{\mathbb{R}}
\newcommand{\tauEst}{\widehat{\tau}}
\newcommand{\csamp}{\textsc{CliqueSample}}
\newcommand{\eps}{\epsilon}
\newcommand{\vecone}{\boldsymbol{\mathit{1}}}
\newcommand{\veczero}{\boldsymbol{\mathit{0}}}
\newcommand{\matzero}{\boldsymbol{\mathit{0}}}
\newcommand{\vstar}[2]{\ensuremath{\left(#1\right)_{#2}}}
\newcommand{\combLevEst}{\textsc{CombLevScoreEst}}
\newcommand{\combSparsify}{\textsc{CombGraphSparsify}}
\newcommand\exact{\textsc{SampleWithin}}
\newcommand\across{\textsc{SampleAcross}}
\newcommand\sample{\textsc{SampleEdge}}
\newcommand\helpestimatereff{\textsc{HelpEstimate}\text{$R_\text{eff}$}\xspace}
\newcommand\estimatereff{\textsc{Estimate}\text{$R_\text{eff}$}\xspace}
\newcommand\approxS{\textsc{ApproxSchur}}
\def\aa{\pmb{\mathit{a}}}
\newcommand\bb{\boldsymbol{\mathit{b}}}
\newcommand\cc{\boldsymbol{\mathit{c}}}
\newcommand\ww{\boldsymbol{\mathit{w}}}
\newcommand\yy{\boldsymbol{\mathit{y}}}
\newcommand\xx{\boldsymbol{\mathit{x}}}
\renewcommand\AA{\boldsymbol{\mathit{A}}}
\newcommand\BB{\boldsymbol{\mathit{B}}}
\newcommand\CC{\boldsymbol{\mathit{C}}}
\newcommand\DD{\boldsymbol{\mathit{D}}}
\newcommand\II{\boldsymbol{\mathit{I}}}
\newcommand\MM{\boldsymbol{\mathit{M}}}
\newcommand\LL{\bm{\mathit{L}}}
\newcommand\RR{\boldsymbol{\mathit{R}}}
\renewcommand\SS{\boldsymbol{\mathit{S}}}
\newcommand\TT{\boldsymbol{\mathit{T}}}
\newcommand\XX{\boldsymbol{\mathit{X}}}
\newcommand\LLtil{\boldsymbol{\mathit{\tilde{L}}}}
\newcommand\Otil{\widetilde{O}}
\def\moverlay{\mathpalette\mov@rlay}
\def\mov@rlay#1#2{\leavevmode\vtop{%
		\baselineskip\z@skip \lineskiplimit-\maxdimen
		\ialign{\hfil$\m@th#1##$\hfil\cr#2\crcr}}}
\newcommand{\charfusion}[3][\mathord]{
	#1{\ifx#1\mathop\vphantom{#2}\fi
		\mathpalette\mov@rlay{#2\cr#3}
	}
	\ifx#1\mathop\expandafter\displaylimits\fi}
\colorlet{DarkRed}{red!50!black}
\colorlet{DarkGreen}{green!50!black}
\colorlet{DarkBlue}{blue!50!black}
\let\oldnl\nl% Store \nl in \oldnl
\newcommand{\nonl}{\renewcommand{\nl}{\let\nl\oldnl}}% Remove line number for one line
\title{Sampling Random Spanning Trees \\ Faster than Matrix Multiplication}
\author{
David Durfee\thanks{Georgia Institute of Technology. \texttt{email:ddurfee@gatech.edu}}
\and
Rasmus Kyng\thanks{Yale University. \texttt{email:rasmus.kyng@yale.edu} }
\and
John Peebles\thanks{Massachusetts Institute of Technology. \texttt{email:jpeebles@mit.edu}}
% \thanks{Affiliation: MIT; This material is based upon work supported by the National Science Foundation Graduate Research Fellowship under Grant No. 1122374 and by the National Science Foundation under Grant No. 1065125.}
\and
Anup B. Rao\thanks{Georgia Institute of Technology. \texttt{email:anup.rao@gatech.edu}}
\and
Sushant Sachdeva\thanks{Google. \texttt{email:sachdevasushant@gmail.com}. }
}
\date{}
\DeclareMathOperator{\diagop}{diag}
\newcommand{\scElim}{\approxS}
\newcommand{\combScElim}{\textsc{\textsc{CombApproxSchur}}}
\newcommand{\partialChol}{\textsc{ApxPartialCholesky}}
\newcommand{\levEst}{\textsc{LevScoreEst}}
\newcommand{\sparsify}{\textsc{GraphSparsify}}
\newcommand{\poly}{\mathrm{poly}}
\newcommand{\ulap}{\LL}
\newcommand{\schurto}[2]{\textsc{Schur}(#1,#2)}
\newcommand{\todolow}[1]{}
\begin{document}
\pagenumbering{roman}
\maketitle
\begin{abstract}
We present an algorithm that, with high probability, generates a random spanning tree from an edge-weighted undirected graph in $\Otil(n^{4/3}m^{1/2}+n^{2})$ time \footnote{The $\Otil(\cdot)$ notation hides $\poly(\log n)$ factors}. The tree is sampled from a distribution where the probability of each tree is proportional to the product of its edge weights. This improves upon the previous best algorithm due to Colbourn et al. that runs in matrix multiplication time, $O(n^\omega)$. For the special case of unweighted graphs, this improves upon the best previously known running time of $\tilde{O}(\min\{n^{\omega},m\sqrt{n},m^{4/3}\})$ for $m \gg n^{5/3}$ (Colbourn et al. '96, Kelner-Madry '09, Madry et al. '15).

% We present an algorithm for generating random spanning trees, with failure probability $\delta$, on edge-weighted undirected graphs in expected $\Otil(n^{5/3 }m^{1/3}\log^4{1/\delta})$ time. For general edge-weighted graphs, this improves the best previously known running time of $O(n^\omega)$ \cite{ColbournMN96}. For unit-weight graphs and $m >> n^{7/4}$, this improves the best previously known running time of $\tilde{O}(\min\{n^{\omega},m\sqrt{n},m^{4/3})$ \cite{ColbournMN96,KelnerM09,MadryST15}. 

The effective resistance metric is essential to our algorithm, as in the work of Madry et al., but we eschew determinant-based and random walk-based techniques used by previous algorithms. Instead, our algorithm is based on Gaussian elimination, and the fact that effective resistance is preserved in the graph resulting from eliminating a subset of vertices (called a Schur complement). As part of our algorithm, we show how to compute $\eps$-approximate effective resistances for a set $S$ of vertex pairs via approximate Schur complements in $\Otil(m+(n + |S|)\eps^{-2})$ time, without using the Johnson-Lindenstrauss lemma which requires $\Otil( \min\{(m + |S|)\eps^{-2}, m+n\eps^{-4} +|S|\eps^{-2}\})$ time. We combine this approximation procedure with an error correction procedure for handing edges where our estimate isn't sufficiently accurate.

% The effective resistance metric will be essential to our algorithm,~\cite{MadryST15}, but we eschew determinant-based and random walk-based techniques used by previous algorithms. Instead, our algorithm will be Gaussian elimination-based where the Schur complement and it's preservation of effective resistance will play an integral role. Using this fact, we will give a fast algorithm with a combination of recursion, quick approximate computation of Schur complement,  and an error correction procedure. 

%%% Local Variables:
%%% mode: latex
%%% TeX-master: "main"
%%% End:

\end{abstract}
\newpage

\setcounter{page}{0}
\pagenumbering{arabic}

\section{Introduction}
Random spanning trees are one of the most well-studied probabilistic
structures in graphs. Their history goes back to the classic
matrix-tree theorem due to Kirchoff in 1840s that connects the
spanning tree distribution to matrix
determinants~\cite{Kirchhoff47}. The task of algorithmically sampling
random spanning trees has been studied extensively~\cite{Guenoche83,
  Broder89, Aldous90,Kulkarni90,Wilson96,
  ColbournMN96,KelnerM09,MadryST15,HarveyX16}.

Over the past decade, sampling random spanning trees have found a few
surprising applications in theoretical computer science -- they were
at the core of the breakthroughs in approximating the traveling
salesman problem in both the symmetric~\cite{GharanSS11} and the asymmetric
case~\cite{AsadpourGMGS10}. Goyal et al.~\cite{GoyalRV09} showed that one could
construct a cut sparsifier by sampling random spanning trees.

Given an undirected, weighted graph $G(V,E,w),$ the algorithmic task
is to sample a tree with a probability that is proportional to the
product of the weights of the edges in the tree. We give an algorithm
for this problem, that, for a given $\delta >0,$ outputs a random
spanning tree from this distribution with probability $1-\delta$ in
expected time $\Otil((n^{4/3}m^{1/2}+n^{2})\log^4 1/\delta).$

For weighted graphs, a series of works building on the connection
between matrix trees and determinants, culminated in an algorithm due
to Colbourn, Myrvold, and Neufeld~\cite{ColbournMN96} that generates a
random spanning tree in matrix multiplication time
($O(n^{2.37..})$~\cite{Williams12}). Our result is the first
improvement on this bound for more than twenty years! It should be
emphasized that the applications to traveling salesman
problem~\cite{AsadpourGMGS10, GharanSS11} require sampling trees on
graphs with arbitrary weights.

A beautiful connection, independently discovered by
Broder~\cite{Broder89} and \cite{Aldous90} proved that one could
sample a random spanning tree, by simply taking a random walk in the
graph until it covers all nodes, and only keeping the first incoming
edge at each vertex. For graphs with unit-weight edges, this results
in an $O(mn)$ algorithm. The work of Kelner-Madry~\cite{KelnerM09} and
Madry et al.~\cite{MadryST15} are based on trying to speed up these
walks. These works together give a previously best running time of
$\tilde{O}(\min\{n^{\omega},m\sqrt{n},m^{4/3}\})$ for unit-weighted
graphs. Our algorithm is an improvement for all graphs with
$m \gtrsim n^{5/3}.$

The above works based on random walks seem challenging to generalize
to weighted graphs. The key challenge being that, in weighted graphs,
random walks can take a very long time to cover the graph. We take an
approach based on another intimate and beautiful connection; one
between random spanning trees and Laplacians.

\paragraph{Random Spanning Trees and the Laplacian Paradigm.}
A by-now well-known but beautiful fact states that the marginal
probability of an edge being in a random spanning tree is exactly
equal to the product of the edge weight and the effective resistance of the edge (see
Fact~\ref{fact:levprob}). Our algorithm will be roughly based on
estimating these marginals, and sampling edges accordingly. The key
challenge we overcome here, is that these sampling probabilities
change every time we condition on an edge being present or absent in
the tree.

Taking this approach of computing marginals allows us to utilize fast
Laplacian solvers and the extensive tools developed
therein~\cite{SpielmanTengSolver:journal, KoutisMP10:journal,
  KoutisMP11, KelnerOSZ13,lee2013efficient, CohenKMPPRX14, KyngLPSS16,
  KyngS16}. As part of our algorithm for generating random spanning
trees, we give a procedure to estimate all pair-wise effective
resistances in the graph without using the Johnson-Lindenstrauss
lemma. Our procedure is also faster if we only want to compute
effective resistances for a smaller subset of pairs.

Our procedure for estimating effective resistances is recursive. If we
focus on a small subset of the vertices, for the purpose of computing
effective resistances, we can eliminate the remaining vertices, and
compute the resulting \emph{Schur complement}. Computing the schur
complement exactly is costly and results in an $O(n^\omega)$ algorithm
(similar to~\cite{HarveyX16}). Instead, we develop a fast algorithm
for approximating the schur complement. Starting from a graph with $m$
edges, we can compute a schur complement onto $k$ vertices with at
most $\eps$ error (in the spectral sense), in $\Otil(m+n\eps^{-2})$
time. The resulting approximation has only $\Otil(k\eps^{-2})$ edges.

We hope that faster generation of random spanning trees and the tools
we develop here will find further applications, and become an integral
part of the \emph{Laplacian paradigm}.
\subsection{Prior Work}

% \todo{add prior work on all-pairs effective resistances, probably cite spielman-teng, also maybe cite combinatorial constructions of spectral sparsifiers and contrast with the fact that they don't get exact ER estimates}

% \todo{maybe cite Markov chain algorithms, eg. Jerrum and Son (log sobolev constants for balanced matroids), maybe Feder and Mihael (balanced matroids), maybe broder (same paper that has random walks algorithm also uses it to give polynomial bound of mixing time for markov chain), and maybe some other intermediate papers that made incremental progress between these.}

One of the first major results in the study of spanning trees was Kirchoff's matrix-tree theorem, which states that the total number of spanning trees for general edge weighted graphs is equal to any cofactor of the associated graph Laplacian \cite{Kirchhoff47}.

Much of the earlier algorithmic study of random spanning trees heavily
utilized these determinant calculations by taking a random integer
between $1$ and the total number of trees, then efficiently mapping
the integer to a unique tree. This general technique was originally
used in \cite{Guenoche83, Kulkarni90}  to give an $O(mn^3)$-time
algorithm, and ultimately was improved to an $O(n^\omega)$-time
algorithm by \cite{ColbournMN96}, where $m,n$ are the numbers of edges
and vertices in the graph, respectively, and $\omega \approx 2.373$ is
the matrix multiplication exponent \cite{Williams12}. These
determinant-based algorithms have the advantage that they can handle
edge-weighted graphs, where the weight of a tree is defined as the
product of its edge weights.\footnote{To see why this definition is
  natural, note that this corresponds precisely to thinking of an edge
  with weight $k$ as representing $k$ parallel edges and then
  associating all spanning trees that differ only in which parallel
  edges they use.} Despite further improvements for unweighted graphs,
no algorithm prior to our work improved upon this $O(n^\omega)$
runtime in the general weighted case in over $20$ years since this
work. 
% \todo{If possible, double check with Madry that this statement isn't
%   to strong.} 
Even for unweighted graphs, nothing faster than $O(n^\omega)$ was known for dense graphs with $m \geq n^{1.78}$.

We now give a brief overview of the improvements for unweighted graphs along with a recent alternative $O(n^\omega)$ algorithm for weighted graphs.

Around the same time as the $O(n^\omega)$-time algorithm was discovered, Broder and Aldous independently showed that spanning trees could be randomly generated with random walks, where each time a new vertex is visited, the edge used to reach that vertex is added to the tree \cite{Broder89, Aldous90}. Accordingly, this results in an algorithm for generating random spanning trees that runs in the amount of time proportional to the time it takes for a random walk to cover the graph. For unweighted this \emph{cover time} is $O(mn)$ in expectation is better than $O(n^\omega)$ in sufficiently sparse graphs and worse in dense ones. However, in the more general case of edge-weighted graphs, the cover time can be exponential in the number of bits used to describe the weights. Thus, this algorithm does not yield any improvement in worst-case runtime for weighted graphs.
Wilson~\cite{Wilson96} gave an algorithm for generating a random spanning tree in expected time proportional to the mean hitting time in the graph.
This time is always upper bounded by the cover time, and it can be smaller.
As with cover time, in weighted graphs the mean hitting time can be exponential in the number of bits used to describe the weights, and so the this algorithm also does not yield an improvement in worst-case runtime for weighted graphs.

Kelner and Madry improved upon this result by showing how to simulate this random walk more efficiently. They observed that one does not need to simulate the portions of the walk that only visit previously visited vertices. Then, they use a low diameter decomposition of the graph to partition the graph into components that are covered quickly by the random walk and do precomputation to avoid explicitly simulating the random walk on each of these components after each is initially covered. This is done by calculating the probability that a random walk entering a component at each particular vertex exits on each particular vertex, which can be determined by solving Laplacian linear systems. This approach yields an expected runtime of $\widetilde{O}(m\sqrt{n})$ for unweighted graphs \cite{KelnerM09}.

This was subsequently improved for sufficiently sparse graphs with an algorithm that also uses shortcutting procedures to obtain an expected runtime of $\widetilde{O}(m^{4/3})$ in unweighted graphs \cite{MadryST15}. Their algorithm uses a new partition scheme based on effective resistance and additional shortcutting done by recursively finding trees on smaller graphs that correspond to random forests in the original graph, allowing the contraction and deletion of many edges. 

Recently, Harvey and Xu~\cite{HarveyX16} gave a simpler deterministic $O(n^\omega)$ time algorithm that uses conditional effective resistances to decide whether each edge is in the tree, contracting the edge in the graph if the edge will be in the tree and deleting the edge from the graph if the edge will not.\footnote{Note that for any edge $e$, there is a bijection between spanning trees of the graph in which $e$ is contracted and spanning trees of the original graph that contain $e$. Similarly, there is a bijection between spanning trees of the graph in which $e$ is deleted and spanning trees of the original graph that do not contain $e$.} Updating the effective resistance of each edge is done quickly by using recursive techniques similar to those in \cite{ColbournDN89} and via an extension of the Sherman-Morrison formula.

%%% local Variables:
%%% mode: latex
%%% TeX-master: "main"
%%% End:

%\input{introduction}

\section{Our Results}\label{sec:results}

\subsection{Random Spanning Trees}\label{sec:stResults}
\begin{restatable}[]{theorem}{mainTrees}
	\label{thm:mainTrees}
For any $0< \delta <1,$ the routine $\textsc{GenerateSpanningTree}$ (Algorithm~\ref{alg:mainAlgo})   outputs a random spanning tree from the $\ww$-uniform distribution  with probability at least $1 -\delta$ and takes expected time $\Otil((n^{4/3}m^{1/2}+n^{2}) \log^4 1/\delta).$
\end{restatable}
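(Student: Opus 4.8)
The plan is to build on the classical marginal identity recalled in Fact~\ref{fact:levprob}: the probability that an edge $e = (u,v)$ of weight $w_e$ appears in a $\ww$-uniform random spanning tree equals its statistical leverage score $\tau_e = w_e\,\er(u,v)$. Combined with the two conditioning rules --- conditioning on $e$ being in the tree is the same as contracting $e$, and conditioning on $e$ being absent is the same as deleting $e$ --- this immediately gives a correct but slow sampler: fix any order on the edges, and for each edge in turn compute its conditional marginal (again a leverage score, now in the current contracted/deleted graph), flip a coin with that bias, and contract or delete accordingly. Telescoping the conditional probabilities shows the output tree has exactly the target probability, so correctness is essentially automatic and the real content of the theorem is the running time; executed naively the above performs roughly $m$ Laplacian solves interleaved with updates, which is far too slow.

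To speed this up I would decide the edges in \emph{blocks} rather than one at a time, exploiting that effective resistances among a vertex subset $S$ --- and also the effect of contracting or deleting edges with both endpoints in $S$ --- are faithfully recorded by the Schur complement of the Laplacian onto $S$. Thus to decide a block $F$ of edges it suffices to recurse inside the Schur complement onto $V(F)$, sampling presence/absence for the edges of $F$ in that smaller graph, and then push the resulting contractions and deletions back to the ambient graph before moving to the next block. Since exact Schur complements cost $O(n^{\omega})$, I would instead call the approximate Schur complement routine \approxS\ from the previous section, which in $\Otil(m + n\eps^{-2})$ time returns an $\eps$-spectral approximation supported on only $\Otil(k\eps^{-2})$ edges when eliminating onto $k$ vertices; a constant relative error here is enough for the ``bulk'' of decisions. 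With the block sizes and recursion parameters balanced against $m$, the elimination steps total $\Otil(n^{4/3}m^{1/2})$, while the additive $\Otil(n^2)$ comes from handling the roughly $n$ edges that actually join the tree together with the bottom of the recursion, where a reduced graph on $\Theta(n)$ vertices is processed directly.

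Because \approxS\ only yields approximate leverage scores $\widehat{\tau}_e$, biasing the coins by $\widehat{\tau}_e$ would sample from a perturbed distribution, so I would attach an error-correction step to each cheap estimate: draw a tentative decision from $\widehat{\tau}_e$, but with the residual probability that accounts for $|\widehat{\tau}_e - \tau_e|$, fall back to an exact computation of $\tau_e$ on the small reduced graph and re-decide, coupled so that the edge's final decision has precisely the right conditional marginal. Since the estimates have small relative error, an exact recomputation is triggered only rarely, so the expected number of these expensive corrections stays within the time budget; and the coupling restores the exact $\ww$-uniform distribution conditioned on the randomized primitives succeeding. Running the solvers, sparsifiers, and \approxS\ at failure probability $\delta/\poly(nm)$ and amplifying over the $m$ edges and the $O(\log n)$ recursion levels is what introduces the $\poly(\log n)$ overhead and the $\log^4 1/\delta$ dependence, and accounts for the ``with probability $1-\delta$'' in the statement.

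I expect the main obstacle to be exactly this error-correction analysis: showing that the spectral errors introduced by \approxS\ compose down the recursion as \emph{relative} (not additive) errors in leverage scores, that the coupling genuinely produces the exact target distribution rather than merely a statistically close one, and that the expected number of triggered exact recomputations --- the only genuinely expensive operations --- is small enough not to overwhelm the $\Otil(n^{4/3}m^{1/2} + n^2)$ bound. By comparison, the running-time bookkeeping for the recursion --- choosing block sizes, bounding reduced-graph sizes after sparsification, and summing costs over the recursion tree --- should be intricate but routine once the accuracy/block-size trade-off is pinned down.
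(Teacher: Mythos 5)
Your overall plan coincides with the paper's: decide edges by conditional leverage scores, recurse on vertex halves via \approxS{} (which preserves effective resistances and commutes with contraction/deletion), and restore exactness of the distribution by a coupling that falls back to a more accurate computation whenever the coin $r$ lands too close to the approximate leverage score. The correctness skeleton and the source of the $1-\delta$ (union bound over \approxS{} failures) are as in the paper.

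The genuine gap is in your error-correction step, in two respects. First, "an exact computation of $\tau_e$ on the small reduced graph" is not an exact computation of $\tau_e$: the reduced graph is itself only an $\eps$-approximate Schur complement, so to recover the exact $\ww$-uniform distribution the fallback must ultimately be able to consult the \emph{original} graph (updated by all prior contractions/deletions), computed to arbitrarily fine precision; the paper's \textsc{EstimateLeverageScore} does this by repeatedly doubling the precision until $r$ is decisive. Second, and more importantly for the runtime, a single cheap-estimate-plus-exact-fallback scheme does not give the claimed bound. With one global error parameter $\eps$, the \approxS{} work is $\Otil(n^2\eps^{-2})$ at the bottom of the recursion while the expected fallback cost is $\Otil(\eps\, m \sum_e l_e) = \Otil(\eps m n)$; optimizing $\eps$ yields $\Otil(n^{4/3}m^{2/3})$, which for dense graphs is $n^{8/3}$ --- worse than $n^{\omega}$, not better. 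The paper's key device is a \emph{graded} correction: each edge sits in a chain of graphs $G_k, G_{k-1}, \ldots, G_0=G$ from the recursion tree, the indecisive event at level $i$ triggers a recomputation only in the level-$(i-1)$ graph of size $n/2^{i-1}$ (cost $\Otil((n/2^{i-1})^2)$, not $\Otil(m)$), and the error parameter $\eps(i)=2^{i/2}n^{-1/6}m^{-1/4}\log^{-2}n$ (resp.\ $2^{i/2}n^{-1/2}\log^{-2}n$ for sparse graphs) is made level-dependent so that the per-level products $\eps(i)\cdot(n/2^i)^2$ telescope to $\Otil(n^{1/3}m^{1/2})$ per unit of leverage score. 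Without this multi-level climbing and the level-dependent choice of $\eps$, the $n^{4/3}m^{1/2}+n^2$ bound does not follow. A smaller omission: you need the analogue of the paper's Lemma on conditional leverage scores, namely that the \approxS{} guarantee survives the interleaved contractions and deletions performed between the time a reduced graph is built and the time an edge in it is decided.
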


\

Our algorithm samples edges according to their conditional effective resistance as in \cite{HarveyX16}. We repeatedly use the well known fact that the effective resistance multiplied by the edge weight, which we will refer to as the \textit{leverage score} of the edge,  is equal to  the probability that the edge belongs to a randomly generated spanning tree.  To generate a uniformly random spanning tree, one can sample edges in an iterative fashion. In every iteration, the edge being considered is added to the spanning tree with probability exactly equal to its leverage score. If it is added to the tree, the graph is updated by contracting that edge, otherwise, the edge is removed from the graph.
Though using fast Laplacian solvers~\cite{SpielmanTengSolver:journal} one can compute the leverage score of a single edge in $\Otil(m)$ time, since one needs to potentially do this $m$ times (and the graph keeps changing every iteration), this can take $\Otil(m^2)$ time if done in a naive way.  It therefore becomes necessary to compute the leverage scores in a more clever manner.

The algorithms in \cite{ColbournDN89,HarveyX16} get a speed up by a clever recursive structure which enables one to work with much smaller graphs to compute leverage scores at the cost of building such a structure. This kind of recursion will be  the starting point of our algorithm which  will randomly partition the vertices into two equally sized sets, and compute Schur complements onto each of the set. We crucially use the fact that Schur complement, which can be viewed as block Gaussian elimination, preserves effective resistances of all the edges whose incident vertices are not eliminated. We first recursively sample edges contained in both these sets, contracting or deleting every edge along the way,  and then the edges that go across the partition is sampled. Algorithm in \cite{HarveyX16} is essentially this, and they prove that it takes $O(n^\omega).$

In order to improve the running time, the main workhorse we use is derived from the recent paper \cite{KyngS16} on fast Laplacian solvers which provided an almost linear time algorithm for performing an approximate Gaussian elimination of Laplacians. We generalize the statement in \cite{KyngS16} to show that one can compute an approximate Schur complement of a set of vertices in a Laplacian quickly. Accordingly, one of our primary results, discussed in Section~\ref{sec:apxSchurResults}, will be that a spectrally approximate Schur complement can be efficiently computed, and we will leverage this result to achieve a faster algorithm for generating random spanning trees.

Since we compute approximate Schur complements, the leverage scores of edges are preserved only approximately. But we set the error parameter such that we can get a better estimate of the leverage score if we move up the recursion tree, at the cost of paying more for the computing leverage score of an edge in a bigger graph. We give a sampling procedure that samples edges into the random spanning tree from the true distribution by showing that approximate leverage score can be used to make the right decisions most of the times. 

Subsequently, we are presented with a natural trade-off for our error parameter choice in the $\approxS$ routine: larger errors speed up the runtime of $\approxS$, but smaller errors make moving up the recursion to obtain a more exact effective resistance estimate less likely. Furthermore, the recursive construction will cause the total vertices across each level to double making small error parameters even more costly as we recurse down. Our choice of the error parameter will balance these trade-offs to optimize running time.

The routine $\approxS$ produced an approximate Schur complement only with high probability. We are not aware of a way to certify that a graph sparsifier is good quickly. Therefore, we condition on the event that the $\approxS$ produces correct output on all the calls, and show ultimately show that it is true with high probability.

Our algorithm for approximately generating random spanning trees, along with a proof of Theorem~\ref{thm:mainTrees} is given in Section ~\ref{sec:algorithm} and ~\ref{sec:runtime}

\subsection{Approximating the Schur complement}\label{sec:apxSchurResults}

\begin{restatable}[]{theorem}{schurApx}
	\label{thm:schurApx}
  Given a connected undirected multi-graph
  $G =(V,E)$, with positive edges weights 
  $w : E \to \rea_{+}$, and associated Laplacian $\LL$,
  a set vertices $C \subset V$,
  and scalars $0<\eps\leq1/2$, $0< \delta < 1$,
  the algorithm $\scElim(\LL,C,\eps,\delta)$
  returns a Laplacian matrix $\SStil$.
  With probability $\geq 1-\delta$, the following statements all hold:
  $\SStil \approx_{\eps} \SS$, where $\SS$ is the Schur complement of
  $\LL$ w.r.t elimination of $F=V-C$.
  $\SStil$ is a Laplacian matrix whose edges are supported on $C$.
  Let $k = \sizeof{C} = n-\sizeof{F}$.
  The total number of non-zero entries $\SStil$ is $O(k
  \eps^{-2}\log(n/\delta))$.
  The total running time is bounded by
  $O((m\log n \log^{2}(n/\delta)+
  n \eps^{-2} \log n \log^{4} (n/\delta))
  \operatorname{polyloglog}(n))$.
\end{restatable}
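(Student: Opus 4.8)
\emph{Overview.} I would deduce the theorem from the guarantees of an approximate \emph{partial} Gaussian elimination routine, call it $\partialChol$, that generalizes the sparse approximate Cholesky factorization of \cite{KyngS16} from eliminating all vertices to eliminating only $F = V - C$. Concretely, $\partialChol$ processes the vertices of $F$ in a uniformly random order; to eliminate a vertex $v$ of current weighted degree $d$, instead of adding to the graph the weighted clique $\mathrm{Clique}_v$ on the neighbourhood $N(v)$ that exact elimination would create, it adds a sparse unbiased estimator $\widetilde{\mathrm{Clique}}_v$ obtained by sampling $O(\rho d)$ clique edges with $\rho = \Theta(\eps^{-2}\log(n/\delta))$ and with probabilities proportional to products of edge weights (exactly as in \cite{KyngS16}), and recurses on the resulting Laplacian; once all of $F$ is gone, the Laplacian left on $C$ is $\SStil$. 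Since a single exact elimination maps a Laplacian to a Laplacian and $\widetilde{\mathrm{Clique}}_v$ is a non-negative combination of clique edges (hence a Laplacian on $N(v)$), an easy induction gives that $\SStil$ is a Laplacian supported on $C$, which disposes of that part of the statement.

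\emph{From the spectral claim to \cite{KyngS16}.} The algorithm implicitly maintains a Laplacian $\widetilde\LL$ on $V$, namely the sum of the rank-one ``star'' terms split off at the successive eliminations plus the trailing block $\SStil$. Because $F$ is eliminated before $C$, this representation of $\widetilde\LL$ is block lower triangular with trailing block $\SStil$, so $\schurto{\widetilde\LL}{F} = \SStil$; likewise $\schurto{\LL}{F} = \SS$. Now I would invoke the elementary fact that Schur complements preserve spectral approximation: from the variational identity $\xx^{\top}\schurto{\MM}{F}\xx = \min_{\yy}\bigl(\begin{smallmatrix}\yy\\\xx\end{smallmatrix}\bigr)^{\top}\MM\bigl(\begin{smallmatrix}\yy\\\xx\end{smallmatrix}\bigr)$ — valid since $\LL_{FF}$ is a principal submatrix of a connected-graph Laplacian with a vertex removed and hence positive definite — together with the monotonicity of partial minimization, $\widetilde\LL \approx_{\eps} \LL$ implies $\SStil = \schurto{\widetilde\LL}{F} \approx_{\eps} \schurto{\LL}{F} = \SS$. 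So it suffices to establish $\widetilde\LL \approx_{\eps} \LL$ with probability $\ge 1 - \delta$, and this is precisely the content of the martingale argument in \cite{KyngS16}, which I expect to carry over essentially verbatim: the per-step errors $\EE_i = \widetilde{\mathrm{Clique}}_{f_i} - \mathrm{Clique}_{f_i}$ form a matrix martingale difference sequence in the $\LL^{+}$ geometry, each increment has operator norm $O(\eps/\log(n/\delta))$ and small predictable quadratic variation (using $\mathrm{Clique}_{f_i}\preceq\mathrm{Star}_{f_i}$ and that the multiplicity $\rho$ forces each sampled edge to carry only a $1/\rho$ fraction of the relevant ``budget''), and matrix Freedman concentration finishes it — stopping after $\sizeof{F}$ steps rather than $n-1$ only accumulates fewer error terms.

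\emph{Sparsity and running time --- the main obstacle.} Run naively, the partial elimination would leave $\Otil(m\rho) = \Otil(m\eps^{-2}\log(n/\delta))$ edges on $C$ (the fill over a random elimination order is $\Otil(m)$) and cost $\Otil(m\eps^{-2})$, missing both the target sparsity $O(\sizeof{C}\eps^{-2}\log(n/\delta))$ and the target time $\Otil(m + n\eps^{-2})$. The remedy is to interleave the elimination with a nearly-linear-time spectral sparsifier, so that the intermediate Laplacians never exceed $\Otil(n\eps^{-2})$ edges, the costly $\eps$-dependent work is kept off the full $m$-edge graph, and the final Laplacian on $C$ is sparsified down to $O(\sizeof{C}\eps^{-2}\log(n/\delta))$ edges. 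Dividing the $\eps$ error budget over the (constantly or logarithmically many) stages and $\delta$ by a union bound, and composing the resulting spectral approximations --- repeatedly invoking ``Schur complements preserve $\approx$'' to carry an error incurred before an elimination through it --- yields $\SStil \approx_{\eps} \SS$ and a total cost of the claimed shape, with the $\poly(\log(n/\delta))$ and $\operatorname{polyloglog}(n)$ factors accumulating from the stages and from the Freedman bound (a union bound over the $n$ coordinates). I expect the real difficulty to lie exactly here: (i) choosing the sparsification primitive so that it does \emph{not} invoke Johnson--Lindenstrauss --- needed so that the effective-resistance estimator built on top of $\scElim$ retains its ``no JL'' property, which may itself require a primitive recursive through approximate Schur complements --- and (ii) scheduling the eliminations and re-sparsifications, and accounting for how their errors and running times compose, so that the final bounds come out with precisely the stated exponents.
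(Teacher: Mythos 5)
Your overall architecture is the paper's: reduce to an approximate \emph{partial} Cholesky routine obtained from \cite{KyngS16} by eliminating only $F$ in random order, observe that the output is a Laplacian supported on $C$, and transfer the whole-matrix guarantee $\LLtil\approx_{\eps}\LL$ to $\SStil\approx_{\eps}\SS$. For that transfer you use the variational characterization $\xx^{\top}\SS\xx=\min_{\yy}(\yy;\xx)^{\top}\LL(\yy;\xx)$, whereas the paper computes $\xx^{\top}\LL^{+}\xx=\yy^{\top}\SS^{+}\yy$ for $\xx=(\matzero;\yy)$ via blockwise inversion and then argues about null spaces (Equation~\eqref{eq:pinvQFschur} and Fact~\ref{fac:pinvandschur}); these are equivalent and both correct. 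One small caveat in the martingale part: restricting the random choice to $F$ \emph{increases} the per-round variance (the $i$-th vertex is uniform over $\sizeof{F}+1-i$ rather than $n+1-i$ candidates), so "stopping earlier only accumulates fewer error terms" is not by itself the argument; one must check that $\sum_{i\le\sizeof{F}}3/(\rho(\sizeof{F}+1-i))\leq 3\ln(\sizeof{F}+1)/\rho\leq 3\ln n/\rho$, which is what the paper does.

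The genuine gap is in the sparsity/runtime part, which you yourself flag as unresolved. The paper's key device there is \emph{not} interleaved re-sparsification but non-uniform edge splitting driven by leverage-score estimates: $\levEst$ (a Laplacian-solver-plus-JL routine --- note that $\scElim$ itself is allowed to use Johnson--Lindenstrauss; only the separate combinatorial variant $\combScElim$ of Theorem~\ref{thm:combSchurApx} avoids it, so your concern (i) attaches to the wrong theorem) returns overestimates $\tauEst_e\geq\tau_e$ with $\sum_e\tauEst_e\leq 2n$, and each edge is split into $\rho_e=\ceil{\tauEst_e\cdot 12(\eps/2)^{-2}\ln^{2}(3n/\delta)}$ copies. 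This caps every multi-edge's leverage score at $1/(12(\eps/2)^{-2}\ln^{2}(3n/\delta))$ exactly as uniform splitting would, while the total multi-edge count is only $O(m+n\eps^{-2}\ln^{2}(n/\delta))$ instead of $O(m\eps^{-2}\ln^{2}(n/\delta))$; the elimination then costs $O(\log n)$ times that, and a \emph{single} terminal call to $\sparsify$ reduces the output to $O(k\eps^{-2}\log(n/\delta))$ edges. Your proposed substitute --- keeping intermediate Laplacians at $\Otil(n\eps^{-2})$ edges by periodic sparsification --- does not obviously recover the bound: after sparsifying to $\Otil(n\eps^{-2})$ edges, individual edges can still have leverage score close to $1$, so the uniform $\rho$-fold splitting you describe in the overview would produce $\Otil(n\eps^{-4})$ multi-edges, and you would still need per-edge leverage-score information to avoid this. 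So the one idea your plan is missing is precisely the $\tauEst_e$-weighted splitting, which is what makes the $\Otil(m+n\eps^{-2})$ running time and the stated output sparsity come out.
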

% \todo{use the precise statement from Rasmus's section}
The proof of this appears in Section~\ref{sec:schurapx}.

As indicated earlier, the algorithm $\approxS$ is builds on the tools developed in~\cite{KyngS16}.
Roughly speaking, the algorithm in~\cite{KyngS16} produces an $\epsilon$-approximation to a Cholesky decomposition of the Laplacian in 
$\Otil(\frac{m}{\epsilon^2})$ time.
Our algorithm for approximating Schur complements is based on three key modifications to the algorithm from~\cite{KyngS16}:
Firstly, we show that the algorithm can be used to eliminate an arbitrary subset $U \subset V$ of the vertices, giving a approximate partial Cholesky decomposition.
Part of this decomposition is an approximate Schur complement w.r.t. elimination of the set of vertices $U$. 
Secondly, we show that although the spectral approximation quality of this decomposition is measured in terms of the whole Laplacian, in fact it implies a seemingly stronger guarantee on the approximate Schur complement: Its quadratic form resembles the true Schur complement up to a small multiplicative error.
Thirdly, we show that the algorithm from~\cite{KyngS16} can utilize leverage score estimates (constant-factor approximations) to produce an approximation in only $\Otil(m + {n}{\epsilon^{-2}})$ time. 
Additionally, we also sparsify the output to ensure that the final approximation has only $\Otil((n-|U|)\epsilon^{-2})$ edges. The leverage score estimates can be obtained by combining a Laplacian solver with Johnson-Lindenstrauss projection. 
It is worth noting that the Laplacian solver from~\cite{KyngLPSS16} is also based on approximating Schur complements. However, their algorithm can only approximate the Schur complement obtained by eliminating very special subsets of vertices. The above theorem, in contrast, applies to an arbitrary set of vertices.
This algorithm also had a much worse dependence on $\eps^{-1}$, making it unsuitable for our applications where $\eps^{-1}$ is $\Omega(n^{c})$ for some small constant $c$.

% \rk{prev version}
% As indicated earlier, the algorithm $\approxS$ is builds on the tools developed in~\cite{KyngS16}. Roughly speaking, the algorithm in~\cite{KyngS16} produces an $\epsilon$-approximation to the Schur complement with $\Otil(\frac{m}{\epsilon^2})$ edges in $\Otil(\frac{m}{\epsilon^2})$ time. The above theorem rests on two key improvements on the result in~\cite{KyngS16}. Firstly, we show that the algorithm from~\cite{KyngS16} can utilize leverage score estimates (constant-factor approximations) to produce an approximation in only $\Otil(m + {n}{\epsilon^{-2}})$ time. Sparsifying the output ensures that the final approximation has only $\Otil((n-|U|)\epsilon^{-2})$ edges. The leverage score estimates can be obtained by combining a Laplacian solver with Johnson-Lindenstrauss projection. Secondly, we prove that the approximation generated satisfies a stronger guarantee. The error in the approximation produced by the above theorem is in terms of the Schur complement, rather than the original Laplacian as in~\cite{KyngS16}. 
% %
% It is worth noting that the Laplacian solver from~\cite{KyngLPSS16} is also based on approximating Schur complements. However, their algorithm can only approximate the Schur complement obtained by eliminating very special subsets of vertices. The above theorem, in contrast, applies to an arbitrary set of vertices.

\subsection{Computing Effective Resistance}\label{sec:erResults}

Our techniques also enable us to develop a novel algorithm for computing the effective resistances of pairs of vertices. In contrast with prior work, our algorithm does not rely on the Johnson-Lindenstrauss lemma, and it achieves asymptotically faster running times in certain parameter regimes.

\begin{restatable}[]{theorem}{estimateReffAnalysis}
	\label{thm:esimateReff-analysis}
When given a graph $G$, a set $S$ of pairs of vertices, and an error parameter $\epsilon$, the function $\estimatereff(G,S,\epsilon)$ (Algorithm~\ref{alg:estimateReff} in Section~\ref{sec:reff-estimation}) returns $e^{\pm \epsilon}$-multiplicative estimates of the effective resistance of each of the pairs in $S$ in time $\widetilde{O}\left(m + \frac{n+|S|}{\epsilon^2}\right)$ with high probability.
\end{restatable}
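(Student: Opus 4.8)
The plan is to build a recursive algorithm that mirrors the divide-and-conquer structure already used for sampling spanning trees, but now tracks a set $S$ of vertex pairs whose effective resistances we want. At the top level, if the graph is small enough (say $|V| = O(|S| + \text{something})$) we can afford to solve Laplacian systems directly: compute the effective resistance of each pair in $S$ by solving $O(|S|)$ linear systems, or more cleverly by a single Johnson--Lindenstrauss-free computation using the fact that $\er(u,v) = (\ee_u - \ee_v)^\top \LL^\dagger (\ee_u-\ee_v)$. The interesting regime is when $n$ is large compared to $|S|$; there we want to shrink the graph. The key tool is Theorem~\ref{thm:schurApx}: effective resistances between vertices in $C$ are exactly preserved by taking the Schur complement onto $C$, and are preserved up to $e^{\pm\eps}$ multiplicative error if we take an $\eps$-approximate Schur complement via $\scElim$. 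So the recursion is: pick $C$ to contain all vertices appearing in pairs of $S$, plus enough extra vertices that $C$ is, say, half of $V$ (or pick $C$ to be exactly the endpoints of $S$ together with a random half — the precise split is a tuning choice); call $\scElim(\LL, C, \eps', \delta')$ to get $\SStil$ on $C$, which has $\Otil(|C|\eps'^{-2})$ edges; recurse on $(\SStil, S, \eps'')$.

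The main obstacle, and the thing the proof must handle carefully, is the interaction between (a) error accumulation down the recursion and (b) the blow-up in edge count of the approximate Schur complements. Each level multiplies the resistance estimates by an $e^{\pm\eps'}$ factor, so over $L$ levels of recursion we accumulate $e^{\pm L\eps'}$; to land at $e^{\pm\eps}$ overall we need $\eps' \approx \eps/L$, and since $L = O(\log n)$ this only costs polylog factors — good. The edge-count side is more delicate: $\scElim$ produces $\Otil(|C|\eps'^{-2})$ edges, and if $|C|$ stays a constant fraction of the current vertex count while $\eps'$ shrinks, the graph could grow rather than shrink. The fix is to shrink $|C|$ geometrically as we recurse — at each level $C$ need only contain the $O(|S|)$ relevant vertices plus enough padding, and once the vertex count is down near $|S|$ we stop recursing and solve directly. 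One must check that with $|C_i|$ halving each level, the work at level $i$, which is $\Otil(m_i + n_i \eps'^{-2})$ for the $\scElim$ call on a graph with $m_i$ edges and $n_i$ vertices, telescopes to $\Otil(m + (n+|S|)\eps^{-2})$: the first $\scElim$ call costs $\Otil(m + n\eps^{-2})$ (absorbing the $\log$'s and the $\eps' = \eps/\mathrm{polylog}$), and subsequent calls operate on graphs with $\Otil(n_i \eps'^{-2})$ edges where $n_i$ is geometrically decreasing, so the geometric series is dominated by its first term up to the final $\Otil(|S|\eps^{-2})$ term from the base case.

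Concretely the steps are: (1) State the recursive algorithm $\estimatereff(G,S,\eps)$: set $\eps' = \eps/\Theta(\log n)$ and $\delta' = \delta/\mathrm{poly}(n)$; if the vertex set is small, solve directly and return; otherwise choose $C \supseteq \{$endpoints of $S\}$ with $|C|$ a suitable fraction of $|V|$, run $\SStil \leftarrow \scElim(\LL_G, C, \eps',\delta')$, and return $\estimatereff(\SStil, S, \eps - \eps'')$ for the appropriately reduced budget, or more simply just recurse with the same structure and argue the product of errors. (2) Correctness: by the resistance-preservation property of exact Schur complements and the $e^{\pm\eps'}$ guarantee of Theorem~\ref{thm:schurApx} (its quadratic-form-approximation conclusion on $\SStil \approx_{\eps'} \SS$, which transfers to pairwise effective resistances since $\er$ is a quadratic form in $\LL^\dagger$), induct down the recursion: the returned estimate for each pair is within $\prod_i e^{\pm\eps'} = e^{\pm O(L\eps')} = e^{\pm\eps}$ of the true value. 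Take a union bound over the $O(\log n)$ $\scElim$ calls for the failure probability. (3) Running time: sum the $\scElim$ costs across levels using the telescoping argument above, plus the cost of the base case (solving $O(|S|)$ systems on a graph with $\Otil(|S|\,\mathrm{polylog})$ edges, or using a single solve-based effective-resistance routine). The bound comes out to $\Otil(m + (n+|S|)\eps^{-2})$ as claimed. The one genuinely subtle point to get right is ensuring the base case can be reached while $C$ still contains all $2|S|$ relevant vertices — i.e., the recursion depth is $O(\log(n/|S|))$ and stops once $n_i = \Theta(|S|)$, which is exactly when the direct computation costs $\Otil(|S|\eps^{-2})$.
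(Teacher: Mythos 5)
Your overall frame --- shrink the graph by repeated $\eps'$-approximate Schur complementation onto sets containing the endpoints of $S$, with $\eps' = \eps/\Theta(\log n)$ so the per-level $e^{\pm\eps'}$ errors compound to $e^{\pm\eps}$, and a geometric/telescoping bound on the Schur-complement costs --- matches the paper. But there is a genuine gap at the bottom of your recursion. Your recursion is a single chain that keeps all of $S$ together and only shrinks the vertex set; it necessarily bottoms out at a graph on $\Theta(|S|)$ vertices (the endpoints of the pairs) with $\Otil(|S|\eps^{-2})$ edges, on which you still must produce $|S|$ effective resistances. "Solving directly" there means $\Theta(|S|)$ Laplacian solves, costing $\Otil(|S|^2\eps^{-2})$, which blows the $\Otil(|S|\eps^{-2})$ budget; the only known way to do it with $O(\log n)$ solves is Johnson--Lindenstrauss, which you have explicitly forsworn (and which the theorem is designed to avoid). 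There is no "single JL-free computation" that returns all $|S|$ quadratic forms $(\ee_u-\ee_v)^\top\LL^\dagger(\ee_u-\ee_v)$ in near-linear time; asserting the base case costs $\Otil(|S|\eps^{-2})$ is essentially assuming the theorem you are proving.

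The paper's fix, which your proposal is missing, is to recurse on the \emph{pairs} as well as the vertices: after Schur-complementing onto the endpoints $V_0$, it splits $V_0$ into halves $V_1, V_2$ and splits $S$ into $S_1$ (both endpoints in $V_1$), $S_2$ (both in $V_2$), and $S_3$ (crossing), then recurses three ways on $(\schurto{G}{V_1}, S_1)$, $(\schurto{G}{V_2}, S_2)$, and $(G, S_3)$. Every leaf of this recursion is a single pair on a two-vertex graph, where the effective resistance is simply the reciprocal of the weight of the one remaining edge --- no linear system is ever solved. The per-level work after the first is controlled by $\sum_c n'_c \le 2\sum_c s_c \le 2|S|$, giving $\Otil(|S|\eps^{-2})$ per level via an amortized potential argument. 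A secondary point: the paper invokes the purely combinatorial $\combScElim$ (Theorem~\ref{thm:combSchurApx}) rather than $\scElim$, since $\scElim$'s internal leverage-score estimation itself uses Johnson--Lindenstrauss; using $\scElim$ would still give the stated time bound but would forfeit the JL-free guarantee that motivates the result.
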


One can compare this runtime with what can be obtained using (now standard) linear system solving machinery introduced in \cite{SpielmanTengSolver:journal}. Using such machinery, one obtains an algorithm for this same problem with runtime\footnote{The first runtime in the min expression comes from applying JL with the original Laplacian. The second runtime comes from sparsifying the Laplacian first and then applying JL.} $\widetilde{O}\left((\min\left(\frac{m+|S|}{\epsilon^2},m+\frac{n}{\epsilon^4}+\frac{|S|}{\epsilon^2}\right)\right)$. When the number of pairs $|S|$ and the error parameter $\epsilon$ are both small, the runtime of our algorithm is asymptotically smaller than this existing work.

%%% Local Variables:
%%% mode: latex
%%% TeX-master: "main"
%%% End:

\section{Notation}	

\subsection*{Graphs}
We assume we are given a weighted undirected graph $G = (V,E,\ww),$ with the vertices are labelled  $V = \{ 1, 2,...,n \}.$ Let $\AA_G$ be its adjacency matrix. The $(i,j)$'th entry of the adjacency matrix $\AA_G(i,j) = \ww_{i,j}$ is the weight of the edge between the vertices $i$ and $j$.  Let $\DD_G$ is the diagonal matrix consisting of degrees of the vertices, i.e., $\DD_G(i,i) = \text{deg}_G(i).$ The Laplacian matrix is defined as $\LL_G = \DD_G - \AA_G.$ We drop the subscript $G$ when the underlying graph is clear from the discussion.

\begin{definition}[Induced Graph]
Given a graph $G = (V,E)$ and a set of vertices $V_1 \subseteq V,$ we use the notation $G(V_1)$ to mean the induced graph on $V_1.$
\end{definition}

\begin{definition}
Given a set of edges $E$ on vertices $V$, and $V_1, V_2 \subseteq V$, we use the notation $E \cap (V_1,V_2)$ to mean the set of all edges in $E$ with one end point in $V_1$ and the other in $V_2.$ 
\end{definition}

\begin{definition}[Contraction and Deletion]
Given a graph $G = (V,E)$ and a set of edges $E_1 \subset E$, we use the notation $G \backslash E_1$ to denote the graph obtained by deleting the edges in $E_1$ from $G$ and $G/E_1$ to denote the graph obtained by contracting the edges in $E_1$ within $G$ and deleting all the self loops.
\end{definition}

\subsection*{Spanning Trees}

Let $\mathcal{T}_G$ denote the set  of all spanning subtrees of $G.$ We now define a probability distribution on these trees.

\begin{definition}[$\ww$-uniform distribution on trees]
Let $\mathcal{D}_{G}$ be a probability distribution on $\mathcal{T}_G$ such that
$$ \Pr \left( X = T | \; X \sim \mathcal{D}_{G} \right) \propto \Pi_{e \in T} \ww_e.$$
We refer to $\mathcal{D}_{G}$ as the $\ww$-uniform distribution on $\mathcal{T}_G$. When the graph $G$ is unweighted, this corresponds to the uniform distribution on $\mathcal{T}_G.$
\end{definition}

\begin{definition}[Effective Resistance]
	The \textit{effective resistance} of a pair of vertices $u,v \in V_G$ is defined as 
	$$ R_{eff}(u,v) = \bb_{u,v}^T \LL^{\dagger} \bb_{u,v}.$$ where $\bb_{u,v}$ is an all zero vector corresponding to $V_G$, except for entries of 1 at $u$ and $v$
\end{definition}

\begin{definition}[Levarage Score]
  The \textit{statistical leverage score}, which we will abbreviate to \textit{leverage score}, of an edge $e = (u,v) \in E_G$ is defined as 
$$ l_e = \ww_e R_{eff}(u,v).$$
\end{definition}

\begin{fact}[Spanning Tree Marginals]\label{fact:levprob}
The probability $\Pr(e)$ that an edge $e \in E_G$ appears in a tree sampled $\ww$-uniformly randomly from  $\mathcal{T}_G$ is given by 
$$ \Pr(e) = l_e,$$
where $l_e$ is the leverage score of the edge $e.$
\end{fact}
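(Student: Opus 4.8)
The plan is to derive the claim from Kirchhoff's matrix-tree theorem together with Cramer's rule, writing both the spanning-tree marginal $\Pr(e)$ and the effective resistance $R_{eff}(u,v)$ as ratios of the \emph{same} principal minors of the Laplacian $\LL$, and then observing that the ratios coincide. Throughout I use that $G$ is connected (so that the $\ww$-uniform distribution is well-defined and $\bb_{u,v}$ lies in the image of $\LL$).

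First I would introduce the weighted partition function $\mathcal{Z}(H) = \sum_{T \in \mathcal{T}_H} \prod_{f \in T} w_f$, so that, by the definition of the $\ww$-uniform distribution, $\Pr(e)$ equals the ratio of $\sum_{T \ni e} \prod_{f \in T} w_f$ to $\mathcal{Z}(G)$. Next I would invoke the standard bijection between spanning trees of $G$ that contain $e = (u,v)$ and spanning trees of the contracted graph $G/e$: sending $T$ to $T \setminus \{e\}$ multiplies the tree weight by exactly $w_e$ and is a weight-preserving-up-to-$w_e$ bijection, since contracting $e$ does not alter the weight of any other edge. Hence $\sum_{T \ni e}\prod_{f\in T} w_f = w_e\, \mathcal{Z}(G/e)$ and therefore $\Pr(e) = w_e\, \mathcal{Z}(G/e)/\mathcal{Z}(G)$. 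The remaining task is to show $\mathcal{Z}(G/e)/\mathcal{Z}(G) = R_{eff}(u,v)$.

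To do this I would turn everything into determinants. By the matrix-tree theorem, $\mathcal{Z}(G) = \det(\LL_{\hat v})$, where $\LL_{\hat v}$ denotes $\LL$ with row and column $v$ deleted. For the contracted graph I would observe that grounding $G/e$ at the merged vertex produces a reduced Laplacian equal, entry for entry, to the two-vertex minor $\LL_{\widehat{\{u,v\}}}$ (obtained from $\LL$ by deleting rows and columns $u$ and $v$): contracting $e$ leaves the weighted degree of, and the total weight between, all vertices other than $u$ and $v$ unchanged, and the self-loops it creates are deleted and irrelevant away from $\{u,v\}$. So the matrix-tree theorem applied to $G/e$ gives $\mathcal{Z}(G/e) = \det(\LL_{\widehat{\{u,v\}}})$. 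On the resistance side, since $\bb_{u,v}$ is orthogonal to $\vecone$, the quadratic form $\bb_{u,v}^T \LL^{\dagger} \bb_{u,v}$ equals the value obtained by solving $\LL x = \bb_{u,v}$ after grounding at $v$; as the restriction of $\bb_{u,v}$ to the coordinates $\ne v$ is the standard basis vector at $u$, this yields $R_{eff}(u,v) = (\LL_{\hat v}^{-1})_{uu}$, and Cramer's rule gives $(\LL_{\hat v}^{-1})_{uu} = \det(\LL_{\widehat{\{u,v\}}})/\det(\LL_{\hat v})$. Chaining the three identities, $\Pr(e) = w_e\, \det(\LL_{\widehat{\{u,v\}}})/\det(\LL_{\hat v}) = w_e R_{eff}(u,v) = l_e$.

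I expect the only genuinely delicate points to be the two ``literal equality of matrices'' steps: that the grounded Laplacian of $G/e$ is exactly the minor $\LL_{\widehat{\{u,v\}}}$ (so that no separate accounting of contracted or parallel edge weights is needed), and that $\bb_{u,v}^T \LL^{\dagger} \bb_{u,v}$ coincides with the grounded solution $(\LL_{\hat v}^{-1})_{uu}$, which relies on $\bb_{u,v}$ being orthogonal to the kernel of $\LL$. Both are standard, and everything else is routine bookkeeping with the matrix-tree theorem. An alternative route through the transfer-current theorem, or through the correctness of Wilson's / the Aldous--Broder loop-erased walk, would also establish the identity, but the determinant argument above is the most self-contained given what is already set up in the paper.
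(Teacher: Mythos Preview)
Your argument is correct and is the standard matrix-tree/Cramer's rule derivation of this classical identity. Note, however, that the paper does not actually prove Fact~\ref{fact:levprob}: it is stated as a well-known fact without proof, so there is no ``paper's own proof'' to compare against. Your write-up would serve perfectly well as a self-contained justification; the two points you flag as delicate (that the grounded Laplacian of $G/e$ coincides with $\LL_{\widehat{\{u,v\}}}$, and that $\bb_{u,v}^{\top}\LL^{\dagger}\bb_{u,v} = (\LL_{\hat v}^{-1})_{uu}$) are indeed the only places requiring care, and both are handled correctly.
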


\subsection*{Schur Complement}

\begin{definition}[Schur Complement]\label{def:schurcomp}
Let $\MM$ be a block matrix
	
		\begin{align}
		\MM 
		& =
		\left[
		\begin{array}{cc}
		\AA & \BB \\
		\BB^T & \CC
		\end{array}
		\right].
		\end{align}	
We use $\textsc{Schur}(\MM,\AA)$ to denote the Schur complement of $\CC$ onto $\AA$ in $\MM$; ie., \[\textsc{Schur}(\MM,\AA) = \AA - \BB \CC^{-1}\BB^T.\] Equivalently, this is simply the result of running Gaussian elimination of the block $\CC$.
\end{definition}
When the matrix $\MM = \LL$ is a Laplacian of a graph $G=(V,E)$ and $V_1 \subseteq V$ is a set of vertices, we abuse the notaion and use $\textsc{Schur}(\LL,V_1)$ or $\textsc{Schur}(G,V_1)$  to denote the Schur complement of $\LL$ onto the submatrix of $\LL$ corresponding to $V_1$; i.e., onto the submatrix of $\LL$ consisting of all entries whose coordinates $(i,j)$ satisfy $i,j \in V_1$.

\begin{fact}
Let $G = (V,E)$ be a graph and $V = V_1 \cup V_2$ be a partition of the vertices. Then $\textsc{Schur}(G,V_1)$ is Laplacian matrix of a graph on vertices in $V_1$. 
\end{fact}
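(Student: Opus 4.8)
The plan is to reduce the general statement to the single-vertex elimination case and then induct. First I would recall that the Schur complement is associative: eliminating the block $V_2 = V \setminus V_1$ all at once produces the same matrix as eliminating the vertices of $V_2$ one at a time, in any order. This is a standard fact about Gaussian elimination, which here amounts to checking $\textsc{Schur}(\MM, \AA) = \textsc{Schur}(\textsc{Schur}(\MM, \AA \cup \{v\}), \AA)$ for a single coordinate $v \in V_2$; it follows from expanding the block-inverse formula (or from the $LDL^T$/Cholesky viewpoint mentioned after Definition~\ref{def:schurcomp}). Given associativity, it suffices to prove the claim when $|V_2| = 1$, say $V_2 = \{v\}$, and then iterate $|V_2|$ times.

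For the single-vertex case, write $\LL = \DD - \AA$ and let $d_v = \deg_G(v) = \sum_{u \sim v} \ww_{u,v}$ be the diagonal entry at $v$, which is strictly positive since $G$ is connected with $n \geq 2$ (if $G$ is a single isolated vertex the statement is vacuous). Eliminating $v$ replaces, for each pair of neighbors $a, b$ of $v$, the entry $\LL(a,b)$ by $\LL(a,b) - \LL(a,v)\LL(v,b)/d_v = \LL(a,b) + \ww_{a,v}\ww_{b,v}/d_v$ (note $\LL(a,v) = -\ww_{a,v} \le 0$), and similarly adjusts the diagonal entries $\LL(a,a)$ by $- \ww_{a,v}^2/d_v$. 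The key computation is to check that the resulting matrix $\SS = \textsc{Schur}(\LL,V_1)$ is exactly the Laplacian of the graph $G'$ on $V_1$ obtained by (i) keeping all edges of $G$ not incident to $v$, and (ii) adding, for every pair $a \ne b$ of neighbors of $v$, a new edge $\{a,b\}$ of weight $\ww_{a,v}\ww_{b,v}/d_v$ (summing weights if $\{a,b\}$ already existed) — i.e. the well-known "star-to-clique" transformation. Concretely I would verify $\SS \vecone = \veczero$ (rows sum to zero), $\SS(a,b) \le 0$ for $a \ne b$, and that $-\SS(a,b)$ equals the total weight between $a$ and $b$ in $G'$; the off-diagonal claim is immediate from the formula above, and the row-sum claim follows either by direct summation ($\sum_{b \ne a, b \sim v} \ww_{a,v}\ww_{b,v}/d_v = \ww_{a,v}(d_v - \ww_{a,v})/d_v = \ww_{a,v} - \ww_{a,v}^2/d_v$, which cancels against the diagonal correction and the lost edge $\{a,v\}$) or, more cleanly, from the observation that $\LL\vecone = \veczero$ and Schur complementation preserves the all-ones vector in the kernel (since $\BB^T \vecone_{V_1} = -\CC \vecone_{V_2}$ for a Laplacian block structure). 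Since all new edge weights $\ww_{a,v}\ww_{b,v}/d_v$ are positive, $G'$ is a legitimate weighted graph.

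The main obstacle — really the only non-bookkeeping point — is making the associativity of Schur complements precise and noting it holds with the correct bracketing, so that the induction on $|V_2|$ goes through; everything else is the routine star-to-clique verification. I would state associativity as a short lemma (or cite it as standard), then conclude: by induction, after eliminating the vertices of $V_2$ one by one, each intermediate matrix is a graph Laplacian, so in particular the final matrix $\textsc{Schur}(G,V_1)$ is the Laplacian of a graph supported on $V_1$. $\qed$
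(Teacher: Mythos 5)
Your proof is correct and follows essentially the same route as the paper: reduce to eliminating the vertices of $V_2$ one at a time (the paper's order-independence fact for sequential elimination plays the role of your associativity lemma), check that a single elimination is the star-to-clique transformation writing the result as a sum of two Laplacians, and induct. One small sign slip: the off-diagonal entry after eliminating $v$ is $\LL(a,b) - \LL(a,v)\LL(v,b)/d_v = \LL(a,b) - \ww_{a,v}\ww_{b,v}/d_v$ (not $+$), which is exactly what makes the entry more negative and hence corresponds to \emph{adding} a clique edge of weight $\ww_{a,v}\ww_{b,v}/d_v$, as you correctly conclude in the rest of the argument.
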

This means that Schur complement in a graph $G=(V,E)$ onto a set of vertices $V_1$ can be viewed as a graph on $V_1$. Furthermore, we can view this as a multigraph obtained by adding (potentially parallel) edges to $G(V_1)$, the induced graph on $V_1.$ We take this view in this paper: whenever we talk about Schur complements, we separate out the edges of the original graph from the ones created during Schur complement operation. 

We now provide some basic facts about how Schur complements relate to spanning trees. This first lemma says that edge deletions and contractions commute with taking Schur complements.

\begin{fact} 
\label{fact:schurConDel}	
		(Lemma 4.1 of \cite{ColbournDN89}) Given $G$ with any vertex partition $V_1,V_2$, for any edge $e \in E \cap (V_1,V_1)$.
		 \[\textsc{Schur}(G\setminus e,V_1) = \textsc{Schur}(G,V_1)\setminus e \qquad \text{and} \qquad \textsc{Schur}(G/e,V_1) = \textsc{Schur}(G,V_1)/e\] 
\end{fact}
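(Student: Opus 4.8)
The plan is to reduce both identities to the block-matrix definition of the Schur complement, using the structural observation that deleting or contracting an edge $e=(u,v)$ with both endpoints in $V_1$ only affects the $V_1\times V_1$ block of $\LL_G$. Order the coordinates as $V_1$ then $V_2$ and write $\LL_G=\left[\begin{smallmatrix}\AA & \BB\\ \BB^T & \CC\end{smallmatrix}\right]$, so that $\textsc{Schur}(G,V_1)$ is the Laplacian $\AA-\BB\CC^{-1}\BB^T$. Throughout, the block $\CC=\LL_G[V_2,V_2]$ is invertible (this holds, e.g., whenever $G$ is connected and $V_1\neq\emptyset$), and since neither operation changes any entry of $\LL_G$ in a row or column indexed by $V_2$, the block $\CC$ is unchanged by them; hence all Schur complements below are well-defined. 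The only linear-algebra fact needed is the elementary identity that, for real matrices $\RR,\NN$ of compatible sizes with $\NN$ symmetric, the Schur complement onto the top-left block of $\left[\begin{smallmatrix}\RR^T\AA\RR+\NN & \RR^T\BB\\ \BB^T\RR & \CC\end{smallmatrix}\right]$ equals $\RR^T(\AA-\BB\CC^{-1}\BB^T)\RR+\NN$, which is immediate from expanding the definition.

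For the deletion identity, write $\LL_{G\setminus e}=\LL_G-\LL_e$ with $\LL_e$ the Laplacian of the single edge $e$; since $e$ has both endpoints in $V_1$, $\LL_e$ is supported on $V_1$, so $\LL_e=\left[\begin{smallmatrix}\NN & 0\\ 0 & 0\end{smallmatrix}\right]$ for some symmetric $\NN$. Taking $\RR=\II$ in the identity above gives $\textsc{Schur}(G\setminus e,V_1)=(\AA-\BB\CC^{-1}\BB^T)-\NN$. On the other side, $e\in E\cap(V_1,V_1)$ is an original edge of $G$ and therefore an edge of $\textsc{Schur}(G,V_1)$ (which contains the induced graph $G(V_1)$); under the paper's convention of keeping the original edges separate from those created by elimination, deleting $e$ from $\textsc{Schur}(G,V_1)$ subtracts exactly $\NN$ from its Laplacian. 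The two Laplacians coincide, which is the claim.

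For the contraction identity, let $V_1'=V_1\setminus\{v\}$ with $u$ serving as the merged vertex, and let $\PP_1:\rea^{V_1'}\to\rea^{V_1}$ be the lift that is the identity on $V_1\setminus\{v\}$ and sends the $v$-coordinate to the $u$-coordinate. Comparing quadratic forms shows that contracting $e$ in any graph $H$ on $V_1$ produces a graph with Laplacian $\PP_1^T\LL_H\PP_1$: the edge $e$ and any self-loop it creates contribute zero to the quadratic form, and every other edge maps to its image in $H/e$. Since the contraction fixes $V_2$ pointwise, the same argument gives $\LL_{G/e}=\PP^T\LL_G\PP$ with $\PP=\left[\begin{smallmatrix}\PP_1 & 0\\ 0 & \II\end{smallmatrix}\right]$, and so taking $\RR=\PP_1$, $\NN=0$ in the identity above yields $\textsc{Schur}(G/e,V_1')=\PP_1^T(\AA-\BB\CC^{-1}\BB^T)\PP_1=\PP_1^T\,\LL_{\textsc{Schur}(G,V_1)}\,\PP_1$, which is exactly the Laplacian of $\textsc{Schur}(G,V_1)/e$. (Here the ``$V_1$'' written in $\textsc{Schur}(G/e,V_1)$ should be read as its image $V_1'$ under the contraction.)

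There is no substantial obstacle here; the argument is routine. The points that need care are all bookkeeping: tracking the vertex relabeling induced by the contraction so that both sides of that identity live on the same vertex set; checking that $\CC$ stays invertible --- which it does precisely because an edge inside $V_1$ does not touch the $V_2$-block --- so that every Schur complement is defined; and invoking the separation between original edges and elimination-created edges so that $\setminus e$ and $/e$ applied to $\textsc{Schur}(G,V_1)$ unambiguously act on the original copy of $e$. Iterating the single-edge statement then gives the same identities for any $E_1\subseteq E\cap(V_1,V_1)$, the form used later in the paper.
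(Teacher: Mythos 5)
Your proof is correct. There is nothing in the paper to compare it against: the statement is quoted as Lemma~4.1 of \cite{ColbournDN89} and used as a black box, with no proof supplied. Your block-matrix verification is the natural self-contained argument and it checks out. The central identity---that the Schur complement of $\left[\begin{smallmatrix}\RR^T\AA\RR+\NN & \RR^T\BB\\ \BB^T\RR & \CC\end{smallmatrix}\right]$ onto its top-left block equals $\RR^T(\AA-\BB\CC^{-1}\BB^T)\RR+\NN$---is immediate from the definition; deletion is the case $\RR=\II$ with $\NN$ the negated single-edge Laplacian, and contraction is the case $\NN=\matzero$ with $\RR$ the coordinate-identification lift, via $\LL_{H/e}=\PP_1^T\LL_H\PP_1$ (which also silently disposes of the self-loops, since they contribute nothing to the quadratic form). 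You correctly isolate the two points that actually require care: the $V_2\times V_2$ block is untouched by either operation, so every Schur complement in sight stays well-defined even if $G\setminus e$ disconnects; and the paper's explicit convention of keeping original edges separate from elimination-created parallel edges is what makes ``$\setminus e$'' and ``$/e$'' applied to $\textsc{Schur}(G,V_1)$ unambiguous, which your deletion step uses when asserting that removing $e$ subtracts exactly $\NN$.
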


\begin{fact}
\label{fact:schurLeverage}
	Given $G$ with any vertex partition $V_1,V_2$, for any edge $e \in E \cap (V_1,V_1)$, the leverage score of $e$  in $G$ is same as that in $\textsc{Schur}(G,V_1)$.
	
\end{fact}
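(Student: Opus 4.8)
The plan is to reduce the claim to two facts: that the effective resistance of a vertex pair $u,v$ is a property of the Laplacian's pseudoinverse applied to $\bb_{u,v}$, and that Schur complements preserve exactly this quantity for vertex pairs that survive the elimination. Concretely, let $\LL$ be the Laplacian of $G$, let $\SS = \textsc{Schur}(\LL, V_1)$, and let $e = (u,v)$ with $u,v \in V_1$ and weight $\ww_e$. Since the leverage score is $l_e = \ww_e R_{eff}(u,v) = \ww_e \, \bb_{u,v}^T \LL^\dagger \bb_{u,v}$, and $\ww_e$ is unchanged by the Schur complement operation (the original edge $e$ lies within $V_1$ and, under the convention stated just before Fact~\ref{fact:schurConDel}, is carried over unmodified), it suffices to show $\bb_{u,v}^T \LL^\dagger \bb_{u,v} = \bb_{u,v}^T \SS^\dagger \bb_{u,v}$, where on the right $\bb_{u,v}$ is the indicator vector restricted to coordinates $V_1$.

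The key step is the energy-minimization characterization of effective resistance: $\bb_{u,v}^T \LL^\dagger \bb_{u,v} = \min\{ \xx^T \LL \xx : \xx(u) - \xx(v) = 1\}$ (equivalently, one can fix $\xx$ on $\{u,v\}$ and minimize over the rest). I would write $\LL$ in block form with blocks indexed by $V_1$ and $V_2 = F$, and recall the standard fact that partial minimization of the quadratic form $\xx^T \LL \xx$ over the $V_2$-coordinates yields exactly $\xx_{V_1}^T \SS \xx_{V_1}$; that is, $\min_{\xx_{V_2}} \xx^T \LL \xx = \xx_{V_1}^T \SS \xx_{V_1}$ for any fixed $\xx_{V_1}$. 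This is immediate from the definition $\SS = \AA - \BB\CC^{-1}\BB^T$ by completing the square (the minimizer in $\xx_{V_2}$ is $-\CC^{-1}\BB^T \xx_{V_1}$; here I should note $\CC$ is the principal submatrix on $F$, which is invertible since $G$ is connected and $F \neq V$). Combining: since $u,v \in V_1$, the constraint $\xx(u) - \xx(v) = 1$ only involves $V_1$-coordinates, so
\[
\bb_{u,v}^T \LL^\dagger \bb_{u,v} = \min_{\xx(u)-\xx(v)=1} \xx^T\LL\xx = \min_{\xx_{V_1}:\,\xx(u)-\xx(v)=1}\ \min_{\xx_{V_2}} \xx^T\LL\xx = \min_{\xx_{V_1}:\,\xx(u)-\xx(v)=1} \xx_{V_1}^T \SS \xx_{V_1} = \bb_{u,v}^T \SS^\dagger \bb_{u,v},
\]
which is the desired equality of effective resistances, and hence of leverage scores.

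I expect the main (minor) obstacle to be bookkeeping around the pseudoinverse and the kernel: one must be careful that the energy-minimization identity $\bb_{u,v}^T \LL^\dagger \bb_{u,v} = \min\{\xx^T\LL\xx : \xx(u)-\xx(v)=1\}$ holds because $\bb_{u,v} \perp \mathbf{1}$ (so it lies in the range of $\LL$), and similarly that $\SS$ is a connected-graph Laplacian on $V_1$ (using Fact that $\textsc{Schur}(G,V_1)$ is a Laplacian, plus connectivity), so the same identity applies on the $\SS$ side. An alternative route that avoids minimization altogether is the block-inverse formula: $(\LL^\dagger$ restricted appropriately$)$ relates to $\SS^\dagger$ via the identity that the $V_1$-$V_1$ block of a suitably grounded inverse of $\LL$ equals $\SS^{-1}$ on the grounded space; but the variational argument is cleaner and I would present that. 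No deep ingredient is needed beyond what is already standard.
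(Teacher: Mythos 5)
Your argument reaches the right conclusion by a genuinely different route from the paper. The paper's proof is a one-line combinatorial reduction: by Fact~\ref{fact:schurConDel} contraction commutes with taking Schur complements, and since Gaussian elimination preserves determinants, Kirchhoff's matrix-tree theorem gives that the ratio of weighted spanning-tree counts $\ww_e\,T(G/e)/T(G)$ --- which is exactly the leverage score by Fact~\ref{fact:levprob} --- is unchanged by the elimination. Your proof instead works entirely with quadratic forms: you identify the Schur complement as the result of partially minimizing $\xx^{\top}\LL\xx$ over the eliminated coordinates, conclude that effective resistance between surviving vertices is preserved, and note the edge weight is carried over unchanged under the paper's multigraph convention. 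This is essentially a variational proof of the paper's later Fact~\ref{fac:pinvandschur} (which the paper establishes via the block factorization and Equation~\eqref{eq:pinvQFschur}), so your route is self-contained and arguably more robust --- it is exactly the mechanism the paper relies on for approximate Schur complements --- whereas the paper's determinant argument is shorter but leans on the matrix-tree theorem.

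One genuine (though easily repaired) error: the identity you state, $\bb_{u,v}^{\top}\LL^{\dagger}\bb_{u,v} = \min\{\xx^{\top}\LL\xx : \xx(u)-\xx(v)=1\}$, is false --- the right-hand side is the effective \emph{conductance} $1/R_{eff}(u,v)$, not the resistance (check a single edge of weight $w$: the left side is $1/w$, the right side is $w$). Your chain of equalities therefore has incorrect first and last links, but the middle links correctly show that the two minima (the conductances in $G$ and in $\textsc{Schur}(G,V_1)$) coincide, and since both endpoints are off by the same reciprocal, the conclusion $R_{eff}^{G}(u,v)=R_{eff}^{\SS}(u,v)$ survives. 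To make the writeup correct, either take reciprocals throughout, or use the dual characterization $R_{eff}(u,v)=\sup_{\xx}(\bb_{u,v}^{\top}\xx)^2/\xx^{\top}\LL\xx$, observing that the numerator depends only on $\xx_{V_1}$ while $\min_{\xx_{V_2}}\xx^{\top}\LL\xx=\xx_{V_1}^{\top}\SS\xx_{V_1}$.
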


\begin{proof}
	This follows immediately from Fact~\ref{fact:schurConDel}, Kirchhoff's matrix-tree theorem \cite{Kirchhoff47}, and the fact that Gaussian elimination preserves determinant.
	
\end{proof}

\subsection*{Spectral Approximation}

\begin{definition}
	Given two graphs $G,H$ on identical vertex sets, and respective Laplacians $L_G$ and $L_H$. We say $G \approx_{\epsilon} H$ if 
	$$\exp(-\epsilon) L_H \preceq  L_G \preceq  \exp(\epsilon)  L_H.$$
\end{definition}

\begin{definition}[Approximate Schur Complement]\label{def:appschurComp}
	Given a graph $G=(V,E)$ and vertex set $U \subset V$, let $S_U$ be the Laplacian of $\textsc{Schur}(G,V\setminus U) - G(V\setminus U)$; ie., the set of edges added to the induced subgraph $G(V\setminus U)$ by the Schur complement operation. 
	We call a matrix $\widetilde{S}_{U}$ an $\epsilon$-\textit{approximate Schur complement} if it satisfies 
	$$  \widetilde{S}_U \approx_{\epsilon} S_U.$$
	%that is a spectral $\epsilon$-approximation of $S$, i.e. their quadratic forms are the same up to a multiplicative factor 
	Furthermore,  $\widetilde{S}_{U}$ is a Laplacian.
\end{definition}

\section{Algorithm for Sampling Spanning Trees}\label{sec:algorithm}
%\subsection{Simple Approach}

\newcommand\Gtil{\widetilde{G}}

It is well known that for any edge of a graph, the probability of that edge appearing in a random spanning tree is equal to it's leverage score. We can iteratively apply this fact to sample a $\ww$-uniform random tree. We can consider the edges in an arbitrary sequential order, say $e_1,...,e_m \in E$, and make decisions on whether they belong to tree. Having decided for edges $e_1,...,e_{i}$, one computes the probability $p_{i+1}$,  conditional on the previous decisions, that edge $e_{i+1}$  belongs to the tree. Edge $e_{i+1}$ is then added  to the tree with probability  $p_{i+1}.$ 

To estimate the probability that edge $e_{i+1}$ belongs to the tree conditional on the decisions made on $e_1,...,e_i$, we can use Fact~\ref{fact:levprob}. Let $E_{T} \subset \{e_1,...,e_i\} $ be the set of edges that were included in the tree, and $F_{T} \subset \{e_1,...,e_i\} $ the subset of edges that were not included. Then, $p_{i+1}$ is equal to the leverage score of  edge $e_{i+1}$ in the graph $G^{(i+1)} := \left( G \backslash F_{T} \right)/E_T$ obtained by deleting edges $E_T$ from $G$ and then contracting edges $E_{T^c}.$   In other words, we get $G^{(i+1)}$ from $G^{(i)}$ by either deleting the edge $e_i$  or contracting it, depending on if $e_i$ was not added to the tree or added to the tree, respectively. Note that as we move along the sequence, some of the original edges may no longer exist in the updated graph due to edge contractions. In that case, we just skip the edge and move to the next one.
 
 Computing leverage score of an edge, with $\epsilon$ multiplicative error, requires $\Otil(m\log{1/\epsilon})$ runtime. Since we potentially have to compute leverage score of every edge, this immediately gives a total runtime of $\Otil(m^2)$.
 
  Our algorithm will similarly make decisions on edges in a sequential order. Where it differs from the above algorithm is the graph we use to compute the leverage score of the edge. Instead of computing the leverage score of an edge in the original graph updated with appropriate contractions and deletions, we deal with potentially much smaller graphs containing the edge such that the effective resistance of the edge in the smaller graph is approximately same as in the original graph. In the next section, we describe the sampling procedure  that we use to sample from the true distribution, when we have access to a cheap but approximate routine to compute the sampling probability.
  
\subsection{Structure of the Recursion}
We now describe the recursive structure of the algorithm given in Algorithm~\ref{alg:mainAlgo}. The structure of the recursion is same as in \cite{HarveyX16}. Let the input graph be $G = (V_G,E_G)$. Suppose at some stage of the algorithm, we have a graph $\widetilde{G}. $ The task is to make decisions on edges in $E_G \cap E_{\Gtil}.$ We initially divide the vertex set into two equal sized sets $V_{\Gtil} = V_1 \cup V_2.$ Recursively, we first make decisions on edges in $\Gtil(V_1) \cap E_G$, then make decisions on edges in $\Gtil(V_2) \cap E_G$ and finally make decisions on the remaining edges. To make decisions on $\Gtil(V_1) \cap E_G,$ we use the fact that the effective resistance of edges are preserved under Schur complement. We work with the graph $G_1 = \approxS(\Gtil,V_1,\epsilon)$ and recursively make decisions on edges in $E_G \cap G(V_1).$ Having recursively made decisions on edges in $E_G \cap \Gtil(V_1)$, let $E_T$ be the set of tree edges from this set. We now need to update the graph $\Gtil$ by contracting edges in $E_T$ and deleting all the edges in $E_T^c \cap \Gtil(V_1) \cap E_G.$ Then we do the same for the edges in $E_G \cap \Gtil(V_2).$ 

Finally, we treat the edges  $E_G \cap (V_1,V_2)$ that cross $V_1,V_2$ in a slightly different way, and is handled by the subroutine $\across$ in the algorithm. If we just consider the edges in $E_G$, this is trivially a bipartite graph. This property is maintained in all the recursive calls by the routine $\across.$ The routine $\across$ works by dividing $V_1,V_2$ both into two equal sized sets $V_1 = L_1 \cup L_2$ and $V_2 = R_1 \cup R_2$ and making four recursive calls, one each  for edges in $E_G \cap (L_i,R_j), i=1,2; j=1,2.$ To make decisions on edges in $E_G \cap (L_i,R_j),$ it recursively calls $\across$ on the graph $G_{ij} = \approxS(\Gtil, (L_i, R_j)^c, \epsilon)$ obtained by computing approximate Schur complement on to vertices in $(L_i,R_j)$ of vertices outside it. 

\subsubsection{Exact Schur Complement and $O(n^{\omega})$ Time Algorithm}
Here we  note how we can get a $O(n^{\omega})$ algorithm. Note that this is very similar to the algorithm and analysis in \cite{HarveyX16}. If in $\approxS$ calls, we set $\epsilon = 0$, i.e., we compute exact Schur complements, then we have a $O(n^{\omega})$ algorithm. Whenever we make a decision on an edge by instantiating $\sample(e)$, we just have to compute the leverage score $l_e$ of the edge $e$ in a constant sized graph. This can be done in constant time and since we do exact Schur complements, $l_e = l_e(G).$ We can therefore use this to decide if $e$ belongs to the tree and then update the graph by either contracting the edge or deleting it depending on if it is included or excluded in the tree. In a graph with $n_1$ vertices, it takes $O(n_1^{\omega})$ time to compute the Schur complement. Let $T(n)$ be the time taken by $\exact$ on a graph of size $n$ and $B(n)$ be the time taken by $\across$ when called on a graph of size $n$. We then have the following recursion
\begin{align*}
T(n) &= 2 T(n/2) + B(n) + O(n^{\omega}) \\
B(n) &= 4 B(n/2) + O(n^{\omega}).
\end{align*}
We therefore have $T(n) = O(n^{\omega}).$

\subsubsection{Approximate Schur Complement and Expected $\Otil(n^{4/3}m^{1/2}+n^{2})$ Time Algorithm}
We speed up $O(n^{\omega})$ algorithm by computing approximate Schur complements faster.  Having access only to approximate Schur complements,  which preserves leverage score only approximately, introduces an issue with computing sampling probability. It is a-priori not clear how to make decisions on edges when we preserve leverage scores only approximately during the recursive calls.  The key idea here is as follows. Suppose we want to decide if a particular edge $e$ belongs to the tree. Tracing the recursion tree produced by Algorithm~\ref{alg:mainAlgo}, we see that we have a sequence of graphs $G ,G_1,G_2,...,G_k$ all containing the edge $e,$ starting from the original input graph $G$ all the way down to $G_k$ which has a constant number of vertices. We also have $V(G_i) \subset V(G_{i-1})$ for all $k \geq i \geq 1,$ all of them being subsets of $V(G).$  

Let $n = |V(G)|, m = |E_G|$ be the number of vertices and edges in the input graph, 
When setting the error parameters, we choose $\epsilon$ and some threshold values in ways that 
depend on whether $m \leq n^{4/3}$ holds.
In the case $m > n^{4/3}$, 
we define $\epsilon$ in terms of the level $i$ as
\begin{equation}
\label{eq:errordense}
\epsilon(i) = 2^{i/2} n^{-1/6} m^{-1/4} \log^{-2} n 
.
\end{equation}
In the case $m \leq n^{4/3}$, 
we define $\epsilon$ in terms of the level $i$ as
\begin{equation}
\label{eq:errorsparse}
\epsilon(i) = 2^{i/2} n^{-1/2} \log^{-2} n 
.
\end{equation}
% \[
% \label{eq:error}
% \epsilon(i) = \left \{ \begin{array}{cc}
% \left( \frac{m}{n} \right)^{-2/3} & \text{ for } i \in [0, t_1] \\
% 4^{i - t_1}\left( \frac{m}{n} \right)^{-2/3}& \text{ for } i \in [t_1, t_1 + t_2 ] \\
% \log^{-2} n  & \text{otherwise}
% \end{array}
% \right .
% \] 
%
The threshold value is $t_1$ is such that $2^{2 t_1} = \frac{n^2}{m}$.

Our $\epsilon(\cdot)$ function will ensures for all $i,$  $l_e(G) \in [(1- \epsilon_i)l_e(G_i), (1 + \epsilon_i)l_e(G_i) ]$ for an appropriate $\epsilon_i.$ We sample a uniform random number $r \in [0,1],$ and initially compute $l_e(G_k).$   If $r$ lies outside the interval $ [(1- \epsilon_i)l_e(G_k), (1 + \epsilon_i)l_e(G_k) ],$ then we can make a decision on the edge $e$. Otherwise, we estimate $l_e(G)$ to a higher accuracy by computing $l_e(G_{k-1})$. We continue this way, and if $r$ lies inside the interval  $ [(1- \epsilon_i)l_e(G_i), (1 + \epsilon_i)l_e(G_i) ]$ for every $i$, then we compute $l_e(G)$ in the input graph $G.$ In the next section we describe $\sample$ in more detail.

At this point, we find it important to mention that the spectral error guarantees from the $\approxS$ subroutine only hold with probability $\geq 1 - O(\delta)$. The explanation of the $\sample$ subroutine above relied on these spectral guarantees, and the error in our algorithm for generating random spanning trees will be entirely due to situations in which the sparsification routine does not give a spectrally similar Schur complement. For the time being we will work under the following assumption and later use the fact that it is true w.h.p. to bound the error of our algorithm.

\begin{assumption}\label{assumption}
	Every call to $\approxS$ with error parameter $\epsilon$
        always computes an $\epsilon$-approximate Schur Complement.
\end{assumption}

\begin{algorithm}								
\caption{$\textsc{GenerateSpanningTree}(\widetilde{G}=(E_{\widetilde{G}},\widetilde{V})):$ Recurse using Schur Complement}
\label{alg:mainAlgo}
\SetAlgoVlined
%\SetKwProg{myproc}{Procedure}{}{}

\KwIn{Graph $\widetilde{G}$. Let $E_G,$ a global variable, denote the edges in the original (input) graph $G$.}
\KwOut{$E_T$ is the set of edges in the sampled tree.}						

$E_T \leftarrow \exact(G)$ \;
 \KwRet {$E_T$}\;
 
\textbf{Procedure }{$\exact(\widetilde{G})$}{

Set $E_T \leftarrow \{ \}$ \;
 \eIf {$| \widetilde{V}| = 1  $} { 
%\begin{enumerate}
%\item $E'_G \leftarrow E'_G \backslash \{e \}.$ 
%Compute the leverage score $l_e$ of the only edge in $E_G \cap E_{\widetilde{G}} = \{ e \} .$   \;
%w.p. $l_e,$ $E_T \leftarrow E_T \cup \{e \}$  \;%otherwise  $E_T \leftarrow E_T $ and delete $e$ from   $\widetilde{G}.$
 \KwRet {}\;
%\end{enumerate}
}
 {
%\begin{enumerate}
 Divide $V$ into equal sets  $V = V_1 \cup V_2 .$ \;
\For{$i=1,2$}{ 
Compute $G_i = \approxS(\widetilde{G},V_{i}, \epsilon( \text{level}))$
(see Equations~\eqref{eq:errorsparse} and \eqref{eq:errordense})  \;
 $E_T \leftarrow E_T \cup \exact(G_i)$ \;
Update $\widetilde{G}$ by deleting edges in $ \widetilde{G}(V_i) \cap E_T^c$ and contracting edges in $\widetilde{G}(V_i) \cap E_T.$ (Note the convention $E_T^c := E_G \backslash E_T $ ) \;
% Compute $G_2 = \textsc{Schur}(\widetilde{G},L, \epsilon)$ \;
% $E_T \leftarrow \exact(G_2, E_T)$ \;
%Update $\widetilde{G}$ by deleting edges in $\widetilde{G}(V_2) \backslash E_T$ and contracting edges in $\widetilde{G}(R) \cap E_T.$ \;
}
$E_T \leftarrow E_T \cup \across(\widetilde{G},(V_1,V_2))$ \; 
\KwRet{$E_T$} \;
%\end{enumerate}
}
}

\textbf{Procedure }{$\across(\widetilde{G},(L,R))$}{
\If{$|L| = |R| = 1$}{

$E_T = \sample(\widetilde{G}, (L, R) \cap E_G )$ \;
\KwRet{$E_T$} \;
}

 Divide $L,R$ into two equal sized sets: $L = L_1 \cup L_2,$ $R = R_1 \cup R_2.$ \;
\For{$i=1,2$}{
\For{$j=1,2$}{
$\widetilde{G}_{ij} \leftarrow \approxS(\widetilde{G},(L_i \cup R_j),
\epsilon(\text{level}))$ (see Equations~\eqref{eq:errorsparse} and \eqref{eq:errordense}) \;
$E_T \leftarrow E_T \cup \across(\widetilde{G}_{ij},(L_i,R_j))$ \;
Update $\widetilde{G}$ by contracting edges $E_T$ and deleting edges in $E_T^c \cap (L_i,R_j)$ \;

}

}
\KwRet{$E_T$} \;
  }

\end{algorithm}

  \subsubsection*{Sampling Scheme: $\sample$}
 
In this section we describe the routine $\sample(e)$ for an edge $e \in G$ in the input graph. By keeping track of the recursion tree, we have $G_0, G_1, ...,G_k$ and $e \in G_i$ for all $i$.

\begin{lemma}\label{lem:approx_leverage}
	For graph $G$ and $G_i$, the respective conditional leverage scores $l_e$ and $l_e^{(i)}$ for edge $e$ are such that $l_e \in [(1 - 2 \epsilon(i) \log n)l_e^{(i)},(1 + 2 \epsilon(i) \log n)l_e^{(i)} ]$
	
\end{lemma}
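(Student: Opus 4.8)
The plan is to chain together the spectral approximation guarantees of the $k$ approximate Schur complement calls that produced the sequence $G_0 = G, G_1, \dots, G_k$ containing the edge $e$, and then translate the resulting spectral error on the whole chain into a multiplicative error on the effective resistance (hence leverage score) of the single edge $e$. The key facts I would invoke are: (i) Fact~\ref{fact:schurLeverage}, which says that passing to an exact Schur complement onto a vertex set still containing both endpoints of $e$ preserves $l_e$ exactly; (ii) Definition~\ref{def:appschurComp} together with Assumption~\ref{assumption}, which guarantees that each call to $\approxS$ with parameter $\epsilon(i)$ at level $i$ outputs a Laplacian $\widetilde S$ with $\widetilde S \approx_{\epsilon(i)} S$ for the true Schur-complement-added part $S$; and (iii) the elementary fact that $\approx_\epsilon$ composes additively in the exponent and that for Laplacians $G \approx_\epsilon H$ implies $\bb_{u,v}^T \LL_G^\dagger \bb_{u,v}$ and $\bb_{u,v}^T \LL_H^\dagger \bb_{u,v}$ agree up to a factor $e^{\pm\epsilon}$.

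The steps, in order, would be: First, fix the edge $e=(u,v)$ and the chain $G_0,\dots,G_k$; at level $i$ the graph $G_{i}$ is obtained from $G_{i-1}$ by one $\approxS$ call (plus contractions/deletions of edges disjoint from $e$, which by Fact~\ref{fact:schurConDel} commute with Schur complements and do not affect $l_e$ — I'd note this but not belabor it). Second, write $G_i$ as an $\epsilon(i)$-spectral approximation to the \emph{exact} Schur complement $\mathrm{Schur}(G_{i-1}, V(G_i))$, so that combining the induced-subgraph part (carried over exactly) with the $\epsilon(i)$-approximate added part yields $\LL_{G_i} \approx_{\epsilon(i)} \LL_{\mathrm{Schur}(G_{i-1},V(G_i))}$ as quadratic forms on the shared vertex set $V(G_i)$. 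Third, by Fact~\ref{fact:schurLeverage} the exact Schur complement preserves $R_{\mathrm{eff}}(u,v)$, so inductively $R_{\mathrm{eff}}^{(G_i)}(u,v) \in e^{\pm(\epsilon(1)+\cdots+\epsilon(i))} R_{\mathrm{eff}}^{(G)}(u,v)$; since the edge weight $\ww_e$ is unchanged (it is an original edge, never touched by the Schur complement which only adds new edges), the same bound holds for $l_e^{(i)}$ versus $l_e$. Fourth, bound the accumulated exponent: by the geometric form of $\epsilon(\cdot)$ in Equations~\eqref{eq:errordense}–\eqref{eq:errorsparse} (each $\epsilon(i) = 2^{i/2}\cdot(\text{const})$), the sum $\sum_{j=1}^{i}\epsilon(j)$ is dominated by a geometric series with ratio $\sqrt 2$, hence $\sum_{j=1}^i \epsilon(j) \le \frac{\sqrt2}{\sqrt2 - 1}\,\epsilon(i) < 2\log n \cdot (\text{something})$ — more carefully, $\sum_{j\le i}\epsilon(j) \le \epsilon(i)\sum_{t\ge 0} 2^{-t/2} \le 4\epsilon(i)$, and then converting $e^{\pm 4\epsilon(i)}$ to a multiplicative $(1 \pm 2\epsilon(i)\log n)$ window uses $\epsilon(i)$ being polynomially small and the crude bound $e^x \le 1 + 2x$ for small $x$, with plenty of slack absorbed into the $\log n$ factor. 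Finally, assemble these to get $l_e \in [(1 - 2\epsilon(i)\log n)\,l_e^{(i)},\, (1 + 2\epsilon(i)\log n)\,l_e^{(i)}]$.

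The main obstacle I anticipate is the bookkeeping around \emph{where} the spectral guarantee lives versus where we need the resistance guarantee: $\approxS$ guarantees closeness of the \emph{added edges} $S_U$ (Definition~\ref{def:appschurComp}), not of the full Laplacian of $G_i$, so I need to argue that adding back the untouched induced subgraph $G(V(G_i) \setminus \text{eliminated})$ preserves the $\approx_{\epsilon(i)}$ relation — this is true because $A \approx_\epsilon B \Rightarrow A + C \approx_\epsilon B + C$ for PSD $C$, but it must be stated cleanly. A secondary subtlety is making sure the depth $k$ of the chain does not enter the bound: this is exactly what the geometric choice of $\epsilon(\cdot)$ buys us — the errors at shallow levels are geometrically smaller than $\epsilon(i)$, so the sum telescopes to $O(\epsilon(i))$ regardless of $k$ — and I'd want to state the ratio-$\sqrt 2$ geometric sum explicitly so the constant $2$ (rather than some larger constant) in front of $\epsilon(i)\log n$ is visibly justified, with the $\log n$ serving as a generous buffer. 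The contraction/deletion commutation (Fact~\ref{fact:schurConDel}) and the invariance of $\ww_e$ are routine and I would dispatch them in a sentence each.
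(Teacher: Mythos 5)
Your proposal is correct and follows essentially the same route as the paper's proof: chain the $\approx_{\epsilon(j)}$ guarantees of Theorem~\ref{thm:schurApx} down the recursion, use Lemma~\ref{lem:approx_cond_lev} (the additive decomposition $\ulap_{V_1}+\SStil_{V_2}$ plus preservation of spectral approximation under contraction/deletion) to handle the conditioning, and convert the accumulated exponent into the stated multiplicative window. The only difference is cosmetic: you bound $\sum_{j\le i}\epsilon(j)$ by the geometric series $\le 4\epsilon(i)$, whereas the paper uses the cruder bound $k\,\epsilon(i)\le\epsilon(i)\log n$ together with $\epsilon(i)\le 1/\log^2 n$; both comfortably yield $(1\pm 2\epsilon(i)\log n)$.
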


This will now allow us to set $\epsilon_i = 2 \epsilon(i) \log{n}$. We will delay the proof of Lemma~\ref{lem:approx_leverage} until later in this section in favor of first giving the sampling procedure. The sampling procedure is as follows. We generate a uniform random number in $r \in [0,1].$ We want to sample edge $e$ if $r \leq l_e(G).$ Instead, we use $l_e(G_k)$ as a proxy. Note that using fast Laplacian solvers, we can in $\Otil(\text{no. of edges})$ time compute leverage score of an edge upto a factor of $1 + 1/\text{poly}(n).$   Since $l_e(G) \in [(1-\epsilon_k )l_e(G_k), (1+\epsilon_k)l_e(G_k)],$ we include the edge in the tree if $r \leq  (1-\epsilon_k)l_e(G_k),$ otherwise if $r > (1+\epsilon_k)l_e(G_k) ,$ we don't include it in the tree. If $r \in  [(1-\epsilon_k)l_e(G_k), (1+\epsilon_k)l_e(G_k)],$ which happens with probability $2\epsilon_k l_e(G_k),$ we get a better estimate of $l_e(G)$ by computing $l_e(G_{k-1}).$ We can make a decision as long as $r \notin  [(1-\epsilon_{k-1})l_e(G_{k-1}), (1+\epsilon_{k-1})l_e(G_{k-1})] ,$ otherwise, we consider the bigger graph $G_{k-2}$. In general, if $r \notin  [(1-\epsilon_i)l_e(G_i), (1+\epsilon_i)l_e(G_i)],$  then we can make a decision on $e,$ otherwise we get a better approximation of $l_e(G)$ by computing $l_e(G_{i-1}).$ If we can't make a decision in any of the $k$ steps, which happens if $r \in  [(1-\epsilon_i)l_e(G_i), (1+\epsilon_i)l_e(G_i)]$ for all $i$, then we compute the leverage score of $e$ in $G$ updated with edge deletions and contractions resulting from decisions made on all the edges that were considered before $e$. 

Note that when we fail to get a good estimate at level $i$ for some $i \geq t_1$, we always compute the next estimate with respect to the original graph.

% Recall that $\epsilon(i)$ is a function that is only changing for $i \in [t_1, t_1 + t_2]$, and it follows that the process above should only look a graph one level up if the estimate is better. Accordingly, we will only iterate this process from $i = t_1 + t_2$ to $i = t_1$.

Finally, note that in the final step, we can compute $l_e(G)$  up to an approximation factor of $1 + \rho$ in $\Otil(m \log 1/\rho).$ We can therefore start with $\delta_0 = 1/n$ and if $r \in [(1-\rho)\widetilde{l}_e(G), (1+\rho)\widetilde{l}_e(G)] ,$ we set $\rho = \rho_0/2$ and repeat.  This terminates in $\Otil(m)$ expected (over randomness in $r$) time. 

For our algorithm, assume that we have an efficient data structure that gives access to each graph $G_0,...G_k$ in which $e$ appears.

% \begin{algorithm}								
% 	\caption{$\sample(e):$ Sample an edge using conditional leverage score}
% 	\label{alg:edgeSample}
% 	\SetAlgoVlined
	
% 	%\begin{algbox}

% 	\KwIn{An edge $e$ and access to graphs $G_0,...G_k$ in which $e$ appears }
% 	\KwOut{Returns $\{ e \}$ if edge belongs to the tree, and $\{ \}$ if  it doesn't }						
% 	%\begin{enumerate}
% 	%\item Let $e(\widehat{E},V_1)$ be the edges $(u,v) \in \widehat{E}$ such that $u,v \in V_1$
% 	% Set $E \leftarrow \{ \}$ \;
% 	Generate a uniform random number $r$ in $[0,1]$ \;
	
% 	$l_e \leftarrow \textsc{EstimateLeverageScore}(e)$
	
% 	\eIf{$r < l_e $}{
% 	\KwRet{$\{e \}$}
% 	}{
% 	\KwRet{$\{ \}$}
% 	}

% 	%\begin{enumerate}

% 	%	\KwRet{$E$} \;
% 		%\end{enumerate}
	
% 	%\SetKwProg{myproc}{Procedure}{}{}
% 	\textbf{Procedure }{$\textsc{EstimateLeverageScore}(e)$}{
% 	Compute $l_e^{(k)}$ to error $1/n$ \;
% 	\If{$\textsc{isGood}(l_e^{(k)}, \epsilon)$}{
% 	\KwRet{$l_e^{(k)}$}
% 	}
	
% 	\For{$i=t_1 + t_2$ to $t_1$}{ 
% 			Compute $l$, an estimate for $l_e^{(i)}$ with error $1/n$\;
% 			\If{$\textsc{isGood}(l, \epsilon(i))$}{
% 			\KwRet{$l$}
% 		}
% 		}
		
% 		\For{$i=0$ to $\infty$}{ 
% 			Compute $l$, an estimate for $l_e^{(0)}$ with error $1/2^in$\;
% 			\If{$\textsc{isGood}(l, 1/2^in)$}{
% 			\KwRet{$l$} }
% 		}	

% 	}
% 	\textbf{Procedure }{$\textsc{isGood}(l_e, \epsilon)$}{
% 		\If{$r < (1 - \epsilon)l_e \text{ or } r > (1 + \epsilon)l_e$}{

% 			\KwRet{True}			
% %			$E = e$ \;
% %			\KwRet{E} \;
% 		}
% 		\KwRet{False}
% %		\If{$r > (1 + \epsilon)l_e$}{
% %			
% %			$E = \emptyset$ \;
% %			\KwRet{E} \;
% 		%}
% 	}

% \end{algorithm}

\begin{algorithm}								
	\caption{$\sample(e):$ Sample an edge using conditional leverage score}
	\label{alg:edgeSample}
	\SetAlgoVlined
	
	%\begin{algbox}

	\KwIn{An edge $e$ and access to graphs $G_0,...G_k$ in which $e$ appears }
	\KwOut{Returns $\{ e \}$ if edge belongs to the tree, and $\{ \}$ if  it doesn't }						
	%\begin{enumerate}
	%\item Let $e(\widehat{E},V_1)$ be the edges $(u,v) \in \widehat{E}$ such that $u,v \in V_1$
	% Set $E \leftarrow \{ \}$ \;
	Generate a uniform random number $r$ in $[0,1]$ \;
	
	$l_e \leftarrow \textsc{EstimateLeverageScore}(e)$
	
	\eIf{$r < l_e $}{
	\KwRet{$\{e \}$}
	}{
	\KwRet{$\{ \}$}
	}

	%\begin{enumerate}

	%	\KwRet{$E$} \;
		%\end{enumerate}
	
	%\SetKwProg{myproc}{Procedure}{}{}
	\textbf{Procedure }{$\textsc{EstimateLeverageScore}(e)$}{
	Compute $l_e^{(k)}$ to error $1/n$ \;
	\If{$\textsc{isGood}(l_e^{(k)}, \epsilon)$}{
	\KwRet{$l_e^{(k)}$}
	}
	
	\For{$i=t_1$ to $\log n$}{ 
			Compute $l$, an estimate for $l_e^{(i)}$ with error $1/n$\;
			\If{$\textsc{isGood}(l, \epsilon(i))$}{
			\KwRet{$l$}
		}
		}
   		
		\For{$i=0$ to $\infty$}{ 
			Compute $l$, an estimate for $l_e^{(0)}$ with error $2^{-i}n$\;
			\If{$\textsc{isGood}(l, 2^{-i} n)$}{
			\KwRet{$l$} }
		}

	}
	\textbf{Procedure }{$\textsc{isGood}(l_e, \epsilon)$}{
		\If{$r < (1 - \epsilon)l_e \text{ or } r > (1 + \epsilon)l_e$}{

			\KwRet{True}			
%			$E = e$ \;
%			\KwRet{E} \;
		}
		\KwRet{False}
%		\If{$r > (1 + \epsilon)l_e$}{
%			
%			$E = \emptyset$ \;
%			\KwRet{E} \;
		%}
	}

\end{algorithm}

\subsubsection*{Proof of Lemma~\ref{lem:approx_leverage}}
This edge sampling scheme relies upon the error in the leverage score estimates remaining small as we work our way down the subgraphs and remaining small when we contract and delete edges. Theorem~\ref{thm:schurApx} implies leverage score estimates will have small error between levels, so we will only have compounding of small errors. However, it does not imply that these errors remain small after edge contractions and deletions, which becomes necessary to prove in the following lemma.
\begin{lemma}\label{lem:approx_cond_lev}
		Given a graph $G = (V,E)$, vertex partition $V_1,V_2$, and edges $e \in E \cap (V_1,V_1)$, then 
		\[\approxS(G,V_1,\epsilon)/e \approx_{\epsilon} \schurto{G/e}{V_1}, \approxS(G,V_1,\epsilon)\setminus e \approx_{\epsilon} \schurto{G\setminus e}{V_1}\]
	
\end{lemma}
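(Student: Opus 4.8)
The plan is to reduce the claim about contraction/deletion commuting with approximate Schur complement to the exact statement in Fact~\ref{fact:schurConDel}, by observing that the operations of edge deletion and contraction, when restricted to an edge $e$ inside the ``kept'' part $V_1$, act identically on the Laplacian $\schurto{G}{V_1}$ and on its approximation $\approxS(G,V_1,\epsilon)$, and both operations preserve the spectral relation $\approx_\epsilon$. Let me set $\SS = \schurto{G}{V_1}$ and $\widetilde\SS = \approxS(G,V_1,\epsilon)$, so that by Assumption~\ref{assumption} we have $\widetilde\SS \approx_\epsilon \SS$ (in the sense of Definition on $\approx_\epsilon$ applied to the extra edges; I will need to be slightly careful, see below, about whether the approximation is of the full Schur complement or only of the added edges $S_U$ — since $e \in E \cap (V_1,V_1)$ is one of the original edges of $G(V_1)$, it lies in the common induced-subgraph part and the approximation guarantee $\widetilde\SS \approx_\epsilon \SS$ on full Laplacians follows from $\widetilde S_U \approx_\epsilon S_U$ together with adding the identical block $L_{G(V_1)}$ to both sides, using that $\approx_\epsilon$ is preserved under adding a common PSD matrix).

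The first step is a deletion lemma: if $A \approx_\epsilon B$ for Laplacians $A,B$ on the same vertex set, and $e$ is an edge with the same weight in the graphs underlying $A$ and $B$ (here $e$ has weight $w_e$ in $G(V_1)$, hence in both $\SS$ and $\widetilde\SS$ after we've separated out the original edges per the paper's convention), then $A - w_e \bb_e\bb_e^T \approx_\epsilon B - w_e \bb_e\bb_e^T$. This is \emph{not} true for subtracting arbitrary PSD matrices, so the argument must use that we subtract the \emph{same} rank-one term from both sides and that it is dominated appropriately; the clean way is to note $A - w_e\bb_e\bb_e^T \succeq 0$ and $B - w_e\bb_e\bb_e^T \succeq 0$ (both are Laplacians of actual graphs) and then verify $e^{-\epsilon}(B - w_e\bb_e\bb_e^T) \preceq A - w_e\bb_e\bb_e^T$ by rearranging to $e^{-\epsilon}B - A \preceq (e^{-\epsilon}-1)w_e\bb_e\bb_e^T$; since $e^{-\epsilon}B - A \preceq 0 \preceq$ is false in general, I actually expect the correct route is different — deletion does \emph{not} in general preserve $\approx_\epsilon$. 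So the real plan must go through contraction/deletion as \emph{graph} operations commuting with Schur complement at the level of the \emph{algorithm's} construction, i.e., invoke Fact~\ref{fact:schurConDel} to write $\schurto{G/e}{V_1} = \schurto{G}{V_1}/e$ and $\schurto{G\setminus e}{V_1} = \schurto{G}{V_1}\setminus e$, and then argue $\widetilde\SS / e \approx_\epsilon \SS/e$ and $\widetilde\SS\setminus e \approx_\epsilon \SS\setminus e$ directly.

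For deletion, $\SS \setminus e$ and $\widetilde\SS\setminus e$ are obtained by removing $w_e\bb_e\bb_e^T$; the key point is that in our setting $e$ is an original edge common to both, and I claim $\approx_\epsilon$ is preserved because the quadratic forms of $\SS\setminus e$ and $\widetilde\SS\setminus e$ differ from those of $\SS,\widetilde\SS$ by exactly the \emph{same} amount $w_e(\bb_e^T x)^2$ at every $x$ — but that still breaks the multiplicative bound. Therefore I believe the honest statement relies on $e$ being handled so that its contribution is identical on both sides \emph{and} the remaining parts stay comparable, which holds when the approximation error is confined to the \emph{added} edges $S_U$, which are disjoint from $e$: then $\widetilde\SS = L_{G(V_1)} + \widetilde S_U$ and $\SS = L_{G(V_1)} + S_U$ with $\widetilde S_U \approx_\epsilon S_U$ and $e \subseteq G(V_1)$. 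Deleting $e$ gives $L_{G(V_1)\setminus e} + \widetilde S_U \approx_\epsilon L_{G(V_1)\setminus e} + S_U$, which \emph{does} follow since $\widetilde S_U \approx_\epsilon S_U$ and we add a common PSD matrix $L_{G(V_1)\setminus e}$ to both. For contraction, $G/e$ identifies the two endpoints of $e$; this is a linear projection $P$ (restriction to the subspace where the two coordinates are equal), and $L_{G(V_1)/e} + P^T\widetilde S_U P \approx_\epsilon L_{G(V_1)/e} + P^T S_U P$ because conjugation by $P$ preserves $\approx_\epsilon$ and again we add a common term. Combining with Fact~\ref{fact:schurConDel} finishes both cases.

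The main obstacle is exactly this subtlety: naively ``$\approx_\epsilon$ is preserved under edge deletion'' is \emph{false}, so the proof must exploit that the approximation guarantee of $\approxS$ localizes to the newly-created edges $S_U$ (which is precisely the ``seemingly stronger guarantee'' advertised in Section~\ref{sec:apxSchurResults}), and that $e$, being an original edge of $G(V_1)$, is untouched by that error. Once that structural fact is in hand, both deletion and contraction reduce to the trivial observations that $\approx_\epsilon$ survives (i) adding a common PSD matrix and (ii) congruence by a fixed matrix, after which Fact~\ref{fact:schurConDel} supplies the identification of $\schurto{G/e}{V_1}$ with $\schurto{G}{V_1}/e$ and similarly for deletion.
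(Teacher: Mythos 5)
Your final argument is correct and is essentially the paper's own proof: both rest on decomposing $\approxS(G,V_1,\epsilon) = \ulap_{V_1} + \SStil_{V_2}$ and $\schurto{G}{V_1} = \ulap_{V_1} + \SS_{V_2}$ so that the $\approx_\epsilon$ error is confined to the added edges $\SStil_{V_2} \approx_\epsilon \SS_{V_2}$, observing that deleting the original edge $e$ only alters the common summand $\ulap_{V_1}$ (and that contraction is a congruence preserving $\approx_\epsilon$), and then invoking Fact~\ref{fact:schurConDel} to commute the operation past the exact Schur complement. Your explicit flagging that deletion does \emph{not} preserve $\approx_\epsilon$ in general, and that the localization of the error to $S_U$ is what saves the argument, is exactly the subtlety the paper's proof implicitly relies on.
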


\begin{proof}
$\approxS(G,V_1,\epsilon)/e \approx_{\epsilon} \schurto{G}{V_1}
/e$ because spectral approximations are maintained under contractions. Furthermore, $\approxS(G,V_1,\epsilon) = \ulap_{V_1} + \tilde{S_{V_2}}$ where $\ulap_{V_1}$ is the Laplacian of the edges in $E \cap (V_1,V_1)$. Similarly, write $\schurto{G}{V_1} = \ulap_{V_1} + {\SS_{V_2}}$, and because $\tilde{\SS}_{V_2} \approx_{\epsilon} \SS_{V_2}$ then $\ulap_{V_1} \setminus e + \tilde{\SS}_{V_2} \approx_\epsilon \ulap_{V_1} \setminus e + {\SS_{V_2}}$. Combining these facts with Fact~\ref{fact:schurConDel} gives the desired result.  
\end{proof}

\begin{proof} (of Lemma~\ref{lem:approx_leverage})

	By construction, $\epsilon(i) \leq \epsilon(k)$ for every $i \leq k$. Iteratively applying Theorem~\ref{thm:schurApx} and Lemma~\ref{lem:approx_cond_lev}, gives $l_e \in [e^{-\epsilon(k)k}{l_e^{(k)}},e^{\epsilon(k)k}{l_e^{(k)}}] $, and using $\epsilon(k) \leq 1/\log^2{n}$ for all $k$, and $k \leq \log{n}$ finishes the proof.

\end{proof}

%%% local Variables:
%%% mode: latex
%%% TeX-master: "main"
%%% End:

\subsubsection*{Correctness}
%\todo{all statements hold w.h.p.}
%Suppose $L$ is a Laplacian with vertex set $V$ and we want to %Schur complement away $U \subset V.$ Let's assume all edges are %incident on at least one vertex of $U$ (we can ignore the rest %of the edges).
%Computing the exact Schur complement of $L$ w.r.t. $U$ gives %the Schur complement matrix $S_U$, which is a Laplacian on %$V\backslash U.$

%\begin{lemma}[ExactShur]
%Given a vertex set $U$, we can in $O(n^{\omega})$ time compute %the Shur complement $U$.
%\end{lemma}

%\begin{corollary}
%	Given a graph $G = (V,E)$ and vertex partition $V_1,V_2$. If $\tilde{l_e}$ is the leverage score of $e$ in $\approxS(G,V_1,\epsilon)$, then for all $e \in E(V_1),$ with probability at least $1-p,$ $l_e \in [(1 - \epsilon)\tilde{l_e}, (1 + \epsilon)\tilde{l_e}]$ .
%	
%\end{corollary}
%
%\begin{lemma}
%	Every call to $\approxS$ in our algorithm is a good spectral approximate w.h.p
%	
%\end{lemma}

Under Assumption~\ref{assumption}, we were able to prove Lemma~\ref{lem:approx_leverage}. This, in turn, implies the correctness of our algorithm, which is to say that it generates a tree from a $\ww$-uniform distribution on trees. We now remove Assumption~\ref{assumption}, and prove the approximate correctness of our algorithm, and the first part of Theorem~\ref{thm:mainTrees}.

\mainTrees*

\begin{proof}
	Each subgraph makes at most 6 calls to $\approxS$, and there are $\log{n}$ recursive levels, so $O(n^3)$ total calls are made to $\approxS$. Setting $\delta' = \frac{\delta}{O(n^3)}$ for each call to $\approxS$, Assumption~\ref{assumption} holds with probability $(1 - \delta')^{O(n^3)} = 1-\delta$, and $\log^4{\frac{O(n^3)}{\delta}} = \tilde{O}(\log^4{1/\delta})$. Therefore, our algorithm will only fail to generate a random tree from the $\ww$-uniform distribution on trees with probability at most $\delta$
	
\end{proof}

\subsubsection{Runtime Analysis}\label{sec:runtime}

%Rough algorithm:
%
%\begin{enumerate}
%	\item set $\epsilon = (m/n)^{-2/3}$ 
%	
%	\item set $k$ such that $2^k = n^2/m$
%	
%	\item for levels $i < k$, sparsify schur based on $\epsilon$
%	
%	\item set $j$ such that $4^{j} = \epsilon^{-1}\tilde{\epsilon}$ where $\tilde{\epsilon}^{-1} = \log^2{n} = \tilde{O}(1)$, (so $2^j = \tilde{O}((m/n)^{1/3})$)
%	
%	\item for each level $i$ such that $j + k > i > k$, sparsify schur based on $\epsilon_i = 4^{i-k} \epsilon$
%	
%	\item for each level $i$ such that $i < j + k$, sparsify schur based on $\tilde{\epsilon}$
%	
%	\item at level $\log{n}$ flip coin on effective resistance of the edge
%	
%	\item run error correction
%	\begin{enumerate}
%		\item start with $\tilde{\epsilon}$ as the error
%		
%		\item if the random number is in the error range, move up to level $j+k$ to get an error of $\tilde{\epsilon}/4$
%		
%		\item then move up to $j + k -1$ to get an error of $\tilde{\epsilon}/16$, and so on
%		
%		\item once you get to level $k$, get a better error in the original graph using JL
%		
%	\end{enumerate}
%\end{enumerate}
%
%\todo{use better variable notation, especially for $k,j$}
%
%\subsection{Recurrence analysis} 
We will now analyze the runtime of the algorithm. Let $T(n)$ be the time taken by $\exact$ on input a graph $\Gtil$ with $n$ vertices and let $B(n)$ be the time taken by $\across$ on  a graph with $n$ vertices. We recall that the recursive structure then gives $T(n) = 2T(n/2) + 4B(n/2)$ and $B(n/2) = 4B(n/4)$. To compute the total runtime, we separate out the work done in the leaves of the recursion tree from the rest. Note that $\sample$ is invoked only on the leaves.

First we bound the total number of nodes of the recursion tree as a function of the depth in the tree. 

\begin{lemma}
\label{lem:nodeCount}
	Level $i$ of the recursion tree has at most $ 4^{i+1} - 2^i$ nodes, the number of vertices in the graphs at each of the nodes is at most $ n/2^i.$  
\end{lemma}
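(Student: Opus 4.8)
The plan is to track the two recursion types ($\exact$ and $\across$) separately, since $\across$ is the one that multiplies the node count by $4$ at each level while $\exact$ only doubles it. First I would set up the counting at the level of the recursion tree rooted at the initial call to $\exact(G)$. An $\exact$ node at level $i$ spawns two $\exact$ children (from the loop over $V_1,V_2$) at level $i+1$, plus one $\across$ child at level $i+1$. An $\across$ node at level $i$ spawns four $\across$ children at level $i+1$ (the loop over $(L_i,R_j)$, $i,j\in\{1,2\}$), and no $\exact$ children. So if $a_i$ denotes the number of $\exact$ nodes at level $i$ and $b_i$ the number of $\across$ nodes at level $i$, we get the recurrences $a_{i+1} = 2a_i$, $b_{i+1} = 4b_i + a_{i+1} = 4b_i + 2a_i$, with $a_0 = 1$, $b_0 = 0$. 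Hence $a_i = 2^i$, and the total node count at level $i$ is $N_i = a_i + b_i = 2^i + b_i$.

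Next I would solve the recurrence for $b_i$. From $b_{i+1} = 4b_i + 2\cdot 2^i = 4b_i + 2^{i+1}$ with $b_0 = 0$, one checks by induction that $b_i = 4^i - 2^i$: indeed $b_{i+1} = 4(4^i - 2^i) + 2^{i+1} = 4^{i+1} - 2^{i+2} + 2^{i+1} = 4^{i+1} - 2^{i+1}$. Therefore $N_i = 2^i + 4^i - 2^i = 4^i$, which is certainly at most $4^{i+1} - 2^i$ (the stated bound is loose, and one could alternatively just bound $b_i \le 4^i$ crudely and add the $a_i \le 2^i \le 3\cdot 4^i$ term; either way the claimed $4^{i+1} - 2^i$ follows comfortably). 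I'd present the exact identity $N_i = 4^i$ and note it implies the lemma's bound.

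For the bound on the number of vertices per node, I would argue by a straightforward induction down the recursion, observing that every recursive call is made on a graph whose relevant vertex set is exactly half the size of its parent's. In $\exact$, the call $\approxS(\widetilde G, V_i, \cdot)$ produces a graph on $V_i$, and $V = V_1\cup V_2$ is an equal partition, so $|V_i| = |\widetilde V|/2$; the call to $\across(\widetilde G,(V_1,V_2))$ operates on a bipartite structure with $|V_1| + |V_2| = |\widetilde V|$ vertices but is immediately split. In $\across$, the call $\approxS(\widetilde G, L_i\cup R_j, \cdot)$ produces a graph on $L_i\cup R_j$ where $|L_i| = |L|/2$ and $|R_j| = |R|/2$, so $|L_i\cup R_j| = (|L|+|R|)/2$. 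In both cases the vertex count of a child is half that of its parent; since the root has $n$ vertices, a level-$i$ node has at most $n/2^i$ vertices. The only mild subtlety — not really an obstacle — is being consistent about what "the graph at a node" means for $\across$ nodes (the union $L\cup R$ versus the pair $(L,R)$), and about the fact that edge contractions can only decrease vertex counts, so the stated upper bound is preserved. I expect the main (minor) obstacle to be just bookkeeping the indexing of levels consistently between the two procedure types so that the $\across$ subtree hanging off a level-$i$ $\exact$ node is counted as starting at level $i+1$; once that convention is fixed the rest is routine.
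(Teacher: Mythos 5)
Your proposal is correct and follows essentially the same approach as the paper: classify nodes by which procedure ($\exact$ vs.\ $\across$) generated them, set up and solve a linear recurrence for the counts by induction, and observe that each child's vertex set is half its parent's. The paper folds both types into a single recurrence $a_i = 4a_{i-1} + 2^i$ (giving $2\cdot 4^i - 2^i \le 4^{i+1}-2^i$), whereas you solve the two types separately and get $4^i$; both are fine, and the small numerical discrepancy comes only from where one places the call $\across(\widetilde G,(V_1,V_2))$ in the tree. One caveat on that convention: under your choice of making that call a level-$(i+1)$ child, its graph still has $|\widetilde V| = n/2^i$ vertices, so the second claim of the lemma ($n/2^{i+1}$ vertices at level $i+1$) fails by a factor of $2$ for exactly those nodes; the paper's convention, matching $T(n)=2T(n/2)+4B(n/2)$, absorbs that call into its parent and takes its four half-size sub-calls as the level-$(i+1)$ children, which keeps both claims consistent -- you flag this as bookkeeping, and indeed adopting the paper's convention resolves it.
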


\begin{proof}

It is clear that the size of the graph at a node at depth $i$ is at most $n/2^i.$ We will bound the number of nodes by induction. There are two types of nodes in the recursion tree due to the recurrence having two kinds of branches corresponding to $T(n),B(n).$ We will call the nodes corresponding to $T(n)$ as the first type and it is clear from the recurrence relation that there are $2^i$ such nodes. Let us call the other type of nodes the second type, and it is clear that every node (both first and second type) at depth $i-1$ branches into four type two nodes. Therefore, if $a_i$ is the total number of nodes at level $i$, then $a_{i} = 4a_{i-1} + 2^{i}$. We will now prove by induction that 
$a_i \leq 4^{i+1} - 2^i$. Given $a_0 = 1$, the base case follows trivially. Suppose it is true for $i-1$, then we have 
$a_i = 4a_{i-1} + 2^{i} \leq 4( 4^{i} - 2^{i-1}) + 2^i = 4^{i+1} - 2^i,$ proving the lemma.
	
\end{proof}

Now we will compute the total work done at all levels other than the
leaves. We recall the error parameter in $\approxS$ calls is a
function of the depth in the tree:
In the case $m > n^{4/3}$, 
we define $\epsilon$ in terms of the level $i$ as
\begin{equation*}
\epsilon(i) = 2^{i/2} n^{-1/6} m^{-1/4} \log^{-2} n 
.
\end{equation*}
In the case $m \leq n^{4/3}$, 
we define $\epsilon$ in terms of the level $i$ as
\begin{equation*}
\epsilon(i) = 2^{i/2} n^{-1/2} \log^{-2} n 
.
\end{equation*}

% \[
% \epsilon(i) = \left \{ \begin{array}{cc}
% \left( \frac{m}{n} \right)^{-2/3} & \text{ for } i \in [0, t_1] \\
% 4^{i - t_1}\left( \frac{m}{n} \right)^{-2/3}& \text{ for } i \in [t_1, t_1 + t_2 ] \\
% \log^{-2} n  & \text{otherwise}
% \end{array}
% \right .
% \] 
% where the two threshold values $t_1$ and $t_2$ are such that $2^{t_1}
% = \frac{n^2}{m}$ and $2^{t_2} = (\frac{m}{n})^{1/3}
% \frac{1}{\log{n}}.$ The first threshold is the level at which each
% subgraph has at most $\frac{m}{n}$ vertices. The second threshold $t_1
% + t_2$ is simply the level at which we need to stop to keep error
% $o(1)$ all the way down the recursion.
Note that when $m > n^{4/3}$, we have $n^{-1/6} m^{-1/4} < n^{-1/2} $.
Further, the maximum value of $i$ is $\log n$ so $2^{i/2} = n^{1/2}$.
This means we always have
$\epsilon(i) \leq 2^{i/2} n^{-1/2} \log^{-2} n  \leq \log^{-2} n $.

The threshold value $t_1$ is such that $2^{2 t_1} = \frac{n^2}{m}$.

%Intuitively, we consider this to be the level at which the subgraphs become dense because $\epsilon(i)$ for $i \leq t_1$ is constant but the number of vertices are decreasing. Once the graph becomes dense, it will no longer be cost efficient to keep computing approximate Schur complement at such a high level of accuracy, particularly because the number of subgraphs is growing exponentially. Accordingly, our error should remain inversely proportional to the number of edges, which is decreasing by a factor of 4 at each level because the graphs are dense and the vertices are halved. 

\begin{lemma}
The total work done at all levels of the recursion tree excluding the leaves is bounded by  
$\Otil(\max\setof{n^{4/3}m^{1/2},n^{2}} \log^4(1/\delta)).$
	% The total work done at all levels of the recursion tree excluding the leaves is bounded by  $\Otil(n^{5/3}m^{1/3}\log^4{1/\delta}).$
\end{lemma}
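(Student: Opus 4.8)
The plan is to account separately for (i) the cost of the $\approxS$ calls and (ii) the cost of the $\textsc{EstimateLeverageScore}$ invocations that use graphs $G_i$ above the leaf level but still return before reaching the original graph $G$. Since $\sample$ is only invoked at the leaves, all work at the non-leaf levels is precisely the cost of running $\approxS$. By Theorem~\ref{thm:schurApx}, a single call to $\approxS(\Gtil,C,\eps,\delta')$ on a graph with $n'$ vertices and $m'$ edges costs $\Otil(m' + n'\eps^{-2})$, where we absorb the $\log^{O(1)}(n/\delta')$ factors into $\Otil(\cdot)$ and the $\log^4(1/\delta)$ into the statement (using $\delta' = \delta/O(n^3)$ as in the proof of Theorem~\ref{thm:mainTrees}). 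So the first step is: sum $m' + n'\eps(i)^{-2}$ over all nodes at level $i$, then over all $i$, using Lemma~\ref{lem:nodeCount}.

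For the vertex term: level $i$ has $\le 4^{i+1}-2^i = O(4^i)$ nodes, each with $\le n/2^i$ vertices, and error parameter $\eps(i) = 2^{i/2} n^{-1/6} m^{-1/4}\log^{-2} n$ (dense case) or $2^{i/2} n^{-1/2}\log^{-2} n$ (sparse case). In the dense case, the contribution of level $i$ is $O(4^i) \cdot (n/2^i) \cdot \eps(i)^{-2} = O(4^i \cdot n/2^i \cdot 2^{-i} n^{1/3} m^{1/2}\log^4 n) = \Otil(4^i \cdot 2^{-2i} \cdot n^{4/3} m^{1/2}) = \Otil(n^{4/3}m^{1/2})$ — crucially the factors of $2^i$ cancel, so each level contributes the same and summing over $\log n$ levels only costs another $\log$ factor. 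In the sparse case the same cancellation gives $\Otil(n^2)$ per level. This is where the specific choice of $\eps(i)$ growing like $2^{i/2}$ is used: it exactly compensates the doubling of the total vertex count per level. For the edge term $\sum_i \sum_{\text{nodes}} m'$: the edges at a node come from original edges of $G$ plus Schur-complement fill-in, and the fill-in from $\approxS$ at level $i-1$ has $\Otil((n/2^i)\eps(i-1)^{-2})$ edges, which is dominated by the vertex term already bounded; the original-edge contribution telescopes to $\Otil(m)$ per level (each original edge lives in at most one node per level by the recursive partition structure), and $m \le n^2 \le \max\{n^{4/3}m^{1/2}, n^2\}$ always, while in the dense case $m \le n^{4/3}m^{1/2} \iff m^{1/2} \le n^{4/3} \iff m \le n^{8/3}$, which may fail; but note $m \le n^2$ unconditionally since $G$ is simple-ish (a multigraph Schur complement still has $\Otil(k\eps^{-2})$ edges, and one can first sparsify the input to $\Otil(n\eps^{-2})$ edges), so $m$ is subsumed.

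The remaining piece is the cost of $\textsc{EstimateLeverageScore}$ calls that walk up from the leaf $G_k$ through $G_i$ for $t_1 \le i < \log n$ (the second \texttt{for} loop), each computing a leverage score in $G_i$ to error $1/n$ in $\Otil(|E_{G_i}|)$ time; by Lemma~\ref{lem:approx_leverage} and the $\textsc{isGood}$ test, the probability of reaching level $i$ is $O(\eps(i)\log n \cdot l_e^{(i)})$, so the expected cost at each edge is a geometric-type sum $\sum_i \Pr[\text{reach } i]\cdot \Otil(|E_{G_i}|)$ which one bounds level-by-level exactly as above; this is the content reserved for the subsequent lemma on leaf work, so here I would simply note that the non-leaf $\approxS$ work dominates or matches it. I expect the main obstacle to be bookkeeping the edge counts $m'$ at each node correctly — distinguishing original edges (which telescope across a level) from Schur-complement fill-in edges (whose count depends on $\eps$ at the parent level), and making sure that after contractions and deletions these bounds still hold; the guarantee that $\approxS$ outputs $\Otil(k\eps^{-2})$-edge graphs (Theorem~\ref{thm:schurApx}) is what keeps this from blowing up. Once the edge term is shown to be dominated by the vertex term (plus a single $\Otil(m)\le\Otil(n^2)$), the bound $\Otil(\max\{n^{4/3}m^{1/2}, n^2\}\log^4(1/\delta))$ follows by summing the $O(\log n)$ equal per-level contributions.
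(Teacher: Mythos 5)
Your proposal is correct and follows essentially the same route as the paper: per-node cost $\Otil(m' + n'\eps(i)^{-2})$ from Theorem~\ref{thm:schurApx}, node counts from Lemma~\ref{lem:nodeCount}, and the observation that the $2^{i/2}$ scaling of $\eps(i)$ makes each level contribute equally ($\Otil(n^{4/3}m^{1/2})$ or $\Otil(n^2)$), so summing over $O(\log n)$ levels loses only a log factor. The only difference is cosmetic: where you separately account for original edges (telescoping to $m$ per level) and Schur fill-in, the paper simply bounds the edge count at a node by the square of its vertex count, $(n/2^i)^2$, giving the same $n^2$ per-level term.
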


\begin{proof}
	From  Theorem~\ref{thm:schurApx} the work done in a node at
        depth $i$ is $\Otil \left( ((n/2^i)^2 +
          n2^{-i}\epsilon(i)^{-2} )\log^4(1/\delta)\right).$
        The $\log^4\left({1/\delta}\right)$ factor is left out from the remaining
        analysis for simplicity.
        By Lemma~\ref{lem:nodeCount}, the total work done at depth $i$
        is $\Otil \left( n^2  + n2^{i}\epsilon(i)^{-2}  \right).$
        Finally, bound for the total running time across all levels follows from 
% \begin{align*}
% \sum_{i=0}^{\log n} n^2  + 2^i \frac{n}{\epsilon(i)^2} & = n^2 \log n + \sum_{i=0}^{t_1} 2^i n \frac{m^{4/3}}{n^{4/3}} + \sum_{i=t_1}^{t_1 + t_2}2^i n \frac{m^{4/3}}{n^{4/3} 4^{i-t_1}} 
% + \sum_{i=t_1 + t_2}^{\log n}2^i n \frac{1}{\log^4 n}   \\
% & = O(n^2 \log n  + n^{5/3} m^{1/3} +  n^{5/3} m^{1/3} + n^2/\log^4n).
%  \end{align*}
\begin{align*}
\sum_{i=0}^{\log n} n^2  + 2^i \frac{n}{\epsilon(i)^2} 
& = \Otil\left( n^2 \log n + n \max\setof{n^{1/3}m^{1/2},n} \right)
.
 \end{align*}
% Therefore, the total work done is $\Otil(n^{5/3} m^{1/3}).$	
\end{proof}

We will now analyze the total work done at the leaves of the recursion tree. We first state a lemma which gives the probability that approximate leverage score of an edge can be used to decide if the edge belongs to the tree. 
\begin{corollary}
	If $r$ is drawn uniformly randomly from $[0,1]$, then the probability that  $r \in [1 - \hat{\epsilon}\hat{l_e}\log^2{n}, 1 + \hat{\epsilon}\hat{l_e}\log^2{n}]$ is $\Otil(\hat{\epsilon}l_e)$ w.h.p.
	
\end{corollary}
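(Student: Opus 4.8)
The plan is to reduce the statement to two elementary observations: a variable drawn uniformly from $[0,1]$ assigns to any interval a probability at most equal to that interval's length, and the leverage‑score estimate $\hat l_e$ used at this step is a $(1\pm o(1))$‑multiplicative approximation of the true conditional leverage score $l_e$.

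First I would fix the level $i$ of the recursion at which the estimate $\hat l_e = l_e^{(i)}$ is computed, so that $\hat\epsilon = \epsilon(i)$. The decision rule (procedure \textsc{isGood}) fails to make a decision at this level precisely when $r$ lies in an interval whose multiplicative half‑width about $\hat l_e$ is $\Theta(\hat\epsilon\log^2 n)$ — up to the lower‑order additive $1/n$ error incurred in computing the estimate, which only inflates the interval by a $(1\pm n^{-1})$ factor and is absorbed into the $\Otil(\cdot)$. Hence this interval has length $O(\hat\epsilon\,\hat l_e\log^2 n)$, and since $r$ is uniform on $[0,1]$, the probability that $r$ falls inside it (intersected with $[0,1]$) is at most its length, i.e. $O(\hat\epsilon\,\hat l_e\log^2 n)$.

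Second I would replace $\hat l_e$ by $l_e$. By Lemma~\ref{lem:approx_leverage}, $l_e \in [(1-2\epsilon(i)\log n)\,l_e^{(i)},\,(1+2\epsilon(i)\log n)\,l_e^{(i)}]$, and since $\epsilon(i)\le\log^{-2}n$ at every level we have $2\epsilon(i)\log n = O(1/\log n) = o(1)$; together with the additional $1/n$ accuracy of the computed estimate this yields $\hat l_e = (1\pm o(1))\,l_e$, so in particular $\hat l_e \le 2 l_e$. Substituting gives probability $O(\hat\epsilon\, l_e\log^2 n) = \Otil(\hat\epsilon\, l_e)$. The ``w.h.p.'' qualifier is inherited rather than proved anew: the only probabilistic input to the argument is Lemma~\ref{lem:approx_leverage}, which was established under Assumption~\ref{assumption}, and Assumption~\ref{assumption} holds with high probability (exactly as invoked in the proof of Theorem~\ref{thm:mainTrees}); if the Laplacian solver used for the $1/n$‑accurate estimate is randomized, its own high‑probability guarantee is folded in the same way.

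The computation is routine, so I do not anticipate a genuine obstacle. The only points needing care are bookkeeping: ensuring the $\epsilon$ and $l_e$ appearing in the bound are the level‑$i$ quantities, and checking that the two sources of multiplicative slack — the Schur‑complement error propagated through Lemma~\ref{lem:approx_leverage}, and the $1/n$‑accuracy of the leverage‑score solver — are each $1+o(1)$ and hence harmless. In essence the content is just ``the Lebesgue measure of an interval equals its length, and our estimate is a constant‑factor approximation of the true leverage score.''
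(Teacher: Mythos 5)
Your proposal is correct and follows essentially the same route as the paper: bound the probability by the length $2\hat{\epsilon}\hat{l_e}\log^2 n$ of the interval and then convert $\hat{l_e}$ to $l_e$ via Lemma~\ref{lem:approx_leverage}. If anything, you are more careful than the paper's one-line proof, which states the approximation inequality in the direction $l_e \le 2\hat{l_e}$ whereas the substitution actually needs $\hat{l_e} = O(l_e)$ — a direction you correctly supply from the two-sided bound of the lemma.
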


\begin{proof}
	The exact probability is $2\hat{\epsilon}\hat{l_e}\log^2{n}$, and from Lemma 5.8, we know $l_e \leq 2\hat{l_e}$ w.h.p.
	
\end{proof}

We now consider the expected work done at a single leaf of the recursion tree.

\begin{lemma}
\label{lem:workedgesample}
Let $l_e$ be the leverage score of an edge $e$ in the $G$ which is obtained by updating the input graph based on the decisions made on all the edges considered before $e.$ The routine $\sample$ takes  
\[ \Otil \left(1 + l_e \max\setof{n, n^{1/3}m^{1/2}}  \right).\]
\end{lemma}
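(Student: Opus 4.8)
The plan is to track the cost of a single invocation of $\sample(e)$ through the three loops of $\textsc{EstimateLeverageScore}$ and take expectations over the random number $r \in [0,1]$. First I would set up the accounting: at a leaf with graphs $G_0 = G, G_1,\dots,G_k$ all containing $e$, computing $l_e^{(i)}$ to accuracy $1/\poly(n)$ via a Laplacian solver costs $\Otil(|E_{G_i}|)$. By Theorem~\ref{thm:schurApx}, the graph $G_i$ at recursion depth $i$ has $\Otil((n/2^i)^2 + n 2^{-i}\epsilon(i)^{-2})$ edges (its induced part plus the sparsified Schur complement), and $G_0 = G$ has $m$ edges. So the work is dominated by: (a) the unconditional first step computing $l_e^{(k)}$, which is cheap since $G_k$ has $O(1)$ vertices; (b) for each $i$ from $t_1$ to $\log n$, the step is executed only if all earlier $\textsc{isGood}$ checks failed, i.e. only if $r$ landed in the tiny interval around $l_e^{(i')}$ for every relevant $i'$; (c) the final loop working in $G$ itself, which by the argument already given in the text terminates in $\Otil(m)$ expected time \emph{conditioned on reaching it}.

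Next I would bound the probability of reaching level $i$. By Lemma~\ref{lem:approx_leverage}, $l_e \in [(1-\epsilon_i)l_e^{(i)},(1+\epsilon_i)l_e^{(i)}]$ with $\epsilon_i = 2\epsilon(i)\log n$, so the event ``$\textsc{isGood}(l_e^{(i)},\epsilon(i))$ is False'', i.e. $r$ lies in an interval of width $2\epsilon_i l_e^{(i)} = \Otil(\epsilon(i) l_e^{(i)})$, has probability $\Otil(\epsilon(i) l_e)$ (using $l_e^{(i)} = \Theta(l_e)$, which follows from Lemma~\ref{lem:approx_leverage} as in the Corollary above). Crucially, to reach the step at level $i$ in the loop, this failure must have occurred at level $k$ \emph{and} at all levels down to $i$; since the intervals are nested up to $(1\pm\epsilon_i)$ factors, reaching level $i$ has probability at most $\Otil(\epsilon(i) l_e)$ (the smallest, i.e. most restrictive, interval dominates — and note the remark in the text that a failure at any level $i \ge t_1$ sends us to recompute in $G_0$ directly, which actually simplifies this: after the threshold block, failure means we jump to the final loop on $G$). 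So the expected cost of the level-$i$ step is $\Otil(\epsilon(i) l_e) \cdot \Otil((n/2^i)^2 + n 2^{-i}\epsilon(i)^{-2})$.

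Then I would sum. The second term contributes $\Otil(\epsilon(i) l_e \cdot n 2^{-i} \epsilon(i)^{-2}) = \Otil(l_e n 2^{-i}/\epsilon(i))$; plugging in $\epsilon(i) = 2^{i/2} n^{-1/6} m^{-1/4}\log^{-2} n$ (dense case) gives $\Otil(l_e n^{7/6} m^{1/4} 2^{-3i/2})$, which is geometric in $i$ and maximized at the smallest $i$ in range, $i = t_1$ with $2^{t_1} = (n^2/m)^{1/2}$; substituting yields $\Otil(l_e n^{7/6} m^{1/4} (n^2/m)^{-3/4}) = \Otil(l_e n^{-1/3} m)$, and one checks this is $\Otil(l_e n^{1/3} m^{1/2})$ precisely when $m \ge n^{4/3}$ — consistent with the case split. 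The first term $\Otil(\epsilon(i) l_e (n/2^i)^2)$ is likewise geometric and handled the same way, contributing no more than $\Otil(l_e n^2)$-type terms that are absorbed; the final loop on $G_0$ contributes $\Otil(l_e) \cdot \Otil(m)$ (probability of reaching it times its expected cost), and $m \le \max\{n,n^{1/3}m^{1/2}\}^2 / \dots$ — more directly, $m \le n^{1/3}m^{1/2} \cdot (m^{1/2}/n^{1/3})$ so when $m \le n^{4/3}$ this is $\le n^{1/3} m^{1/2} \cdot n^{1/3} \le$ the claimed bound, and the sparse-case $\epsilon(i) = 2^{i/2}n^{-1/2}\log^{-2} n$ makes the analogous sum give $\Otil(l_e n)$. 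Combining the dense and sparse cases gives the bound $\Otil(1 + l_e \max\{n, n^{1/3}m^{1/2}\})$, where the additive $1$ accounts for the always-executed first step.

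The main obstacle I anticipate is step two — rigorously arguing that reaching the level-$i$ computation requires $r$ to fall in the \emph{small} interval at level $i$ (not merely some interval at some level), so that the reach-probability is $\Otil(\epsilon(i) l_e)$ rather than something larger. This needs the nesting of the confidence intervals implied by Lemma~\ref{lem:approx_leverage} together with the monotonicity $\epsilon(i) \le \epsilon(k)$ for $i \le k$, plus care about the fact that after the threshold index $t_1$ the algorithm (per the text's remark) recomputes against $G_0$ directly rather than continuing through intermediate $G_i$'s — so the ``reach level $i$'' event for $i$ in the threshold loop is governed by the interval at the previous loop iteration, and I would need to state this clean-up precisely. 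The rest is the geometric-sum bookkeeping above, which is routine once the probabilities are pinned down.
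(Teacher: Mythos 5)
Your overall strategy is the paper's: charge each level $i$ its solver cost weighted by the probability $\Otil(\eps(i)l_e)$ that the random $r$ forces a recomputation, add the $G_0$ term, and sum the geometric series from $i=t_1$ upward. However, two pieces of your bookkeeping are wrong, and the first one actually breaks the bound. You take the solver cost at level $i$ to be $\Otil((n/2^i)^2 + n2^{-i}\eps(i)^{-2})$. The second term is both unnecessary and harmful: a graph on $n/2^i$ vertices has at most $(n/2^i)^2$ distinct edges after merging parallel edges, so the correct per-level cost is just $\Otil((n/2^i)^2)$, which is what the paper uses. In the dense case $m>n^{4/3}$ one has $n2^{-i}\eps(i)^{-2} = n^{4/3}m^{1/2}4^{-i}\log^4 n \geq (n/2^i)^2$, so your extra term dominates, and its contribution at $i=t_1$ is $\Otil(l_e\, n^{-1/3}m)$ as you compute --- but $n^{-1/3}m \leq n^{1/3}m^{1/2}$ holds iff $m\leq n^{4/3}$, i.e.\ exactly \emph{not} in the dense case; your check of this inequality is reversed. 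With your accounting the dense-case bound degrades to $\Otil(l_e\, n^{-1/3}m)$ (e.g.\ $l_e\, n^{5/3}$ for $m=n^2$, versus the claimed $l_e\, n^{4/3}$), which would destroy the final runtime. Dropping the superfluous term repairs this.

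The second issue is the $G_0$ recomputation: you charge it probability $\Otil(l_e)$ times cost $\Otil(m)$, and then try to argue $m$ is absorbed into $\max\setof{n,n^{1/3}m^{1/2}}$, which is false (take $m=n^{1.2}$ in the sparse case). The recomputation in $G_0$ is only reached when the estimate at level $t_1$ fails, an event of probability $\Otil(\eps(t_1)l_e)$, so this term is $\Otil(\eps(t_1)l_e m)$, which equals $\Otil(l_e n^{1/3}m^{1/2})$ when $m>n^{4/3}$ and $\Otil(l_e m^{3/4}) = \Otil(l_e n)$ otherwise. Relatedly, your claim that the $\eps(i)l_e(n/2^i)^2$ terms contribute only ``$\Otil(l_e n^2)$-type terms that are absorbed'' is not right as stated ($l_e n^2$ exceeds the claimed bound); that sum is in fact the dominant contribution and evaluates, at $i=t_1$, to the same $\Otil(\eps(t_1)l_e m)$ quantity. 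Your concern about whether reaching level $i$ is governed by the interval at level $i$ or at the preceding level is immaterial up to constants, since $\eps(i+1)=\sqrt{2}\,\eps(i)$.
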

%\todo{I don't think the $\epsilon(x)$ notation will work for this proof. Suppose the number of vertices drops drastically between subgraphs, specifically, $G_i$ had $n/ \log{n}$ vertices, then a ton of contraction occured on a recursive call for one of its subgraphs, and so the size of one of its subsequent subgraphs is like $(m/n)^{2/3}$, so our $epsilon$ will jump from $(m/n)^{-2/3}$ to $\log^{-2}{n}$. Then when we work our way back up, we are going to have to calculate effective resistance on the original graph with probability $\log^{-2}{n}$, which is way too much. We could probably prove that this jump is unlikely, but it would probably easier to just change the $\epsilon$ function}
\begin{proof}
It takes $O(1)$ time to compute the  leverage score at a leaf of the
recursion tree. The routine $\sample$ successively climbs up the
recursion tree to compute the leverage score if the leverage score
estimation at the current level is not sufficient. The probability
that the outcome of $r$ is  such that we cannot make a decision at
level $i$ is $\Otil(\epsilon(i) l_e).$ 
\todolow{could elaborate}
% \todo{should the above be a sum?}

The time required to compute the leverage score of edge $e$ in the
graph at a node at depth $i$ in the recursion tree is
$\Otil((n/2^i)^2).$

Finally, with probability 
% $ \left( \frac{m}{n} \right)^{-2/3},$ 
% $2^{2 t_1} = \frac{n^2}{m}$.
 $\Otil(\epsilon(t_1) l_e)$
we need to compute the leverage score in the input graph and the
expected running time is $\Otil(m).$
Therefore, when $m > n^{4/3}$ and $\eps(i) = 2^{i/2} n^{-1/6} m^{-1/4}
\log^{-2} n $,
the total expected running time is 
% =
% \Otil\left(l_e \max\setof{ n^{-1/2} , n^{-1/6} m^{-1/4} }
%   \frac{n^{1/2}}{m^{1/4}} \right)$
\begin{align*}
\Otil \left(
1
+
m \epsilon(t_1) l_e + l_e \sum_{i=t_1}^{i=\log n} \epsilon(i)
  \frac{n^2}{4^i}
\right)
&=
\Otil \left(
1
+
l_e
m
n^{-1/6} m^{-1/4}
n^{1/2} m^{-1/4}
% +
% l_e
% \frac{n^2}{m}
% stuff
\right)
\\
&= 
\Otil \left(1 + l_e n^{1/3}m^{1/2}  \right)
.
\end{align*}
When $m \leq n^{4/3}$ and $\eps(i) = 2^{i/2} n^{-1/2}
\log^{-2} n $,
the total expected running time is 
\begin{align*}
\Otil \left(
1
+
m \epsilon(t_1) l_e + l_e \sum_{i=t_1}^{i=\log n} \epsilon(i)
  \frac{n^2}{4^i}
\right)
&=
\Otil \left(
1
+
l_e
m
n^{-1/2}
n^{1/2} m^{-1/4}
% +
% l_e
% \frac{n^2}{m}
% stuff
\right)
\\
&= 
\Otil \left(1 + l_e m^{3/4}  \right)
\\
&= 
\Otil \left(1 + l_e n  \right)
.
\end{align*}
Note that $n^{1/3}m^{1/2} \geq n$ if and only if $m \geq n^{4/3}$, so
we can summarize this as the expected running time being bounded by 
$\Otil \left(1 + l_e \max\setof{n, n^{1/3}m^{1/2}}  \right)$.

% $$ \Otil \left( \left( \frac{m}{n} \right)^{-2/3} m l_e + l_e \sum_{i=t_1}^{i=t_1 + t_2} \epsilon(i)  \frac{n^2}{4^i}  \right) = 
% \Otil \left(1 + l_e n^{2/3}m^{1/3}  \right).$$  

%
%is coin fails with probability at most $\Otil( l_e).$ and we need a better estimate by calculating effective resistance on the subgraph containing that edge at level $j + k$. There are $n/2^{k+j} = n/2^k2^j = \Otil((m/n)^{2/3})$ vertices at that level, so there are $\Otil((m/n)^{4/3})$ edges. Each subsequent toss fails with probability $\Otil(\epsilon_{j+k-i} l_e)$ and effective resistance is calculated at level $j + k -i$, that has $\Otil(2^i(m/n)^{2/3})$ vertices and $\Otil(4^i(m/n)^{4/3})$ edges. Finally, the flip at level $k$ fails with probability $\Otil(\epsilon l_e)$ and we calculate effective resistance in the original graph. This gives a full expected running time of 	\[\Otil(1) + \sum_{i = 0}^j \Otil(\epsilon_{j + k -i} l_e 4^i (m/n)^{4/3}) + \Otil(\epsilon l_e m)   \] substituting $\epsilon_{j+k-i} = \tilde{\epsilon}/4^i$ gives
%
%\[\Otil(1) + \Otil(\tilde{\epsilon} l_e (m/n)^{4/3}) + \Otil(\epsilon l_e m)   \]
%
%and plugging in for $\epsilon$ gives the desired runtime.	
	
\end{proof}

We now want to give the runtime cost over all edges. Let us label the edges $ e_1,...,e_m $ in the order in which the decisions are made on them. In the following, when we talk about leverage score $l_{e_i}$ of an edge $e_i,$ we mean the leverage score of the edge $e_i$ in the graph obtained by updating $G$ based on the decisions made on $e_1,...,e_{i-1}.$ 

	\begin{lemma}
		Let $e_i$ be the first edge sampled to be in the tree, and  $X = l_{e_1} + l_{e_2} + l_{e_3} + ... + l_{e_i}$ be a random variable.  Then, $$ \Pr (X > C) \leq e^{-C}.$$ 
		
	\end{lemma}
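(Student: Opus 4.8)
I would read the process as an adaptively biased sequence of coin flips and build an exponential supermartingale around it. Order the edges $e_1,e_2,\dots$ in the order the algorithm considers them, and let $\mathcal{F}_j$ be the $\sigma$-field generated by the accept/reject decisions on $e_1,\dots,e_j$. By Fact~\ref{fact:levprob} and the correctness of the routine $\sample$, edge $e_j$ is placed in the tree with conditional probability exactly $l_{e_j}$ given $\mathcal{F}_{j-1}$, where the conditional leverage score $l_{e_j}$ is $\mathcal{F}_{j-1}$-measurable and lies in $[0,1]$. Let $\tau$ be the index of the first accepted edge; since the algorithm terminates with a spanning tree, $\tau<\infty$ a.s.\ (indeed $\tau\le m$, as eventually every surviving edge is a bridge of leverage score $1$). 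Writing $S_k=\sum_{j=1}^k l_{e_j}$, the random variable in the statement is $X=S_\tau$.

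The key object is
\[
Z_k \;=\; \ind{\tau > k}\, e^{S_k},\qquad Z_0 = 1 .
\]
First I would check that $(Z_k)$ is a nonnegative supermartingale with respect to $(\mathcal{F}_k)$. Conditioning on $\mathcal{F}_{k-1}$: on $\{\tau>k-1\}$ the event $\{\tau>k\}$ is exactly $\{e_k\text{ rejected}\}$, of conditional probability $1-l_{e_k}$, so
\[
\expec{}{Z_k\mid \mathcal{F}_{k-1}} \;=\; \ind{\tau>k-1}\, e^{S_{k-1}}\cdot e^{l_{e_k}}(1-l_{e_k}) \;\le\; \ind{\tau>k-1}\, e^{S_{k-1}} \;=\; Z_{k-1},
\]
using $e^{x}(1-x)\le 1$ for $x\in[0,1]$ (equivalently $1-x\le e^{-x}$).

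Then I would introduce the threshold time $T=\min\{k\ge 0: S_k>C\}$. Since $l_{e_k}$, hence $S_k$, is $\mathcal{F}_{k-1}$-measurable, both $T$ and $T-1$ are stopping times. As the $l_{e_j}$ are nonnegative, $S_k$ is nondecreasing, so $X=S_\tau>C$ holds iff $\tau\ge T$, i.e.\ iff $\tau>T-1$. Applying the optional stopping inequality for the nonnegative supermartingale $(Z_k)$ at the stopping time $T-1$ — rigorously at $(T-1)\wedge n$ and then letting $n\to\infty$, using Fatou together with $Z_k\to 0$ a.s.\ (since $\tau<\infty$) — gives $\expec{}{Z_{T-1}\ind{T<\infty}}\le Z_0=1$. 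On $\{X>C\}$ we have $T<\infty$ and $\tau>T-1$, so $Z_{T-1}=e^{S_{T-1}}$; and by definition of $T$, $S_{T-1}\le C< S_T=S_{T-1}+l_{e_T}\le S_{T-1}+1$, whence $S_{T-1}>C-1$. Since $\{X>C\}\subseteq\{T<\infty\}$, combining,
\[
e^{C-1}\cdot\prob{}{X>C}\;\le\;\expec{}{Z_{T-1}\,\ind{X>C}}\;\le\;1 ,
\]
which gives the exponential tail $\prob{}{X>C}\le e^{1-C}$, matching the claimed bound up to the absolute constant in front (and this is all that is used in the runtime analysis; one may also simply apply the lemma with $C$ replaced by $C+1$).

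The main obstacle is the crossing of the threshold: the last leverage score $l_{e_T}$ can be as large as $1$ (it is $1$ precisely for bridge edges), which is where the unit shift in the exponent comes from, so one cannot entirely avoid a constant-factor loss this way; making the optional-stopping step clean in the presence of a possibly-infinite stopping time is the other point to be careful about, handled by the truncation-plus-Fatou argument above.
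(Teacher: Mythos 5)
Your proof is correct, but it takes a genuinely different and more careful route than the paper's. The paper argues in one line: treating the $p_j=l_{e_j}$ as given with $\sum_j p_j\ge C$, the probability that the preceding edges are all rejected is $\prod_j(1-p_j)\le(1-C/i)^i\le e^{-C}$. This is essentially $1-x\le e^{-x}$ applied termwise, and it silently treats the conditional leverage scores as a fixed sequence even though each $l_{e_j}$ is itself a random variable determined by the earlier accept/reject outcomes. Your exponential supermartingale $Z_k=\ind{\tau>k}e^{S_k}$ with optional stopping at the threshold time $T$ is precisely the rigorous version of that same inequality ($e^{x}(1-x)\le1$ is the supermartingale step), and it handles the adaptivity and the stopping structure that the paper elides. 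The price is the factor $e$: you prove $\Pr(X>C)\le e^{1-C}$ rather than $e^{-C}$, because the final edge $e_T$ crossing the threshold can contribute leverage score up to $1$. Note, however, that the paper's own proof has the same boundary defect read literally --- the sum $X$ includes $l_{e_i}$ for the \emph{accepted} edge, while the product of rejection probabilities can only legitimately range over the rejected edges $e_1,\dots,e_{i-1}$, so the honest conclusion of the paper's argument is also $e^{1-C}$ --- and the constant is irrelevant to how the lemma is used (it only feeds into $\E[X]=O(1)$ and $X=O(\log n)$ w.h.p.). So your argument is sound and arguably more trustworthy than the original; the only thing to flag is that as written it establishes the lemma with $e^{-C}$ replaced by $e^{1-C}$, which you correctly acknowledge.
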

	
	\begin{proof}
		Let $p_j = l_{e_j}$, we have $0 \leq p_j \leq 1$. If $\sum_j p_j \geq C$, then the probability that the edges $e_1,...,e_{i-1}$ is deleted is 
$$\prod_{j=1}^i (1 - p_j) \leq \left( 1 - \frac{C}{i} \right)^i \leq e^{-C}.$$ 
	\end{proof}

%\begin{corollary}
We thus have $E(X) = O(1)$, and also, with probability at least $ 1 - 1/\text{poly}(n)$ we have $X = O(\log n).$ 
%\end{corollary}
%\begin{proof}
%	Lemma 6.8 implies $E(X) \leq \int_{0}^{\infty} Ce^{-C} dC = 1$. The second part follows immediately from plugging in $C = O(\log{n})$
%\end{proof}
Applying this iteratively until $n-1$ edges are sampled to be in the tree, we have that the expected sum of conditional leverage scores is $O(n)$, and is $O(n \log n)$ with probability $ 1 - 1/\text{poly}(n).$

\begin{corollary}
	The total expected work done over all the leaves of the
        recursion tree is $\Otil(\max\setof{n^{4/3}m^{1/2},n^{2}}).$
% $\Otil(n^{5/3}m^{1/3}).$
\end{corollary}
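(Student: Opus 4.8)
The plan is to charge the leaf work edge by edge, using Lemma~\ref{lem:workedgesample} for the cost of a single $\sample$ call together with the bound on the expected sum of conditional leverage scores proved just above; the only care needed is that the conditional leverage score $l_{e_i}$ of an edge is itself random, so the two facts must be combined via the tower property.

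First I would note that all work at a leaf of the recursion tree comes either from the $O(1)$ base-case overhead (an $\exact$ call on a single vertex, or the bookkeeping of an $\across$ call on a single pair), or from the call to $\sample$ made at an $\across$ base case. By Lemma~\ref{lem:nodeCount} the recursion tree has depth $\log n$ and at most $\sum_{i=0}^{\log n}(4^{i+1}-2^i) = O(n^2)$ nodes, hence $O(n^2)$ leaves, so the base-case overhead totals $\Otil(n^2)$. Now order the edges $e_1,\dots,e_m$ in the order decisions are made on them, let $l_{e_i}$ be the leverage score of $e_i$ in the graph obtained from $G$ by applying the decisions for $e_1,\dots,e_{i-1}$, and let $W_i$ be the (random) running time of $\sample$ on $e_i$, with $W_i = l_{e_i} = 0$ if $e_i$ has already been contracted away. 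Conditioned on the history of decisions through $e_{i-1}$, the quantity $l_{e_i}$ is determined, and Lemma~\ref{lem:workedgesample} gives $\E[W_i \mid \text{history}] = \Otil(1 + l_{e_i}\max\{n, n^{1/3}m^{1/2}\})$; taking expectations over the history, $\E[W_i] = \Otil(1 + \E[l_{e_i}]\max\{n, n^{1/3}m^{1/2}\})$.

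Next I would sum over $i \le m$ and apply linearity of expectation: the total expected $\sample$ cost is $\Otil(m + \max\{n, n^{1/3}m^{1/2}\}\cdot \E[\sum_i l_{e_i}])$. The discussion preceding the corollary shows, by iterating the geometric-decay lemma once for each of the $n-1$ tree edges, that $\E[\sum_i l_{e_i}] = O(n)$. Substituting, this is $\Otil(m + n\max\{n, n^{1/3}m^{1/2}\}) = \Otil(m + \max\{n^2, n^{4/3}m^{1/2}\})$. Using $m = O(n^2)$ (we may assume the graph simple, replacing parallel edges by a single weighted edge without changing the Laplacian), the $m$ term is absorbed, and adding the $\Otil(n^2)$ base-case overhead gives the claimed $\Otil(\max\{n^{4/3}m^{1/2}, n^2\})$.

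The step I expect to be the main (if modest) obstacle is the conditioning argument in the second paragraph: Lemma~\ref{lem:workedgesample} is stated as the expected cost of $\sample$ given the current graph, whereas $l_{e_i}$ fluctuates with the random decisions made on earlier edges, so one must set up the filtration generated by those decisions and invoke $\E[\sum_i l_{e_i}] = O(n)$ in expectation (the high-probability bound $O(n\log n)$ would also work, at the cost of one more log factor). Everything after that --- the error schedule $\epsilon(\cdot)$, the threshold $t_1$, and the resulting per-edge cost --- is already packaged inside Lemma~\ref{lem:workedgesample}, so the remainder is a routine summation.
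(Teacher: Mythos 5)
Your proposal is correct and follows essentially the same route as the paper: charge the leaf work edge by edge via Lemma~\ref{lem:workedgesample} and plug in the bound on the sum of conditional leverage scores established just before the corollary. The only difference is bookkeeping --- you use the tower property with $\expec{}{\sum_i l_{e_i}} = O(n)$ directly, whereas the paper conditions on the high-probability event $\sum_e l_e = O(n\log n)$ and absorbs the failure event by noting the leaf work is $\poly(n)$ in the worst case; both are valid, and your explicit treatment of the $O(n^2)$ base-case overhead and the $+1$ terms is a welcome extra.
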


\begin{proof}
This immediately follows from Lemma~\ref{lem:workedgesample} by plugging in $\sum_e l_e = O(n \log n),$ which holds with probability at least $1- 1/\text{poly}(n),$ and observing that the work done at the leaves is $\text{poly}(n)$ in the worst case.

\end{proof}

%%% Local Variables:
%%% mode: latex
%%% TeX-master: "main"
%%% End:

\section{Schur Complement Approximation}
\label{sec:schurapx}

In this section, we give an algorithm for spectral approximation of
the Schur complement of a Laplacian matrix.
Our approach closely
 follows that in \cite{KyngS16}, with the main 
distinction being: We show that if their algorithm is used to
eliminate only part of the original set of vertices, then the
remaining matrix is a good 	spectral approximation of the Schur
complement. We also combined their algorithm with additional
leverage score estimation and sparsification to produce a sparser
output.

\subsection{Preliminaries}
This subsection is mostly replicated from \cite{KyngS16} for the sake of
completeness.
We start by introducing Cholesky factorizations and Schur complements.
Conventionally, these matrix operations are understood in terms of
factorizations into lower triangular matrices.
We will instead present an an equivalent view
where the Schur complement is obtained by iteratively subtracting rank
one terms from a matrix.
Let $\LL$ be the Laplacian of a connected graph.
Let $\LL(:,i)$ denote the $i^{\text{th}}$ column of $\LL$.
\[ \SS^{(1)} \defeq \LL - \frac{1}{\LL(1,1)}\LL(:,1) \LL(:,1)^{\top},\]
is called the \emph{Schur complement} of $\LL$ with respect to vertex
$1$. $\SS^{(1)}$ are identically 0, and thus this is effectively a system in
the remaining $n-1$ indices.

More generally, we can compute the Schur complement w.r.t. any single
vertex (row and column index) of $\LL$. 
Suppose we want the Schur complement w.r.t. vertex $v_{1}$.
Letting
$\alpha_{1} \defeq \LL(v_{1},v_{1}), \cc_{1} \defeq
\frac{1}{\alpha_{1}} \LL(:,v_{1}),$ we have $\LL = \SS^{(1)} + \alpha_{1}
\cc_{1} \cc_{1}^{\top}.$

We can also perform a sequence of eliminations, where in the $i^{th}$
step, we select a vertex
$v_{i} \in V\setminus \setof{v_{1}, \ldots, v_{i-1}}$ and eliminate
the vertex $v_{i}.$ We define
%row and column of $S^{(k-1)}$, obtaining
\begin{align*}
  \alpha_{i} &= \SS^{(i-1)}(v_{i},v_{i}) \\
  \cc_{i} &= \frac{1}{\alpha_{i}} \SS^{(i-1)}(:,v_{i}) \\
  \SS^{(i)} &= \SS^{(i-1)} - \alpha_{i} \cc_{i}\cc_{i}^{\top}.
\end{align*}
If at some step $i$, 
$\SS^{(i-1)}(v_{i},v_{i}) = 0$,
then we define $\alpha_{i} = 0$,
and $\cc_{i} = 0$. However, when the original matrix is the Laplacian
of a connected graph, it can be shown that every choice of $v_{i}$
gives a non-zero $\alpha_{i}$, and that the resulting matrix $\SS^{(i)}$  is always
the Laplacian of a connected graph.

While it does not follow immediately from the above, it is a
well-known fact that the Schur complement $\SS^{(i)}$ w.r.t. a sequence
of variables $v_{1}, \ldots, v_{i}$ does not depend on the order in
which the vertices are eliminated (but the $\cc_{i}$ and $\alpha_{i}$
do depend on the order).
Consequently it makes sense to define $\SS^{(i)}$ as the Schur complement
w.r.t. elimination of the \emph{set} of vertices $\setof{v_{1},
  \ldots, v_{i}}$ (see Fact~\ref{fac:elimOrderEquiv}).

Suppose we eliminate a sequence of vertices $v_{1}, \ldots, v_{j}$
Let $\matlow$ be the 
%If we define the following 
$n \times j$ matrix
with $\cc_{i}$ as its $i^{\textrm{th}}$ column, 
% \begin{align*}
% \mathcal{C} =
% \begin{pmatrix}
%   \cc_{1} &  \cc_{2} & \ldots & \cc_{n} 
% \end{pmatrix} 
% ,
% \end{align*}
and $\calD$ be the $n \times j$ diagonal matrix 
$\calD(i,i) = \alpha_{i}$, then
% \begin{align}
% \label{eq:schursum}
\[
\LL = \SS^{(j)} + \sum_{i=1}^{j} \alpha_{i} \cc_{i} \cc_{i}^{\top} 
=\SS^{(j)} + \matlow \calD \matlow^{\top}.
\]
% \end{align}
% Define the permutation matrix $P$ by $P \ee_{i} = \ee_{v_{i}}.$
% Letting $\matlow = P^{\top}\matlow,$ we have
% $L = 
%  P \matlow \calD \matlow^{\top} P^{\top}.$
This decomposition is known a partial Cholesky factorization.
% Crucially, 
% % \[
% % \matlow = P^{\top}\matlow = \begin{pmatrix}
% %   P^{\top}\cc_{1} &  P^{\top}\cc_{2} & \ldots & P^{\top}\cc_{n} 
% % \end{pmatrix}
% % \]
% $\matlow$ is lower triangular, since
% $\matlow(i,j) = (P^{\top}\cc_{j})(i) = \cc_{j}(v_{i}),$ and for
% $i < j,$ we have $\cc_{j}(v_{i}) = 0$.
Let us write $F = \setof{v_{1},\ldots,v_{j}}$, and $C = V-F$.
We can then write
$\matlow = \begin{pmatrix} \matlow_{FF} \\ \matlow_{CF} \end{pmatrix} $.
If we abuse notation and also identify $\SS^{(j)}$ if the matrix
restricted to its non-zero support $C$,
then we can also write
\begin{align}
  \LL =
  \begin{pmatrix}
    \matlow_{FF} & \matzero \\
    \matlow_{CF} & \II_{CC}
  \end{pmatrix}
  \begin{pmatrix}
  \calD & \matzero \\
  \matzero & \SS^{(j)}
  \end{pmatrix}
  \begin{pmatrix}
    \matlow_{FF} & \matzero \\
    \matlow_{CF} & \II_{CC}
  \end{pmatrix}^{\top}
\end{align}

\paragraph{Clique Structure of the Schur Complement.}
Given a Laplacian $\LL$,
let $\vstar{\LL}{v} \in \rea^{n \times n}$ denote the Laplacian corresponding to the 
edges incident on vertex $v$, i.e.
\begin{equation}
\label{eq:starDef}
\vstar{\LL}{v} \defeq \sum_{e \in E : e \ni v} w(e) \bb_{e} \bb_{e}^{\top}
.
\end{equation}
For example, we denote the first column of $\LL$ by
$
\begin{pmatrix}
  d \\
  -\aa 
\end{pmatrix}
,$
then
$
\vstar{\LL}{1} =
\begin{bmatrix}
d & -\aa^{\top} \\
-\aa & \diagop(\aa)
\end{bmatrix}
.
$
We can write the Schur complement 
$\SS^{(1)}$ w.r.t. a vertex $v_{1}$ as
$\SS^{(1)} = \LL-\vstar{\LL}{v_{1}} +\vstar{\LL}{v_{1}}  - \frac{1}{\LL(v_{1},v_{1})}\LL(:,v_{1})\LL(:,v_{1})^{\top}.$
It is immediate that $L-\vstar{L}{v_{1}} $ is a Laplacian matrix,
since $\LL-\vstar{\LL}{v_{1}}  = \sum_{e \in E : e \not\ni v_{1}} w(e) \bb_{e}
\bb_{e}^{\top}$.
A more surprising (but well-known) fact is that 
\begin{align}
\label{eq:Cv-def}
C_{v_{1}}(\LL) \defeq \vstar{\LL}{v_{1}}  - \frac{1}{\LL(v_{1}, v_{1})}\LL(:, v_{1}) \LL(:, v_{1})^{\top}
\end{align}
is also a Laplacian, and its edges form a clique on the neighbors of $v_{1}$.
It suffices to show it for $v_{1} = 1.$ We write $i \sim j$ to denote $(i,j) \in E.$
Then
\begin{equation*}
\label{eq:cliquestructure}
C_{1}(\LL)
=
\LL_{1} - \frac{1}{\LL(1,1)}\LL(:,1) \LL(:,1)^{\top}
=
\begin{bmatrix}
\matzero & \veczero^{\top} \\
\veczero & \diagop(\aa)- \frac{\aa\aa^{\top} }{d}
\end{bmatrix}
= 
\sum_{i \sim 1} \sum_{j \sim 1} 
\frac{w(1,i) w(1,j)} 
{d}
\bb_{(i,j)} \bb_{(i,j)}^{\top}
.
\end{equation*}
Thus
% $S^{(1)} = L-\vstar{L}{v} +\vstar{L}{v}  -
% \frac{1}{L(v,v)}L(:,v)L(:,v)^{\top}$
$\SS^{(1)}$ is a Laplacian since it is a sum of two Laplacians.
% $L-\vstar{L}{v} $ and
% $\vstar{L}{v} - \frac{1}{L(v,v)}L(:,v)L(:,v)^{\top}$.  
By induction, for all $k,$ $\SS^{(k)}$ is a Laplacian.
Thus: 
\begin{fact}
  The Schur complement of a Laplacian w.r.t. vertices $v_{1}, \ldots,
  v_{k}$ is a Laplacian.
\end{fact}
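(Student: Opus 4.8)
The plan is to prove the claim by induction on the number $k$ of eliminated vertices, reducing everything to the single-vertex elimination step already analyzed above. It therefore suffices to show: if $\MM$ is the Laplacian of a connected graph on a vertex set $W$ and $v \in W$, then the one-step Schur complement $\SS^{(1)}$ of $\MM$ w.r.t. $v$ is again the Laplacian of a connected graph (on $W \setminus \{v\}$). Granting this, the general statement follows by applying it $k$ times along the elimination sequence $v_1,\dots,v_k$ — $\SS^{(0)} = \LL$ is a connected-graph Laplacian, and each $\SS^{(i)}$ is the one-step Schur complement of the connected-graph Laplacian $\SS^{(i-1)}$ w.r.t. $v_i$ — together with the earlier-stated fact that the result does not depend on the elimination order.

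For the single-step claim I would decompose
\[
\SS^{(1)} \;=\; \bigl(\MM - \vstar{\MM}{v}\bigr) \;+\; \Bigl(\vstar{\MM}{v} - \tfrac{1}{\MM(v,v)}\,\MM(:,v)\MM(:,v)^{\top}\Bigr).
\]
The first summand equals $\sum_{e \in E:\, e \not\ni v} w(e)\,\bb_e\bb_e^{\top}$ and is therefore manifestly a Laplacian. For the second summand, relabel so that $v = 1$ and write the first column of $\MM$ as $(d,-\aa^{\top})^{\top}$, where $d = \MM(1,1) > 0$ since a connected graph has no isolated vertex, so the pivot is well-defined; a direct computation, exactly as carried out in the excerpt, gives
\[
\vstar{\MM}{1} - \tfrac{1}{d}\,\MM(:,1)\MM(:,1)^{\top}
\;=\;
\begin{bmatrix}\matzero & \veczero^{\top}\\ \veczero & \diagop(\aa) - \tfrac{1}{d}\aa\aa^{\top}\end{bmatrix}
\;=\;
\sum_{i \sim 1}\sum_{j \sim 1}\frac{w(1,i)\,w(1,j)}{d}\,\bb_{(i,j)}\bb_{(i,j)}^{\top},
\]
which exhibits it as a nonnegative combination of edge Laplacians — a weighted clique on the neighbors of $1$ — hence a Laplacian. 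Since a sum of two Laplacians is a Laplacian, $\SS^{(1)}$ is a Laplacian; and since the pivot is nonzero, $\SS^{(1)}$ is connected, as recalled in the text.

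I expect no genuine obstacle: the only computation with any content is the clique-structure identity above — equivalently, checking that $\diagop(\aa) - \tfrac{1}{d}\aa\aa^{\top}$ has nonpositive off-diagonal entries and vanishing row sums — which is the standard manipulation already spelled out in the excerpt. (If the statement is wanted for disconnected graphs as well, one simply applies the argument within each connected component, since the Laplacian and all of its Schur complements respect the block structure.)
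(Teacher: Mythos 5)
Your proof is correct and follows essentially the same route as the paper: decompose the one-step Schur complement as $(\LL - \vstar{\LL}{v}) + C_v(\LL)$, verify that the clique term is a nonnegative combination of edge Laplacians on the neighbors of $v$, and conclude by induction over the elimination sequence. The only additions are minor points the paper leaves implicit (positivity of the pivot and preservation of connectivity), which are consistent with the surrounding text.
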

\subsection{Further Properties of the Schur Complement and Other Factorizations}

Consider a general PSD matrix of the form
\begin{align}
  \MM =
  \begin{pmatrix}
    \AA & \matzero \\
    \BB & \II
  \end{pmatrix}
  \begin{pmatrix}
  \RR & \matzero \\
  \matzero & \TT
  \end{pmatrix}
  \begin{pmatrix}
    \AA & \matzero \\
    \BB & \II
  \end{pmatrix}^{\top}
\end{align}
where $\AA$ is invertible and $\II$ is the identity matrix on a subset
of the indices of $\MM$.
It is easy to show the following well-known fact:
\begin{fact}
\label{fac:pinvproduct}
  Suppose $\XX$ is a non-singular matrix and $\AA$ is a symmetric
  matrix, and $\PP$ is the orthogonal projection to the complement of
  the null space of $\XX \AA \XX^{\top}$.
Then $(\XX \AA \XX^{\top})^{+} = P\XX^{-1} \AA^{+} \XX^{-\top}P$.
\end{fact}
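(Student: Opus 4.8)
Before proving this I should flag a transpose issue: as literally written the identity only holds when $\XX=\XX^{\top}$, since already in the nonsingular case $\AA^{+}=\AA^{-1}$ gives $(\XX\AA\XX^{\top})^{-1}=\XX^{-\top}\AA^{-1}\XX^{-1}$, not $\XX^{-1}\AA^{-1}\XX^{-\top}$. So the plan is to prove the version $(\XX\AA\XX^{\top})^{+}=\PP\,\XX^{-\top}\AA^{+}\XX^{-1}\,\PP$ (the one actually used in the sequel); the worded statement is then its specialization to symmetric $\XX$. The first step is to replace $\AA$ by a compact factorization: by the spectral theorem write the symmetric matrix $\AA=\BB\GG\BB^{\top}$, where the columns of $\BB$ form an orthonormal basis of $\operatorname{range}(\AA)$ (so $\BB^{\top}\BB=\II$) and $\GG$ is invertible (and symmetric); then $\AA^{+}=\BB\GG^{-1}\BB^{\top}$. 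Setting $\CC=\XX\BB$, which is full column rank because $\XX$ is nonsingular, we get $\MM:=\XX\AA\XX^{\top}=\CC\GG\CC^{\top}$.

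Next I would record the pseudoinverse of a matrix of this shape: for $\CC$ full column rank and $\GG$ invertible symmetric,
\[
(\CC\GG\CC^{\top})^{+}=\CC(\CC^{\top}\CC)^{-1}\GG^{-1}(\CC^{\top}\CC)^{-1}\CC^{\top},
\qquad
\MM\MM^{+}=\MM^{+}\MM=\CC(\CC^{\top}\CC)^{-1}\CC^{\top}.
\]
This is routine: check the four Moore--Penrose conditions, using only that $\CC^{\top}\CC$ and $\GG$ are invertible, so the inner factors telescope ($\CC^{\top}\CC(\CC^{\top}\CC)^{-1}=\II$, $\GG\GG^{-1}=\II$). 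Since $\MM$ is symmetric, $\operatorname{range}(\MM)=(\ker\MM)^{\perp}=\operatorname{range}(\CC)$, and $\CC(\CC^{\top}\CC)^{-1}\CC^{\top}$ is exactly the orthogonal projection onto that subspace; hence it coincides with the matrix $\PP$ in the statement.

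The crux is a single identity: because $\XX$ is invertible, $\XX^{\top}(\XX\XX^{\top})^{-1}\XX=\XX^{\top}\XX^{-\top}\XX^{-1}\XX=\II$, and therefore $\CC^{\top}(\XX\XX^{\top})^{-1}\CC=\BB^{\top}\bigl(\XX^{\top}(\XX\XX^{\top})^{-1}\XX\bigr)\BB=\BB^{\top}\BB=\II$. Substituting $\AA^{+}=\BB\GG^{-1}\BB^{\top}$ and $\BB=\XX^{-1}\CC$ into the right-hand side of the claimed formula turns it into $\PP\,(\XX\XX^{\top})^{-1}\CC\,\GG^{-1}\,\CC^{\top}(\XX\XX^{\top})^{-1}\,\PP$; then, using $\PP=\CC(\CC^{\top}\CC)^{-1}\CC^{\top}$ together with the identity, $\PP(\XX\XX^{\top})^{-1}\CC$ collapses to $\CC(\CC^{\top}\CC)^{-1}$ and $\CC^{\top}(\XX\XX^{\top})^{-1}\PP$ to $(\CC^{\top}\CC)^{-1}\CC^{\top}$, leaving $\CC(\CC^{\top}\CC)^{-1}\GG^{-1}(\CC^{\top}\CC)^{-1}\CC^{\top}=\MM^{+}$, as desired. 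The only real obstacle is the bookkeeping of transposes: it is precisely the placement $\XX^{-\top}\cdots\XX^{-1}$ that makes $\XX^{\top}(\XX\XX^{\top})^{-1}\XX=\II$ applicable, and with the reversed placement one is left with an irremovable factor $\XX^{\top}\XX$. A direct four-condition verification for $\MM=\XX\AA\XX^{\top}$ and $\NN=\PP\XX^{-\top}\AA^{+}\XX^{-1}\PP$ also works but is messier, since the middle term $\XX^{\top}\PP\XX^{-\top}$ is only an oblique projection; the factored route above sidesteps this.
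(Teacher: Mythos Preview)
The paper does not actually prove this fact; it is introduced only with the remark ``It is easy to show the following well-known fact'' and then immediately applied. So there is no paper proof to compare against.

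Your proof is correct, and your observation about the transpose is on point: the identity as printed has $\XX^{-1}$ and $\XX^{-\top}$ swapped, and the paper's very next display (the expression for $\xx^{\top}\MM^{+}\xx$) in fact uses the corrected ordering $\XX^{-\top}\cdots\XX^{-1}$ that you prove. The spectral-factorization route you take---writing $\AA=\BB\GG\BB^{\top}$ with $\BB$ orthonormal and $\GG$ invertible, reducing to the full-column-rank form $\MM=\CC\GG\CC^{\top}$, and then collapsing via $\CC^{\top}(\XX\XX^{\top})^{-1}\CC=\II$---is clean and entirely self-contained; the Moore--Penrose verification for $(\CC\GG\CC^{\top})^{+}$ is routine, and the identification of $\PP$ with $\CC(\CC^{\top}\CC)^{-1}\CC^{\top}$ is the key that makes the final telescoping work. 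Your remark that a direct four-condition check on $\PP\XX^{-\top}\AA^{+}\XX^{-1}\PP$ stumbles on the oblique projection $\XX^{\top}\PP\XX^{-\top}$ is a nice diagnostic of why the factored approach is preferable.
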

Based on Fact~\ref{fac:pinvproduct} for vectors
orthogonal to null space of $\MM$ we have
\begin{align*}
  \xx^{\top} \MM^+ \xx
 &=
\xx^{\top} 
  \begin{pmatrix}
    \AA & \matzero \\
    \BB & \II
  \end{pmatrix}^{-\top}
  \begin{pmatrix}
  \RR^{+} & \matzero \\
  \matzero & \TT^{+} 
  \end{pmatrix}
  \begin{pmatrix}
    \AA & \matzero \\
    \BB & \II
  \end{pmatrix}^{-1}
\xx
\end{align*}
Recall the general formula for blockwise inversion: 
\begin{align*}
\begin{pmatrix}
 \AA & \CC \\
 \BB & \DD 
\end{pmatrix}^{-1}
=
\begin{pmatrix}
\AA^{-1} +\AA^{-1} \CC (\DD-\BB\AA^{-1}\CC)^{-1} \BB \AA^{-1} 
&
-\AA^{-1} \CC (\DD-\BB\AA^{-1}\CC)^{-1} 
\\
- (\DD-\BB\AA^{-1}\CC)^{-1}\BB \AA^{-1}
&
 (\DD-\BB\AA^{-1}\CC)^{-1}   
\end{pmatrix} 
\end{align*}
Thus by applying the formula for blockwise inversion and simplifying,
we get
\begin{align*}
  \begin{pmatrix}
    \AA & \matzero \\
    \BB & \II
  \end{pmatrix}^{-1}
=
\begin{pmatrix}
\AA^{-1}
&
\matzero
\\
-\BB \AA^{-1}
&
\II
\end{pmatrix} 
\end{align*}
So
\begin{align*}
  \xx^{\top} \MM^+ \xx
 &=
\xx^{\top} 
\begin{pmatrix}
\AA^{-1}
&
\matzero
\\
-\BB \AA^{-1}
&
\II
\end{pmatrix}^{\top}
  \begin{pmatrix}
  \RR^{+} & \matzero \\
  \matzero & \TT^{+} 
  \end{pmatrix}
\begin{pmatrix}
\AA^{-1}
&
\matzero
\\
-\BB \AA^{-1}
&
\II
\end{pmatrix} 
\xx
.
\end{align*}
Suppose
$\xx =
\begin{pmatrix}
\matzero \\
\yy
\end{pmatrix}
$, and again $\xx$ is orthogonal to the null space of $\MM$.
Then

\begin{align}
\label{eq:pinvQFschur}
  \xx^{\top} \MM^+ \xx
 &=
\begin{pmatrix}
\matzero \\
\yy
\end{pmatrix}^{\top} 
\begin{pmatrix}
\AA^{-1}
&
\matzero
\\
-\BB \AA^{-1}
&
\II
\end{pmatrix}^{\top}
  \begin{pmatrix}
  \RR^{+} & \matzero \\
  \matzero & \TT^{+} 
  \end{pmatrix}
\begin{pmatrix}
\AA^{-1}
&
\matzero
\\
-\BB \AA^{-1}
&
\II
\end{pmatrix} 
\begin{pmatrix}
\matzero \\
\yy
\end{pmatrix}
\\
& 
\nonumber
=
\yy^{\top} \TT^{+}  \yy
.
\end{align}
Consider a partial Cholesky decomposition of a connected Laplacian
$\LL$ w.r.t. elimination of the sequence of verties $v_{1}, \ldots, v_{j}$, 
where we write $F = \setof{v_{1}, \ldots, v_{j}}$ and $C = V-F$.
Recall that the resulting
Schur complement $\SS$ is another Laplacian.
\begin{align}
  \LL =
  \begin{pmatrix}
    \matlow_{FF} & \matzero \\
    \matlow_{CF} & \II_{CC}
  \end{pmatrix}
  \begin{pmatrix}
  \calD & \matzero \\
  \matzero & \SS
  \end{pmatrix}
  \begin{pmatrix}
    \matlow_{FF} & \matzero \\
    \matlow_{CF} & \II_{CC}
  \end{pmatrix}^{\top}
\end{align}
Note that as $\SS$ is a connected Laplacian on a subset of the
vertices of $\LL$.
\begin{fact}
\label{fac:elimOrderEquiv}
The Schur complement of a connected Laplacian $\LL$ w.r.t. to a sequence  of
vertices $v_{1}, \ldots, v_{j}$ does not depend on the order of
elimination of these vertices. Let $C = V-\setof{v_{1}, \ldots,
  v_{j}}$, then the Schur complement is equivalent to the Schur
complement $\textsc{Schur}(G,
C)$ as stated in Definition~\ref{def:schurcomp}.
\end{fact}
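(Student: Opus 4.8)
The plan is to prove both assertions at once by extracting from the partial Cholesky factorization a closed-form expression for $\SS$ that manifestly refers only to $\LL$ and the partition of the vertices, not to the elimination order. Fix $F = \setof{v_{1},\ldots,v_{j}}$ and write $C = V-F$. First I would record three invertibility facts. Since $\LL$ is a connected Laplacian, every elimination yields $\alpha_{i}\neq 0$ (as noted in the preliminaries), so $\calD$ is an invertible diagonal matrix. Since $\cc_{i}(v_{i})=1$ and $\SS^{(i-1)}$ is supported off $\setof{v_{1},\ldots,v_{i-1}}$, listing the rows and columns of $F$ in elimination order makes $\matlow_{FF}$ unit lower triangular, hence invertible. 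Finally $\LL_{FF}$, being a principal submatrix of the Laplacian of a connected graph on a proper subset of its vertices, is positive definite, hence invertible.

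Next I would multiply out the block factorization
\[
  \LL =
  \begin{pmatrix}
    \matlow_{FF} & \matzero \\
    \matlow_{CF} & \II_{CC}
  \end{pmatrix}
  \begin{pmatrix}
  \calD & \matzero \\
  \matzero & \SS
  \end{pmatrix}
  \begin{pmatrix}
    \matlow_{FF} & \matzero \\
    \matlow_{CF} & \II_{CC}
  \end{pmatrix}^{\top},
\]
block by block, obtaining $\LL_{FF} = \matlow_{FF}\calD\matlow_{FF}^{\top}$, $\LL_{CF} = \matlow_{CF}\calD\matlow_{FF}^{\top}$, and $\LL_{CC} = \matlow_{CF}\calD\matlow_{CF}^{\top} + \SS$. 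Inverting the first gives $\matlow_{FF}^{-\top}\calD^{-1}\matlow_{FF}^{-1} = \LL_{FF}^{-1}$; the second gives $\matlow_{CF} = \LL_{CF}\matlow_{FF}^{-\top}\calD^{-1}$. Substituting these into the third and simplifying (using that $\calD$ is symmetric) yields $\SS = \LL_{CC} - \LL_{CF}\LL_{FF}^{-1}\LL_{FC}$, where $\LL_{FC}=\LL_{CF}^{\top}$.

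This one identity settles both claims. The right-hand side depends only on $\LL$ and on the partition $(F,C)$, so any two elimination orders of $v_{1},\ldots,v_{j}$ produce the same matrix $\SS$; and it is precisely $\textsc{Schur}(\LL,C)$ from Definition~\ref{def:schurcomp} under the identification $\AA = \LL_{CC}$, $\CC = \LL_{FF}$, $\BB = \LL_{CF}$. (As an alternative, purely conceptual route to order-independence alone, one can instead invoke the quadratic-form identity~\eqref{eq:pinvQFschur}: for every $\yy \perp \one_{C}$ one has $\yy^{\top}\SS^{+}\yy = \xx^{\top}\LL^{+}\xx$ with $\xx = (\veczero_{F};\yy)$, and since $\SS$ is a connected Laplacian on $C$ this determines $\SS^{+}$ on $\one_{C}^{\perp}$, hence $\SS$; but this does not by itself identify $\SS$ with Definition~\ref{def:schurcomp}, so I would prefer the algebraic argument above.) The only genuine work lies in the first paragraph — in particular arguing that $\matlow_{FF}$ is unit lower triangular under the correct vertex ordering, which is the one step I would state carefully; once the three invertibility facts are in place, the block algebra is routine.
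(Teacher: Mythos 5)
Your proof is correct, but it takes a genuinely different route from the paper's. You solve the block factorization explicitly: using that $\matlow_{FF}$ is unit lower triangular in the elimination order and $\calD$ is invertible (both consequences of connectivity), you derive the closed form $\SS = \LL_{CC} - \LL_{CF}\LL_{FF}^{-1}\LL_{FC}$, which is manifestly order-independent and is literally the expression in Definition~\ref{def:schurcomp}, so both claims fall out of one identity. The paper instead never inverts $\matlow_{FF}$ explicitly: it takes two factorizations arising from two orderings, uses the pseudoinverse identity of Equation~\eqref{eq:pinvQFschur} to show that $\yy^{\top}\SS^{+}\yy = \xx^{\top}\LL^{+}\xx = \yy^{\top}\SShat^{+}\yy$ for all $\yy \perp \vecone_{C}$, argues via the ranks of the factorizations that both Schur complements have null space exactly $\mathrm{span}(\vecone_{C})$, and concludes $\SS = \SShat$; it then applies the very same quadratic-form argument a third time to the block factorization $\LL = \bigl(\begin{smallmatrix}\II & \matzero\\ \BB^{\top}\AA^{-1} & \II\end{smallmatrix}\bigr)\bigl(\begin{smallmatrix}\AA & \matzero\\ \matzero & \CC-\BB^{\top}\AA^{-1}\BB\end{smallmatrix}\bigr)\bigl(\begin{smallmatrix}\II & \matzero\\ \BB^{\top}\AA^{-1} & \II\end{smallmatrix}\bigr)^{\top}$ to identify $\SS$ with Definition~\ref{def:schurcomp} --- so the ``alternative conceptual route'' you mention as insufficient on its own is in fact completed in the paper by this extra step. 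Your version buys an explicit formula and a shorter argument at the cost of the careful triangularity/invertibility bookkeeping in your first paragraph; the paper's version buys a proof that reuses machinery (Fact~\ref{fac:pinvproduct} and Equation~\eqref{eq:pinvQFschur}) it needs anyway for Theorem~\ref{thm:schurApx} and Fact~\ref{fac:pinvandschur}, at the cost of the null-space argument. Both ultimately rely on connectivity and on the invertibility of the $FF$ blocks.
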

\begin{proof}
Suppose we use two orderings on the variables $v_{1}, \ldots, v_{j}$
to produce factorizations
\begin{align}
  \LL =
  \begin{pmatrix}
    \matlow_{FF} & \matzero \\
    \matlow_{CF} & \II_{CC}
  \end{pmatrix}
  \begin{pmatrix}
  \calD & \matzero \\
  \matzero & \SS
  \end{pmatrix}
  \begin{pmatrix}
    \matlow_{FF} & \matzero \\
    \matlow_{CF} & \II_{CC}
  \end{pmatrix}^{\top}
\label{eq:lapfactor1}
\end{align}
and
\begin{align}
  \LL =
  \begin{pmatrix}
    \matlowhat_{FF} & \matzero \\
    \matlowhat_{CF} & \II_{CC}
  \end{pmatrix}
  \begin{pmatrix}
  \calDhat & \matzero \\
  \matzero & \SShat
  \end{pmatrix}
  \begin{pmatrix}
    \matlowhat_{FF} & \matzero \\
    \matlowhat_{CF} & \II_{CC}
  \end{pmatrix}^{\top}
\label{eq:lapfactor2}
\end{align}
where we use $F = \setof{v_{1}, \ldots, v_{j}}$ and $C = V-F$.
Furthermore, both $\SS$ and $\SShat$ have a null space that is exactly
the span of $\vecone_{C}$. 
We can see this in two steps: Firstly, both
are Laplacian matrices, so their null spaces must include the span of
$\vecone_{C}$. 
Secondly, from the product forms in
Equations~\eqref{eq:lapfactor1} and~\eqref{eq:lapfactor2}, if either
had null space of rank strictly larger than 1, then the rank of $\LL$ would be strictly less than $n-1$, which is false.
Consider $\xx =
\begin{pmatrix}
\matzero \\
\yy
\end{pmatrix}
$, where $\yy$ is orthogonal to $\vecone_{C}$ and hence $\xx$ is
orthogonal to $\vecone$.
By Equation~\eqref{eq:pinvQFschur}, $\xx^{\top}\LL^{+}\xx =
\yy^{\top} \SS^{+}\yy = \yy^{\top} \SShat^{+}\yy $.
Which also implies $\yy^{\top} \SS \yy = \yy^{\top} \SShat \yy $
for all vectors $\yy$ orthogonal to  $\vecone_{C}$.
As $\SS$ and $\SShat$ have the same null space, we then conclude $\SS
= \SShat$.

We can apply the same reasoning to the factorization
\begin{align}
  \LL =
  \begin{pmatrix}
    \AA & \BB \\
    \BB^{\top} & \CC
  \end{pmatrix}
=
  \begin{pmatrix}
    \II & \matzero \\
    \BB^{\top} \AA^{-1} & \II
  \end{pmatrix}
  \begin{pmatrix}
  \AA & \matzero \\
  \matzero & \CC-\BB^{\top} \AA^{-1} \BB
  \end{pmatrix}
  \begin{pmatrix}
    \II & \matzero \\
    \BB^{\top} \AA^{-1} & \II
  \end{pmatrix}
\label{eq:lapfactor3}
\end{align}
and conclude $\SS = \CC-\BB^{\top} \AA^{-1} \BB$, so 
Definition~\ref{def:schurcomp} of the Schur complement is equivalent to the obtained by
a sequence of eliminations.

\end{proof}
From the above proof, it we also immediately get the following fact:
\begin{fact}
\label{fac:pinvandschur}
Consider a connected Laplacian $\LL$ and a subset  $F\subseteq V$ of
its vertices, and let $\SS$ be
the Schur complement of $\LL$ w.r.t. elimination of $F$.
Let $C = V-F$. 
Suppose $\xx = \begin{pmatrix} \xx_{F} \\ \xx_{C} \end{pmatrix}$
is a vector orthogonal to the null space of $\LL$,
and $\xx_{F} =\matzero$.

Then $\xx^{\top}\LL^{+}\xx = \xx_{C}^{\top}\SS^{+}\xx_{C}$.
\end{fact}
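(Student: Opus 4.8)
The plan is to read this off directly from Equation~\eqref{eq:pinvQFschur}, which is the technical heart of the proof of Fact~\ref{fac:elimOrderEquiv} just given. First I would recall that, because $\LL$ is the Laplacian of a connected graph, eliminating the vertices of $F$ one at a time produces only nonzero pivots $\alpha_i$, so the partial Cholesky factorization
\[
  \LL =
  \begin{pmatrix}
    \matlow_{FF} & \matzero \\
    \matlow_{CF} & \II_{CC}
  \end{pmatrix}
  \begin{pmatrix}
  \calD & \matzero \\
  \matzero & \SS
  \end{pmatrix}
  \begin{pmatrix}
    \matlow_{FF} & \matzero \\
    \matlow_{CF} & \II_{CC}
  \end{pmatrix}^{\top}
\]
exists with $\matlow_{FF}$ lower-triangular with unit diagonal (hence invertible) and $\calD$ invertible. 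This is precisely the matrix $\MM$ of the preceding subsection, with $\AA = \matlow_{FF}$, $\BB = \matlow_{CF}$, $\RR = \calD$, and $\TT = \SS$.

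Next I would instantiate Equation~\eqref{eq:pinvQFschur} with $\xx = \begin{pmatrix}\matzero \\ \xx_{C}\end{pmatrix}$. This is legitimate: by hypothesis the $F$-block of $\xx$ vanishes, and $\xx$ is orthogonal to the null space of $\LL$, which is exactly what that derivation requires. The equation then yields $\xx^{\top}\LL^{+}\xx = \xx_{C}^{\top}\SS^{+}\xx_{C}$, which is the asserted identity. Finally, Fact~\ref{fac:elimOrderEquiv} has already identified the $\SS$ appearing in this factorization with $\textsc{Schur}(\LL,C)$ of Definition~\ref{def:schurcomp}, so there is nothing further to check.

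There is essentially no obstacle here: the statement is a corollary of machinery already developed in the two preceding subsections, and the only mild care needed is in noting the invertibility of $\matlow_{FF}$ (so that the block-inversion formula applies) and that the orthogonality-to-null-space hypothesis on $\xx$ is exactly the one under which Equation~\eqref{eq:pinvQFschur} was derived.
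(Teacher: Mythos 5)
Your proposal is correct and matches the paper's own argument: the paper states this fact as an immediate consequence of the proof of Fact~\ref{fac:elimOrderEquiv}, i.e.\ of instantiating Equation~\eqref{eq:pinvQFschur} with the partial Cholesky factorization of $\LL$ and a vector whose $F$-block vanishes, exactly as you do. Your added remarks on the invertibility of $\matlow_{FF}$ are the right hypotheses to check and are satisfied for a connected Laplacian.
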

\subsection{Spectral Aproximation of the Schur Complement}

% \rk{Issue of Laplacian vs multi-graph. Esp need to decide if input
%   notation is ok.}
Theorem~\ref{thm:schurApx}, stated below, characterizes the
performance of our algorithm~{\scElim}.
This algorithm computes a spectral approximation of the Schur
complement of a Laplacian w.r.t elimination of a set of vertices
$F =V-C$.
The algorithm relies on three procedures:
\begin{itemize}
\item
{\levEst}, which
computes approximate leverage scores of all edges in a graph;
 The guarantees of {\levEst} are given in Lemma~\ref{lem:levest}.
\item
{\sparsify}, which sparsifies a graph.
 {\sparsify} is characterized in Lemma~\ref{lem:levest}.
\item
{\textsc{CliqueSample}} which returns a sparse Laplacian matrix
  approximating a clique created by elimination (see \cite{KyngS16},
  Algorithm 2).
\end{itemize}
The pseudocode for {\scElim} is given in Algorithm~\ref{alg:schurapx}.

\schurApx*

Lemma~\ref{lem:levest} stated below follows immediately from using
the Laplacian solver of \cite{KoutisMP11} in the effective resistance
estimation procedure of \cite{SpielmanS11}.

\begin{lemma}
\label{lem:levest}
 Given a connected undirected multi-graph
  $G =(V,E)$, with positive edges weights 
  $w : E \to \rea_{+}$, and associated Laplacian $\LL$,
  and a scalar $0 < \delta < 1$
  the algorithm $\levEst(\LL,\delta)$ returns estimates $\tauEst_e$ for all the edges such that
  with probability $\geq 1-\delta$
	\begin{enumerate}
		\item For each edge $e$, we have
		$\tau_{e} \leq \widehat{\tau}_e \leq 1$ where
		$\tau_{e}$ is the true leverage score of $e$ in $G$.
        \item $ \sum_{e} \tauEst_e
			\leq 2 n.$
	\end{enumerate}
The algorithm runs in time $O(m \log^{2} (n/\delta)
\operatorname{polyloglog}(n) )$.
\end{lemma}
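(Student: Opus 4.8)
The plan is to instantiate $\levEst$ as the Johnson--Lindenstrauss based effective-resistance estimator of \cite{SpielmanS11}, run at a \emph{constant} distortion and with its linear systems solved by the near-linear-time Laplacian solver of \cite{KoutisMP11}, and then to convert the resulting resistance estimates into the required leverage-score estimates by a fixed over-scaling followed by truncation at $1$. Throughout, write $\BB$ for the signed edge-vertex incidence matrix (whose row for edge $e$ is $\bb_e^\top$) and $\WW = \diagop(w)$, so that $\LL = \BB^\top\WW\BB$ and, using $\LL^{+}\LL\LL^{+} = \LL^{+}$, the effective resistance of $e$ equals $R_{eff}(e) = \bb_e^\top\LL^{+}\bb_e = \norm{\WW^{1/2}\BB\LL^{+}\bb_e}^2$.

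First I would fix $\epsilon = 1/3$, draw a random $\pm 1/\sqrt d$ sketching matrix $\boldsymbol{Q} \in \rea^{d\times m}$ with $d = \Theta(\epsilon^{-2}\log(n/\delta)) = \Theta(\log(n/\delta))$, and form the $d\times n$ matrix $\ZZhat$ whose $i$-th row is an approximate solution of $\LL z = \yy_i$, where $\yy_i^\top$ is the $i$-th row of $\boldsymbol{Q}\WW^{1/2}\BB$; these $d$ solves are the only place the solver of \cite{KoutisMP11} is used. Then for every edge $e$ set $\widehat R_e = \norm{\ZZhat\bb_e}^2$ and
\[
\tauEst_e \;=\; \min\!\Big\{\,1,\ \tfrac{1}{1-\epsilon}\,w_e\,\widehat R_e\,\Big\}.
\]
By the analysis of \cite{SpielmanS11} (with the accuracy of the solves taken small enough, as prescribed there, that the solver error does not corrupt the sketch), together with a union bound over the at most $\binom{n}{2}$ vertex pairs, with probability $\ge 1-\delta$ one has $\widehat R_e \in \big[(1-\epsilon)R_{eff}(e),\,(1+\epsilon)R_{eff}(e)\big]$ for all $e$ simultaneously; condition on this event.

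Granting the event, Property~1 is immediate: $\tfrac{1}{1-\epsilon}w_e\widehat R_e \ge w_e R_{eff}(e) = \tau_e$, and since a leverage score is a probability (Fact~\ref{fact:levprob}) it satisfies $\tau_e\le 1$, so $\tauEst_e = \min\{1,\cdot\}\ge\tau_e$, while $\tauEst_e\le 1$ by construction. For Property~2, the choice $\epsilon = 1/3$ gives $\tauEst_e \le \tfrac{1}{1-\epsilon}w_e\widehat R_e \le \tfrac{1+\epsilon}{1-\epsilon}\tau_e = 2\tau_e$, and since $G$ is connected $\sum_e\tau_e = \trace{\LL^{+}\LL} = n-1$, so $\sum_e\tauEst_e \le 2(n-1)\le 2n$. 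For the running time: forming $\boldsymbol{Q}\WW^{1/2}\BB$ and reading off all $\widehat R_e$ from $\ZZhat$ costs $O(md)$, and the dominant term is the $d=\Theta(\log(n/\delta))$ near-linear-time solves of \cite{KoutisMP11}, which together give the claimed $O(m\log^{2}(n/\delta)\operatorname{polyloglog}(n))$. The one step that is not routine bookkeeping is verifying that the \emph{approximate} solves, at the per-solve cost used above, keep every $\widehat R_e$ within a constant factor of the true resistance uniformly over the edges; this error-propagation argument is exactly what the combination of \cite{SpielmanS11} and \cite{KoutisMP11} supplies, so I would simply invoke it rather than reprove it.
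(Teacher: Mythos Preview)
Your proposal is correct and follows exactly the approach the paper indicates: the paper states without further detail that the lemma ``follows immediately from using the Laplacian solver of \cite{KoutisMP11} in the effective resistance estimation procedure of \cite{SpielmanS11},'' and your write-up fleshes out precisely this combination (constant-distortion JL sketch, $\Theta(\log(n/\delta))$ approximate solves, overestimate-then-truncate to get $\tauEst_e$), with the choice $\epsilon=1/3$ tuned so that $\tfrac{1+\epsilon}{1-\epsilon}=2$ delivers the bound $\sum_e\tauEst_e\le 2n$.
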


Lemma~\ref{lem:levest} stated below follows immediately from using
the Laplacian solver of \cite{KoutisMP11} in the sparsification
routine of \cite{SpielmanS11}.

\begin{lemma}
\label{lem:sparsify}
Given a connected undirected multi-graph
  $G =(V,E)$, with positive edges weights 
  $w : E \to \rea_{+}$, and associated Laplacian $\LL$,
and scalars $0<\eps\leq1/2$, $0< \delta < 1$,
$\sparsify(\LL,\eps,\delta)$ returns a Laplacian $\LLtil$ s.t.
with probability $\geq 1 - \delta$ it holds that
 $\LLtil
\approx_{\eps} \LL$ and $\LLtil$ has $O(n \eps^{-2} \log(n/\delta) )$ edges.
The algorithm runs in time  $O( m \log^{2}(n/\delta) \operatorname{polyloglog}(n)
+ n \eps^{-2} \log(n/\delta)  )$.
\end{lemma}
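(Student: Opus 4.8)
The plan is to realize $\sparsify$ as the effective-resistance importance-sampling scheme of \cite{SpielmanS11}, drawing the sampling probabilities from $\levEst$ (Lemma~\ref{lem:levest}) and getting concentration from a matrix Chernoff bound.

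First I would call $\levEst(\LL,\delta/2)$, which by Lemma~\ref{lem:levest} runs in time $O(m\log^2(n/\delta)\operatorname{polyloglog}(n))$ and, with probability $\ge 1-\delta/2$, returns leverage-score overestimates $\tauEst_e$ satisfying $\tau_e\le\tauEst_e\le 1$ for every edge $e$ and $Z:=\sum_e\tauEst_e\le 2n$, where $\tau_e=w_e R_{eff}(e)$ is the true leverage score. Conditioning on this event, I would draw $N=\Theta(n\eps^{-2}\log(n/\delta))$ i.i.d.\ edges $f_1,\dots,f_N$ from the distribution $\Pr[f_j=e]=p_e:=\tauEst_e/Z$ and output $\LLtil=\tfrac1N\sum_{j=1}^N\tfrac{w_{f_j}}{p_{f_j}}\bb_{f_j}\bb_{f_j}^{\top}$. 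This is a Laplacian supported on at most $N=O(n\eps^{-2}\log(n/\delta))$ edges, and assembling it from the samples costs $O(N)$ further time, so the total running time is $O(m\log^2(n/\delta)\operatorname{polyloglog}(n)+n\eps^{-2}\log(n/\delta))$, as claimed.

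For the approximation guarantee I would pass to the whitened matrices: with $\PPi=\LL^{+/2}\LL\LL^{+/2}$ the orthogonal projection onto $\operatorname{range}(\LL)$, a single summand becomes $X=\tfrac{w_e}{p_e}(\LL^{+/2}\bb_e)(\LL^{+/2}\bb_e)^{\top}$, which has $\mathbb{E}X=\PPi$ and, since $\norm{\LL^{+/2}\bb_e}^2=\bb_e^{\top}\LL^{+}\bb_e=R_{eff}(e)=\tau_e/w_e$, satisfies $\matzero\preceq X\preceq\tfrac{\tau_e}{p_e}\PPi=\tfrac{\tau_e}{\tauEst_e}Z\,\PPi\preceq 2n\,\PPi$. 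Both guarantees of $\levEst$ are used here: $\tauEst_e\ge\tau_e$ bounds each $X$, and $Z\le 2n$ makes the bound $2n$. Hence $\LL^{+/2}\LLtil\LL^{+/2}=\tfrac1N\sum_j X_j$ is an average of i.i.d.\ PSD matrices of spectral norm $\le 2n$ on the $(n-1)$-dimensional range of $\PPi$ and with mean $\PPi$, so the matrix Chernoff bound with $N=\Theta(n\eps^{-2}\log(n/\delta))$ gives $\norm{\tfrac1N\sum_j X_j-\PPi}\le\eps$ on that subspace with probability $\ge 1-\delta/2$. Conjugating back by $\LL^{1/2}$ (using $\LL^{1/2}\LL^{+/2}=\PPi$ and that every $\bb_{f_j}\perp\vecone$, so $\operatorname{range}(\LLtil)\subseteq\operatorname{range}(\LL)$) yields $(1-\eps)\LL\preceq\LLtil\preceq(1+\eps)\LL$; since $\eps\le 1/2$ we have $1-\eps\ge e^{-2\eps}$ and $1+\eps\le e^{2\eps}$, so $\LLtil\approx_{2\eps}\LL$, and running the whole scheme with $\eps/2$ in place of $\eps$ (which changes only absolute constants in $N$ and in the running time) gives $\LLtil\approx_\eps\LL$. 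A union bound over the $\levEst$ event and the Chernoff event gives overall failure probability $\le\delta$.

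The only genuinely delicate step is the matrix concentration bound; the expectation identity, the edge count, and the running-time accounting (given Lemma~\ref{lem:levest}) are routine. The point to get right is that the two properties guaranteed by $\levEst$ are precisely what cap the per-sample matrix norm at $O(n)$, which is in turn what pins the number of samples --- and hence the number of edges of $\LLtil$ --- at $O(n\eps^{-2}\log(n/\delta))$ rather than anything larger.
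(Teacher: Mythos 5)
Your proposal is correct and is essentially the argument the paper invokes: the paper proves this lemma simply by citing the sparsification routine of Spielman--Srivastava combined with a fast Laplacian solver, and your write-up is a faithful reconstruction of that routine (leverage-score overestimates from $\levEst$, importance sampling with $N=\Theta(n\eps^{-2}\log(n/\delta))$ draws, and a matrix Chernoff bound on the whitened samples, whose per-sample norm bound of $O(n)$ comes exactly from the two guarantees $\tauEst_e\ge\tau_e$ and $\sum_e\tauEst_e\le 2n$). The details, including the running-time and edge-count accounting and the union bound, all check out.
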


\begin{algorithm}
	\caption{$\scElim(L,C,\epsilon,\delta)$}
	\label{alg:schurapx}
        Call $\levEst(\LL,\delta/3)$ to compute leverage score estimates
        $\tauEst_e$ for every edge $e$\;
        \For{every edge $e$}
        {
         $\SStil^{(0)} \leftarrow \LLtil$ with multi-edges split into 
$\rho_{e} =
\ceil{\tauEst_{e} \cdot 12
  \left(\frac{\eps}{2}\right)^{-2} \ln^{2}(3 n/\delta)}$
 copies with
$1/\rho_{e} $ of the original weight\;
        }
    %     Define the diagonal matrix $\calDtil \gets \matzero_{n
    % \times n}$\;
  Let $F = V-C$\;
  Label the vertices in $F$ by $\setof{1, \ldots, \sizeof{F}}$
  and the remaining vertices $C$ by $\setof{\sizeof{F}+1,
    \ldots, n}$ \;
    Let $\pi$ be a uniformly random permutation on  $\setof{1, \ldots,
      \sizeof{F}}$\; %and the identity map on the remaining vertices\;
    % Let $\PP_{\pi}$ be its permutation matrix, i.e., $(\PP_{\pi}x)_{i} =
    % x_{\pi_{i}}$\;
    \For{$i = 1$ to $\sizeof{F}$}
    {
   %        $\calDtil(\pi(i),\pi(i)) \leftarrow (\pi(i),\pi(i))$
   %        entry of $\SStil^{(i-1)}$\;
   %        $\cctil_{i} \leftarrow \pi(i)^{\text{th}} \text{ column of }
   %        \SStil^{(i-1)}$ divided by $\calDtil(\pi(i),\pi(i))$ 
   % if $\calDtil(\pi(i),\pi(i)) \neq 0,$ or zero otherwise\;
          $\CCtil_{i}\leftarrow
          \csamp(\SStil^{(i-1)},\pi(i))$\;
          $\SStil^{(i)} \leftarrow \SStil^{(i-1)} -
          \vstar{\SStil^{(i-1)}}{\pi(i)} + \CCtil_{i}$\;
    }
 $\SStil \gets \sparsify(\SStil^{(\sizeof{F})},\eps, \delta/3)$\;
 \KwRet {$\SStil$\;}
\end{algorithm}

Our proof of Theorem~\ref{thm:schurApx} relies on the following lemma which
provides a similar, but seemingly weaker guarantee
about the output of the algorithm {\partialChol}. Its pseudo-code is
given in Figure~\ref{alg:partialChol}.
\begin{lemma}
\label{lem:partialChol}
  Given a connected undirected multi-graph
  $G =(V,E)$, with positive edges weights 
  $w : E \to \rea_{+}$, and associated Laplacian $\LL$,
  a set vertices $C \subset V$,
  and scalars $0< \delta <1$, $0<\eps\leq1/2$,
  the algorithm\\
  $\partialChol(\LL,C,\eps)$
  returns a decomposition
  $(\matlowtil,\calDtil,\SStil)$.
  With probability $\geq 1-\delta$, the following statements all hold:
  \begin{align}
    \label{eq:lapErrorBounds}
   \LL
   \approx_{\eps}
     \LLtil
  \end{align}
where $F = V-C$ and
\[
\LLtil =
% \begin{pmatrix}
% \PP_{\pi} & \matzero \\
% \matzero & \II_{CC}
% \end{pmatrix}
  \begin{pmatrix}
    \matlowtil_{FF} \\
    \matlowtil_{CF}
  \end{pmatrix}
  \calDtil
  \begin{pmatrix}
    \matlowtil_{FF} \\
    \matlowtil_{CF}
  \end{pmatrix}^{\top}
% \begin{pmatrix}
% \PP_{\pi}^{\top} & \matzero \\
% \matzero & \II_{CC}
% \end{pmatrix}
+ 
\begin{pmatrix}
\matzero_{FF} & \matzero_{FC} \\
\matzero_{CF} & \SStil
\end{pmatrix}
.
\]
Here $\SStil$ is a Laplacian matrix whose edges are supported on $C$.
Let $k = \sizeof{C} = n-\sizeof{F}$.
The total number of non-zero entries $\SStil$ is $O(k
\eps^{-2}\log(n/\delta))$.
$\matlowtil_{FF}$ is an invertible matrix.
The total number of non-zero entries in $\matlowtil_{FF}$ and
$\matlowtil_{FC}$ is $O(m + n \eps^{-2}\log n \log(n/\delta)^{2}  )$.
The total running time is bounded by
$O((m\log n \log^{2}(n/\delta)+
n \eps^{-2} \log n \log^{4} (n/\delta))
\operatorname{polyloglog}(n))$.
\end{lemma}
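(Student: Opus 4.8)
The plan is to derive Lemma~\ref{lem:partialChol} as an adaptation of the approximate Cholesky analysis of \cite{KyngS16}. Their algorithm eliminates all vertices of a connected Laplacian one at a time, at the $i$-th step replacing the exact clique $C_{\pi(i)}(\SStil^{(i-1)})$ created by eliminating $\pi(i)$ with a sparse unbiased random Laplacian $\CCtil_i = \csamp(\SStil^{(i-1)},\pi(i))$, and they prove the resulting factorization $\eps$-spectrally approximates $\LL$. The routine \partialChol runs exactly this process with three changes: (i) it stops after eliminating only the set $F = V-C$, so the matrix $\SStil^{(\sizeof{F})}$ left at the end approximates the true Schur complement $\SS$; (ii) it preprocesses with $\levEst$ and the multi-edge split of Algorithm~\ref{alg:schurapx} so that $\csamp$ can be run cheaply; and (iii) it ends with a single call to $\sparsify$ to bring the number of edges of the output Schur complement down to $\Otil(k\eps^{-2})$. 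I will treat correctness, then the structural claims, then the running time.

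For correctness it suffices to prove $\LL \approx_{\eps} \LLtil$. Writing $\matlowtil$ for the matrix whose blocks are $\matlowtil_{FF}$ and $\matlowtil_{CF}$, let $\LLtil^{(\sizeof{F})} = \matlowtil\calDtil\matlowtil^{\top} + \SStil^{(\sizeof{F})}$ (with $\SStil^{(\sizeof{F})}$ embedded as zero on $F$) be the reconstruction before the final sparsification, so $\LLtil$ is obtained from $\LLtil^{(\sizeof{F})}$ by replacing the corner block $\SStil^{(\sizeof{F})}$ by $\SStil = \sparsify(\SStil^{(\sizeof{F})},\eps/2,\delta/3)$. Since the multi-edge split leaves the Laplacian unchanged we have $\SStil^{(0)} = \LL$, and unrolling the update $\SStil^{(i)} = \SStil^{(i-1)} - \vstar{\SStil^{(i-1)}}{\pi(i)} + \CCtil_i$ together with $\vstar{\SStil^{(i-1)}}{\pi(i)} = C_{\pi(i)}(\SStil^{(i-1)}) + \SStil^{(i-1)}(\pi(i),\pi(i))\cc_i\cc_i^{\top}$ gives $\LLtil^{(\sizeof{F})} = \LL + \sum_{i=1}^{\sizeof{F}} \EE_i$, where $\EE_i = \CCtil_i - C_{\pi(i)}(\SStil^{(i-1)})$. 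The $\csamp$ guarantee of \cite{KyngS16} gives $\expct{}{\EE_i \mid \SStil^{(i-1)}} = \matzero$, so $\sum_i (\LL^{+})^{1/2}\EE_i(\LL^{+})^{1/2}$ is a matrix martingale; the multi-edge split driven by the leverage-score estimates $\tauEst_e$ controls its increments and predictable quadratic variation exactly as in their analysis, and the matrix Freedman bound applied at step $\sizeof{F}$ rather than at the last elimination step yields $\LLtil^{(\sizeof{F})} \approx_{\eps/2} \LL$ with probability $\geq 1-\delta/3$. Then $\sparsify$ gives $\SStil \approx_{\eps/2} \SStil^{(\sizeof{F})}$ with probability $\geq 1-\delta/3$, and since the factor term $\matlowtil\calDtil\matlowtil^{\top}$ is PSD, replacing the corner block perturbs the whole matrix by at most $\eps/2$ in the spectral order, so $\LLtil \approx_{\eps/2} \LLtil^{(\sizeof{F})} \approx_{\eps/2} \LL$, hence $\LLtil \approx_{\eps} \LL$. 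A union bound over the failure events of $\levEst$, the elimination loop, and $\sparsify$, each with parameter $\delta/3$, gives the overall probability $1-\delta$.

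The remaining claims transfer the accounting of \cite{KyngS16} and Lemmas~\ref{lem:levest} and~\ref{lem:sparsify}. Each $\csamp$ output is a Laplacian clique on the neighborhood of the eliminated vertex, subtracting a star keeps the matrix a Laplacian, and $\sparsify$ preserves being a Laplacian, so by induction every $\SStil^{(i)}$ is a connected Laplacian; after eliminating all of $F$ its support lies in $C$, and its nonzero count $O(k\eps^{-2}\log(n/\delta))$ with $k=\sizeof{C}$ is exactly the guarantee of Lemma~\ref{lem:sparsify}. The $i$-th column of $\matlowtil$ is $\SStil^{(i-1)}(:,\pi(i))/\SStil^{(i-1)}(\pi(i),\pi(i))$, which has a $1$ in coordinate $\pi(i)$ and zeros in all earlier-eliminated coordinates (those rows of $\SStil^{(i-1)}$ vanish); hence, read in elimination order, $\matlowtil_{FF}$ is unit lower triangular, and since the Laplacian stays connected every pivot $\SStil^{(i-1)}(\pi(i),\pi(i))$ is nonzero, so $\matlowtil_{FF}$ is invertible. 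The total number of nonzeros in $\matlowtil_{FF}$ and $\matlowtil_{CF}$ equals the sum over the elimination sequence of the degrees of the eliminated vertices in the successive graphs; using $\sum_e \tauEst_e \leq 2n$ from Lemma~\ref{lem:levest}, the multi-edge split produces $O(m + n\eps^{-2}\log^2(n/\delta))$ edges, and \cite{KyngS16}'s bound on the fill-in during elimination gives the stated $O(m + n\eps^{-2}\log n\log(n/\delta)^2)$. For the running time, $\levEst$ costs $O(m\log^2(n/\delta)\operatorname{polyloglog} n)$, the multi-edge split costs $O(m + n\eps^{-2}\log^2(n/\delta))$, the $\csamp$ elimination loop runs in time near-linear in the edge count maintained throughout, i.e. $O((m\log n\log^2(n/\delta) + n\eps^{-2}\log n\log^4(n/\delta))\operatorname{polyloglog} n)$, and the final $\sparsify$ is bounded by Lemma~\ref{lem:sparsify} on a graph with $\Otil(m+n\eps^{-2})$ edges; summing gives the claimed bound.

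The main obstacle is the correctness step: one must check that the matrix-martingale concentration of \cite{KyngS16}, which is written for elimination of the entire vertex set, in fact controls the error of the \emph{partial} reconstruction at the intermediate step $\sizeof{F}$ --- that the martingale difference sequence and its predictable quadratic variation restricted to the first $\sizeof{F}$ steps obey the same bounds, that the Freedman-type tail bound applies at this stopping time, and that the matrix it controls is precisely the partial Cholesky form stated in the lemma. Once this is in place, the structural and running-time claims are routine bookkeeping on top of the guarantees in \cite{KyngS16} and Lemmas~\ref{lem:levest} and~\ref{lem:sparsify}.
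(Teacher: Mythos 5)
Your proposal is correct and follows essentially the same route as the paper: both treat $\partialChol$ as the \textsc{SparseCholesky} analysis of \cite{KyngS16} run with leverage-score-based edge splitting, stopped after eliminating only $F$, and composed with a final $\sparsify$ call, with the error split $\eps/2+\eps/2$ and the failure probability split $\delta/3$ three ways. The one substantive point you flag as an obstacle but leave unresolved --- that the martingale's quadratic variation over the first $\sizeof{F}$ steps ``obeys the same bounds'' --- is exactly where the paper supplies a calculation: restricting the random pivot choice to $F$ makes each round's variance term \emph{larger} ($3/(\rho(\sizeof{F}+1-i))$ rather than $3/(\rho(n+1-i))$), but there are only $\sizeof{F}$ rounds, so the total is $\frac{3}{\rho}\sum_{i=1}^{\sizeof{F}}\frac{1}{\sizeof{F}+1-i}\leq \frac{3\ln(\sizeof{F}+1)}{\rho}\leq\frac{3\ln n}{\rho}$, no worse than in the full elimination. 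Everything else in your sketch (the unbiasedness of $\csamp$, the unit-lower-triangular structure giving invertibility of $\matlowtil_{FF}$, and the running-time accounting) matches the paper's argument.
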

\begin{algorithm}
	\caption{$\partialChol(\LL,C,\epsilon,\delta)$}
	\label{alg:partialChol}
        Call $\levEst(\LL ,\delta/3)$ to compute leverage score estimates
        $\tauEst_e$ for every edge $e$\;
        \For{every edge $e$}
        {
         $\SStil^{(0)} \leftarrow \LLtil$ with multi-edges split into $\rho_{e} =
\ceil{\tauEst_{e} \cdot 12
  \left(\frac{\eps}{2}\right)^{-2} \ln^{2}(3 n/\delta)}$ copies with
$1/\rho_{e} $ of the original weight\;
        }
        Let $F = V-C$\;
        Define the diagonal matrix $\calDtil \gets \matzero_{\sizeof{F}
    \times \sizeof{F}}$\;
   Label the vertices in $F$ by $\setof{1, \ldots, \sizeof{F}}$
  and the remaining vertices $C$ by $\setof{\sizeof{F}+1,
    \ldots, n}$ \;
    Let $\pi$ be a uniformly random permutation on  $\setof{1, \ldots,
      \sizeof{F}}$\; %and the identity map on the remaining vertices\;
    % Let $\PP_{\pi}$ be its permutation matrix, i.e., $(\PP_{\pi}x)_{i} =
    % x_{\pi_{i}}$\;
    \For{$i = 1$ to $\sizeof{F}$}
    {
          $\calDtil(i,i) \leftarrow (\pi(i),\pi(i))$
          entry of $\SStil^{(i-1)}$\;
          $\cctil_{i} \leftarrow \pi(i)^{\text{th}} \text{ column of }
          \SStil^{(i-1)}$ divided by $\calDtil(i,i)$ 
   if $\calDtil(i,i) \neq 0,$ or zero otherwise\;
          $\CCtil_{i}\leftarrow
          \csamp(\SStil^{(i-1)},\pi(i))$\;
          $\SStil^{(i)} \leftarrow \SStil^{(i-1)} -
          \vstar{\SStil^{(i-1)}}{\pi(i)} + \CCtil_{i}$\;
    }
$\matlowtil  \gets 
% \begin{pmatrix}
% \PP_{\pi}^{\top} & \matzero \\
% \matzero & \II_{CC}
% \end{pmatrix}
   \begin{pmatrix}
    \cc_{1} & \cc_{2} & \ldots & \cc_{\sizeof{F}}
  \end{pmatrix}
 $\;
%$\SStil \gets \PP_{\pi}^{\top} \SStil^{(\sizeof{F})} \PP_{\pi}$\;
 $\SStil \gets \sparsify(\SStil^{(\sizeof{F})},\eps, \delta/3)$\;
\KwRet{
  $(%\PP_{\pi},
\matlowtil, \calDtil, \SStil) $\;}
\end{algorithm}

\pagebreak
\begin{proof}(of Theorem~\ref{thm:schurApx})
Note, 
given the elimination ordering $\pi(v_{1}), \ldots,
\pi(v_{\sizeof{F}})$
we can do a partial Cholesky factorization of $\LL$ as
\begin{align}
  \LL =
  \begin{pmatrix}
    \matlow_{FF} & \matzero \\
    \matlow_{CF} & \II_{CC}
  \end{pmatrix}
  \begin{pmatrix}
  \calD & \matzero \\
  \matzero & \SS
  \end{pmatrix}
  \begin{pmatrix}
    \matlow_{FF} & \matzero \\
    \matlow_{CF} & \II_{CC}
  \end{pmatrix}^{\top}
\label{eq:lapfactor}
\end{align}
where $\SS$ is the Schur complement of $\LL$ w.r.t. $F$.

We note that {\scElim} and {\partialChol} perform exactly the same
computations, with the exception that $\partialChol$ records the
values $\matlowtil$ and $\calDtil$.
This means we can establish a simple coupling between the algorithms
by considering them executing based on the same source of randomness:
They must then return the same matrix $\SStil$.
Thus, if we can show for the matrix $\SStil$ returned by
$\partialChol$ that $\SStil \approx_{\eps} \SS$, then the same must be
true for the $\SStil$ returned by $\scElim$.

We can write the matrix $\LLtil$ constructed from the output of
$\partialChol$ as 
\begin{align}
  \LLtil =
  \begin{pmatrix}
    \matlowtil_{FF} & \matzero \\
    \matlowtil_{CF} & \II_{CC}
  \end{pmatrix}
  \begin{pmatrix}
  \calDtil & \matzero \\
 \matzero & \SStil
  \end{pmatrix}
  \begin{pmatrix}
    \matlowtil_{FF} & \matzero \\
    \matlowtil_{CF} & \II_{CC}
  \end{pmatrix}^{\top}
\label{eq:laptilfactor}
\end{align}
We now suppose that $\partialChol$ succeeds and returns $\LLtil
\approx_{\eps} \LL$. These two matrices must have the same null
space, namely the span of $\vecone$.
Consider $\xx =
\begin{pmatrix}
\matzero \\
\yy
\end{pmatrix}
$, where $\yy$ is orthogonal to $\vecone_{C}$ and hence $\xx$ is
orthogonal to $\vecone$.
By Equation~\eqref{eq:pinvQFschur}, $\xx^{\top}\LLtil^{+}\xx =
\yy^{\top} \SStil^{+} \yy$, and $\xx^{\top}\LL^{+}\xx =
\yy^{\top} \SS^{+} \yy$. 
$\LLtil \approx_{\eps} \LL$ implies $\LLtil^{+} \approx_{\eps}
\LL^{+}$, and so
\begin{align}
\exp(-\eps) \yy^{\top} \SS^{+} \yy \leq \yy^{\top} \SStil^{+} \yy \leq \exp(\eps)
  \yy^{\top} \SStil^{+} \yy.
\label{eq:schurApxOrthogToOne}
\end{align}
Furthermore, both $\SS$ and $\SStil$ have a null space that is exactly
the span of $\vecone_{C}$. 
We can see this in two steps: Firstly, both
are Laplacian matrices, so their null spaces must include the span of
$\vecone_{C}$. 
Secondly, from the product forms in
Equations~\eqref{eq:lapfactor} and~\eqref{eq:laptilfactor}, if either
had null space of rank strictly larger than 1, then the rank of $\LL$
or $\LLtil$ would be strictly less than 1, which is false.
So by
contradiction, both $\SS$ and $\SStil$ have a null space that is exactly
the span of $\vecone_{C}$.
From this and
Equation~\eqref{eq:schurApxOrthogToOne},
which holds for all $\yy$
orthogonal to $\vecone_{C}$, we conclude $\SStil^{+}
\approx_{\eps} \SS^{+}$.
This in turn implies $\SStil
\approx_{\eps} \SS$.

The guarantees of success probability, running time and sparsity of
$\SStil$ for $\scElim$ now follow from the guarantees for
$\partialChol$ given in Lemma~\ref{lem:partialChol}. 
\end{proof}

\subsection{Properties of Approximate Partal Cholesky Factorization}

In this subsection, we prove Lemma~\ref{lem:partialChol}, which
describes the main guarantee of algorithm~{\partialChol} (Algorithm~\ref{alg:partialChol}).
The algorithm~{\partialChol} is obtained from the algorithm
\textsc{SparseCholesky} given in \cite{KyngS16} by making four small
modifications:
\begin{enumerate}
\item Instead of splitting every original edge into the same number of
  smaller copies, edges are split into smaller copies based on
  estimates of their leverage score.
\item {\partialChol}  only eliminates a subset of the vertices.
This restricts the choices random vertices available to eliminate,
which increases the variance of the algorithm per round of
elimination.
But it also decreases the number of rounds of elimination, which
decreases the total variance accumulated over all rounds of
elimination.
\item To make the matrix of eliminated columns lower-triangular,
\textsc{SparseCholesky} permutes the rows (see \textsc{SparseCholesky}
algorithm Line 10).
{\partialChol} does not need the matrix of eliminated columns to be lower-triangular, so we
do not apply this permutation. 
\item
Algorithm~{\partialChol} outputs a matrix decomposition
  $(\matlowtil,\calDtil,\SStil)$. 
The matrix composition $(\matlowtil,\calDtil,\SStil^{(\sizeof{F})})$
corresponds to an intermediate result computed by
\textsc{SparseCholesky}, but rather than directly outputting this
result, {\partialChol} first applies $\sparsify$ to
$\SStil^{(\sizeof{F})}$ to compute the sparser approximation $\SStil$.
\end{enumerate}

We now sketch a proof of Lemma~\ref{lem:partialChol}, by addressing
how the proof of correctness for \textsc{SparseCholesky}  in \cite{KyngS16} can be adapted to accommodate the changes listed above.

One can prove Lemma~\ref{lem:partialChol} using exactly the Martingale
framework developed in \cite{KyngS16} and applied in their proof of their Theorem~3.1.
% The output of {\partialChol} corresponds to the result of a partial
% run the textsc{SparseCholesky} algorithm of \rk{cite KS16}, except for
% the changes listed above.

\begin{proof}(Sketch of Lemma~\ref{lem:partialChol})
We describe how to address the changes listed above:
\begin{enumerate}
\item
In \cite{KyngS16} it is proven that their algorithm
\textsc{SparseCholesky} succeeds in producing a sparse approximate
Cholesky factorization with probability $1-\delta$, when 
started with a multi-graph where all multi-edges have leverage
score at most $\frac{1}{12\eps^{-2} \ln^{2} (n/\delta) }$.
The \textsc{SparseCholesky} algorithm achieves this bound on leverage
scores by using that original edges have leverage score at most $1$,
and then splitting all original edges into $\rho = \ceil{12
  \eps^{-2} \ln^{2}(n/\delta) }$ copies with weight $1/\rho$ of the original.
This bounds the norms of the multi-edges as desired, while ensuring a
total of at most $\rho m$ multi-edges.
The running time and final number of non-zeros in the output of \textsc{SparseCholesky} is equal to $O(\log n)$ times
the number of multi-edges in the graph after splitting edges, so
it is bounded by $O(\rho m \log n) = O(  \eps^{-2} \log n \log^{2}(n/\delta) )  $.

The algorithm {\partialChol} first computes leverage score estimates 
by the call to $\levEst$, which by Lemma~\ref{lem:levest}
succeeds with probability
$1-\delta/3$ and returns leverage score estimates $\tauEst_{e}$ 
that upper bound the true leverage scores $\tau_{e}$, while ensuring
$\sum_{e} \tauEst_{e} \leq 2n$.
It then splits each edge $e$ into $\rho_{e} =
\ceil{12 \tauEst_{e} 
  \left(\frac{\eps}{2}\right)^{-2} \ln^{2}(3 n/\delta) }$ copies with 
weight $1/\rho_{e}$ of the original.
This ensures a bound on the leverage score of each multi-edge of 
$\frac{\tau_{e} }{12 \tauEst_{e} 
  \left(\frac{\eps}{2}\right)^{-2} \ln^{2}(3 n/\delta)}
\leq
\frac{1}{ 12
  \left(\frac{\eps}{2}\right)^{-2} \ln^{2}(3 n/\delta)}$.
This is the same as the leverage score bound achieved by \textsc{SparseCholesky},
except with $\delta$ replaced by $\delta/3$ and $\eps$ replaced by $\eps/2$.
Thus the elimination procedure should succeed with probability
$1-\delta/3$, and achieve
\[
\LL \approx_{\eps/2}
  \begin{pmatrix}
    \matlowtil_{FF} \\
    \matlowtil_{CF}
  \end{pmatrix}
  \calDtil
  \begin{pmatrix}
    \matlowtil_{FF} \\
    \matlowtil_{CF}
  \end{pmatrix}^{\top}
+ 
\begin{pmatrix}
\matzero_{FF} & \matzero_{FC} \\
\matzero_{CF} & \SStil^{(\sizeof{F})}
\end{pmatrix}
.
\]
The total number of multi-edges created by the intial splitting in
{\partialChol} will be $O(\ceil{12 \tauEst_{e} 
  \left(\frac{\eps}{2}\right)^{-2} \ln^{2}(3 n/\delta) }) = O(\sum_{e}
1 + 12 \tauEst_{e} 
  \left(\frac{\eps}{2}\right)^{-2} \ln^{2}(3 n/\delta) ) = O(m + n
  \left(\frac{\eps}{2}\right)^{-2} \ln^{2}(3 n/\delta) )$.
The final number of non-zeros in
$\SStil^{(\sizeof{F})}$ and the time required for the approximate
eliminations will both be
upper bounded by $O(\log n)$ times the
initial number of multi-edges so upper bounded by $O(m\log n +
n \eps^{-2} \log n \log^{2} (n/\delta) )$.
\item
In the \cite{KyngS16} proof of Theorem~3.1, the variance
$\sigma_{3}^{2}$ is bounded by 
\begin{align*}
\sigma_{3}^{2} \leq
\sum_{\substack{\text{rounds} \\
   \text{ of elimination} \\
  i = 1 \text{ to } n-1}}
\norm{\boldsymbol{\Omega}_{i}}
=
\sum_{\substack{\text{rounds} \\
   \text{ of elimination} \\
  i = 1 \text{ to } n-1}}
 \frac{3}{\rho (n + 1 - i) }
\leq
 \frac{3\ln(n-1)}
{\rho}
.
\end{align*}
This bound ultimately relies on the $i^{th}$ vertex to eliminate being
chosen uniformly at random among$n + 1 - i$ vertices.
{\partialChol} only chooses vertices at random among the vertices of
the set $F$.
Thus the $i^{th}$ vertex to eliminate is chosen uniformly at random
among $\sizeof{F} + 1 - i$ vertices.
However, we also only make $\sizeof{F}$ eliminations, and ultimately,
the variance is bounded by
\begin{align*}
\sigma_{3}^{2} \leq
\sum_{\substack{\text{rounds} \\
   \text{ of elimination} \\
  i = 1 \text{ to } \sizeof{F} }}
\norm{\boldsymbol{\Omega}_{i}}
 = 
\sum_{\substack{\text{rounds} \\
   \text{ of elimination} \\
  i = 1 \text{ to } \sizeof{F} }}
\frac{3}{\rho ({F} + 1 - i) }
\leq
 \frac{3\ln(\sizeof{F} + 1)}
{\rho}
.
\end{align*}
As $\sizeof{F} < n$, we get $\ln(\sizeof{F} + 1)\leq \ln(n)$, and so
the variance $\sigma_{3}^{2}$ of {\partialChol} is less than the
corresponding variance of {\textsc{SparseCholesky}}.
Thus we are able to get the same concentration bounds for
{\partialChol} as for {\textsc{SparseCholesky}}.
\item To make the matrix of eliminated columns lower-triangular,
\textsc{SparseCholesky} permutes the rows (see \cite{KyngS16} \textsc{SparseCholesky}
algorithm Line 10).
{\partialChol} does not need the matrix of eliminated columns to be lower-triangular, so we
do not apply this permutation. 
This does not change the analysis in any way.
\item
The steps outlined above suffice to argue that
\[
\LL \approx_{\eps/2}
  \begin{pmatrix}
    \matlowtil_{FF} \\
    \matlowtil_{CF}
  \end{pmatrix}
  \calDtil
  \begin{pmatrix}
    \matlowtil_{FF} \\
    \matlowtil_{CF}
  \end{pmatrix}^{\top}
+ 
\begin{pmatrix}
\matzero_{FF} & \matzero_{FC} \\
\matzero_{CF} & \SStil^{(\sizeof{F})}
\end{pmatrix}
.
\]
and  $\SStil^{(\sizeof{F})}$ has
 $O(m\log n +
n \eps^{-2} \log n \log^{2} (n/\delta) )$ edges.

Finally, by Lemma~\ref{lem:sparsify}, setting 
 $\SStil \gets \sparsify(\SStil^{(\sizeof{F})},\eps, 2\delta)$
ensures that with probability $1-1/n^{2\delta}$
we get that
$\SStil$ has $O(k \eps^{-2} \log n )$ edges and 
$\SStil \approx_{\eps/2} \SStil^{(\sizeof{F})}$.
So by composing
guarantees 
\[
\LL \approx_{\eps}
  \begin{pmatrix}
    \matlowtil_{FF} \\
    \matlowtil_{CF}
  \end{pmatrix}
  \calDtil
  \begin{pmatrix}
    \matlowtil_{FF} \\
    \matlowtil_{CF}
  \end{pmatrix}^{\top}
+ 
\begin{pmatrix}
\matzero_{FF} & \matzero_{FC} \\
\matzero_{CF} & \SStil
\end{pmatrix}
.
\]

\end{enumerate}

We also need to check the overall running time of the algorithm:
The call to $\levEst$ takes time  $O(m \log^{2} (n/\delta)
\operatorname{polyloglog}(n) )$.
The elimination takes time 
$O(m\log n +
n \eps^{-2} \log n \log^{2} (n/\delta) )$.
The call to $\sparsify$ takes as input a graph with
 $O(m\log n +
n \eps^{-2} \log n \log^{2} (n/\delta) )$ edges and less than $n$ vertices,
and so it runs in time
 $O((m\log n \log^{2}(n/\delta)+
n \eps^{-2} \log n \log^{4} (n/\delta))
\operatorname{polyloglog}(n))$.
All together, the running time dominated by the $\sparsify$ call, so
it is  $O((m\log n \log^{2}(n/\delta)+
n \eps^{-2} \log n \log^{4} (n/\delta))
\operatorname{polyloglog}(n))$.

Finally, the $\levEst$ call, the elimination, and the $\sparsify$ call
each fail with probability $<\delta/3$, so the total failure
probability is less than $\delta$ by a union bound.
\end{proof}

%%% Local Variables:
%%% mode: latex
%%% TeX-master: "main"
%%% End:

\section{Effective Resistance Estimation}
\label{sec:reff-estimation}

Recalling the statement of Theorem~\ref{thm:esimateReff-analysis}, we will give our algorithm and show the following.

\estimateReffAnalysis*

First, we give our algorithm for estimating the effective resistance of a set of pairs $S$ that achieves an improved running time (ignoring $\log(n)$ factors) over algorithms that are based on the Johnson-Lindenstrauss Lemma, for a sufficiently small set of pairs and error parameter. The algorithm  \estimatereff (Algorithm~\ref{alg:estimateReff}) is given below. The main tool it uses is the ability to quickly compute a sparse spectral approximation of the Schur complement of a graph onto a subset of its vertices, along with the following observations:
\begin{enumerate}
\item An approximate Schur complement of a graph onto a subset $V_i$ of the vertices approximately preserves effective resistances between elements of $V_i$ 
\item If the number of vertex pairs we wish to compute the effective resistances of is much smaller than the number of vertices, then there must be a large number of vertices that are not part of any pair, and these vertices can be removed by taking a Schur complement, shrinking the size of the graph.
\end{enumerate}
Our proof of Theorem~\ref{thm:esimateReff-analysis} relies on Theorem~\ref{thm:combSchurApx}, which gives guarantees for the purely combinatorial Schur complement approximation algorithm~{\combScElim} that are almost as strong as the guarantees for the {\scElim} algorithm given in Theorem~\ref{thm:schurApx}.
We prove Theorem~\ref{thm:combSchurApx} in Section~\ref{sec:comblev}.

\begin{theorem}
% \begin{restatable}[]{theorem}{combSchurApx}
\label{thm:combSchurApx}
  Given a connected undirected multi-graph
  $G =(V,E)$, with positive edges weights 
  $w : E \to \rea_{+}$, and associated Laplacian $\LL$,
  a set vertices $C \subset V$,
  and and scalars $0<\eps\leq1/2$, $0< \delta < 1$,
  the algorithm $\combScElim(\LL,C,\eps,\delta)$
  returns a Laplacian matrix $\SStil$.
  With probability $\geq 1-\delta$ the following statements hold:
  $\SStil \approx_{\eps} \SS$, where $\SS$ is the Schur complement of
  $\LL$ w.r.t elimination of $F=V-C$.
$\SStil$ is a Laplacian matrix whose edges are supported on $C$.
Let $k = \sizeof{C} = n-\sizeof{F}$.
The total number of non-zero entries $\SStil$ is $O(k
\eps^{-2}\operatorname{polylog}(n/\delta))$.
The total running time is bounded by
$O((m +
n \eps^{-2}) \operatorname{polylog}(n/\delta))$.
% \end{restatable}
\end{theorem}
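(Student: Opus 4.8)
The plan is to note that the algorithm {\scElim} (Algorithm~\ref{alg:schurapx}) and the proof of Theorem~\ref{thm:schurApx} invoke a Laplacian linear-system solver \emph{only} inside the subroutines {\levEst} and {\sparsify}, and only through their stated input/output contracts (Lemmas~\ref{lem:levest} and~\ref{lem:sparsify}). So {\combScElim} is defined to be {\scElim} with the call to {\levEst} replaced by a combinatorial routine {\combLevScoreEst} and the call to {\sparsify} replaced by a combinatorial routine {\combSparsify}, and it suffices to: (a) supply {\combLevScoreEst} and {\combSparsify} that meet the conclusions of Lemmas~\ref{lem:levest} and~\ref{lem:sparsify} up to extra $\operatorname{polylog}(n/\delta)$ factors, within total time $\Otil((m+n\eps^{-2})\operatorname{polylog}(n/\delta))$ and with no linear-system solves; and (b) re-run the proofs of Lemma~\ref{lem:partialChol} and Theorem~\ref{thm:schurApx} with these substitutions. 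Step (b) is essentially mechanical: the martingale concentration imported from~\cite{KyngS16} relies only on the per-multi-edge leverage-score bound enforced by the edge-splitting step, which uses nothing about the estimates $\tauEst_e$ beyond $\tauEst_e \ge \tau_e$; and the passage from $\LL \approx_\eps \LLtil$ to $\SStil \approx_\eps \SS$ is purely algebraic (Equation~\eqref{eq:pinvQFschur} together with Fact~\ref{fac:pinvandschur}). The single quantitative change is that the bound $\sum_e \tauEst_e \le 2n$ of Lemma~\ref{lem:levest} degrades to $\sum_e \tauEst_e = O(n\operatorname{polylog}(n/\delta))$; since $\sum_e \tauEst_e$ only controls how many multi-edges the splitting step creates, this merely multiplies the output sparsity and the running time by $\operatorname{polylog}$ factors, which is exactly the slack afforded by the statement of Theorem~\ref{thm:combSchurApx}.

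Given leverage-score estimates, {\combSparsify} is the routine importance-sampling sparsifier: keep each edge $e$ independently with probability $p_e = \min\setof{1,\; c\,\eps^{-2}\log(n/\delta)\,\tauEst_e}$ and reweight it by $1/p_e$. A scalar matrix Chernoff bound gives $\LLtil \approx_\eps \LL$ with probability $\ge 1-\delta$, and when $\sum_e \tauEst_e = O(n\operatorname{polylog})$ the expected number of surviving edges is $O(n\eps^{-2}\operatorname{polylog}(n/\delta))$; the running time is $O(m) + O(n\eps^{-2}\operatorname{polylog}(n/\delta))$ and no solver is invoked.

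The real obstacle is {\combLevScoreEst}: obtaining, without a solver, leverage-score overestimates whose \emph{sum} is $O(n\operatorname{polylog}(n/\delta))$ rather than $O(m\operatorname{polylog})$. A low-stretch spanning tree $T$ of $G$, computable combinatorially in $\Otil(m)$ time, already gives valid overestimates $\tauEst_e := \mathrm{str}_T(e) = w_e \sum_{f \in T[u,v]} w_f^{-1} \ge w_e R_{eff}(u,v) = \tau_e$ by Rayleigh monotonicity, but $\sum_e \mathrm{str}_T(e) = O(m\operatorname{polylog}(n))$ --- too large to feed into the Cholesky step, which needs $\sum_e \tauEst_e = O(n\operatorname{polylog})$ so that the number of split multi-edges is $O(m + n\eps^{-2}\operatorname{polylog})$. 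Driving the total down to $O(n\operatorname{polylog})$ requires a genuinely combinatorial spectral sparsification primitive (e.g., iterating a spanner- or low-stretch-tree-based construction through a recursive preconditioning chain, in the spirit of solver-free sparsifiers), together with the observation that leverage-score estimates computed on a sparsified graph can be lifted back to $G$ because $\approx_{O(1)}$ spectral closeness distorts leverage scores by only an $O(1)$ factor (the mechanism behind Lemma~\ref{lem:approx_cond_lev}). The delicate parts --- composing these $\approx_{O(1)}$ guarantees across $O(\log n)$ recursion levels so the final estimates remain valid overestimates up to $\operatorname{polylog}$, ensuring the edge count really contracts geometrically, and keeping each level's running time to $\Otil(m_i + n\eps^{-2})$ --- are what Section~\ref{sec:comblev} must carry out in detail.

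With {\combLevScoreEst} and {\combSparsify} available, define the combinatorial variant of {\partialChol} by using {\combLevScoreEst} for the edge-splitting step and {\combSparsify} for the final sparsification; the proof of Lemma~\ref{lem:partialChol} then goes through line by line, giving $\LL \approx_{\eps/2} (\cdots) + \SStil^{(\sizeof{F})}$ with $O((m+n\eps^{-2})\operatorname{polylog}(n/\delta))$ non-zeros in the intermediate Schur complement, which {\combSparsify} trims to $O(k\eps^{-2}\operatorname{polylog}(n/\delta))$ edges while degrading the spectral error from $\eps/2$ to $\eps$. Coupling {\combScElim} with this variant exactly as in the proof of Theorem~\ref{thm:schurApx} --- they execute identical computations on identical randomness, the Cholesky variant merely additionally recording $\matlowtil$ and $\calDtil$ --- and invoking Equation~\eqref{eq:pinvQFschur} with Fact~\ref{fac:pinvandschur} yields $\SStil \approx_\eps \SS$. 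A union bound over the $O(1)$ randomized subroutine calls, each run at failure probability $\delta/O(1)$, and summation of the per-call running times, then give the probability, sparsity, and $O((m+n\eps^{-2})\operatorname{polylog}(n/\delta))$ time bounds of Theorem~\ref{thm:combSchurApx}.
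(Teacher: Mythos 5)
Your proposal follows essentially the same route as the paper: replace $\levEst$ and $\sparsify$ inside $\scElim$ by combinatorial counterparts, observe that the martingale analysis and the algebraic passage from $\LL \approx_{\eps} \LLtil$ to $\SStil \approx_{\eps} \SS$ are untouched by the substitution, and absorb the degradation of $\sum_e \tauEst_e$ from $2n$ to $O(n\operatorname{polylog}(n))$ into the extra polylog factors in the sparsity and running time. The one piece you flag as the ``real obstacle'' --- a solver-free leverage-score estimator with total sum $O(n\operatorname{polylog}(n))$ --- is exactly what the paper supplies by citation, combining the combinatorial sparsifier of \cite{KyngPPS17} with the estimate-leverage-scores-from-an-arbitrary-sparsifier lemma of \cite{KoutisLP15}, which is precisely the mechanism your sketch describes.
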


\begin{algorithm}[H]							
\caption{$\estimatereff(G=(V,E),S,\epsilon)$}
\label{alg:estimateReff}
\Indm									
\Input{A graph $G=(V,E)$, a set $S \subseteq V \times V$ of vertex pairs, and an error tolerance $0<\epsilon \leq 1$}
\Output{Estimates of the effective resistances of each of the pairs in $S$ accurate to within a factor of $e^{\pm \epsilon}$ with high probability}
\Indp

\vspace{1em}

$\epsilon' \gets \frac{\epsilon}{\log_2 n}$.

\Return{$\helpestimatereff(G,S,\epsilon')$}
\end{algorithm}

\begin{algorithm}[H]							
\caption{$\helpestimatereff(G=(V,E),S,\epsilon)$}
\label{alg:helpEstimateReff}
\Indm									
\Input{A graph $G=(V,E)$, a set $S \subseteq V \times V$ of vertex pairs, and an error tolerance $0<\epsilon \leq 1$}
\Output{Estimates of the effective resistances of each of the pairs in $S$ accurate to within a factor of $e^{\pm \epsilon \log_2 n}$ with high probability}
\Indp

\vspace{1em}

\If{$S = \emptyset$}{
\Return{$\emptyset$}
}

\vspace{1em}

Let $V_0$ denote the set of all vertices that are part of at least one pair in $S$.

$G \gets \combScElim\left(G,V_0,\epsilon,\text{with high probability}\right)$ \qquad (Algorithm in Theorem~\ref{thm:combSchurApx})

$V \gets V_0$

\vspace{1em}

\If{$|S| = 1$ (or equivalently, $|V|=2$)}{
Let $z$ denote the pair in $S$ or equivalently, the only two vertices in the graph.

\Return{the estimate $1/w_z$, where $w_z$ is the weight of the only edge in $G$.}
}

\vspace{1em}

Partition $V$ into $V_1,V_2$ with $|V_1|=\lfloor n/2 \rfloor$ and $|V_2|=\lfloor n/2 \rfloor$.

Partition $S$ into subsets $S_1,S_2,S_3$ with:

\Indp
\nonl$S_1\gets$ pairs with both elements in $V_1$

\nonl$S_2\gets$ pairs with both elements in $V_2$

\nonl$S_3\gets$ pairs with one element in $V_1$ and the other in $V_2$.

\Indm

Let $G_1\gets \combScElim\left(G,V_1,\epsilon,\text{with high probability}\right)$.

Let $G_2\gets \combScElim\left(G,V_2,\epsilon,\text{with high probability}\right)$.

Concatenate and return the estimates given by:

\Indp

\nonl$\helpestimatereff(G_1,S_1,\epsilon)$

\nonl$\helpestimatereff(G_2,S_2,\epsilon)$

\nonl$\helpestimatereff(G,S_3,\epsilon)$

\Indm
\end{algorithm}

\newcommand{\indvec}{\vec{1}}

We now prove that this algorithm quickly computes effective resistances. In doing this analysis, we did not try to optimize log factors, and we believe that at least some of them can likely be eliminated through a more careful martingale analysis.

\begin{proof}(of Theorem ~\ref{thm:esimateReff-analysis})
First we prove correctness. In any recursive call of \helpestimatereff (Algorithm~\ref{alg:helpEstimateReff}), let $\LL$ denote the Schur complement of the graph onto (say) $V_1$. Fact~\ref{fac:pinvandschur} says that the Schur complement of a graph onto a subset of its vertices $V_1$ exactly preserves effective resistances between vertices in $V_1$. However, the algorithm we are analyzing does not take an exact Schur complement. Instead, it takes an approximate Schur complement $\widetilde{\LL}$ which by Theorem~\ref{thm:combSchurApx}, satisfies $e^{-\epsilon'}\LL \preceq \widetilde{\LL} \preceq e^{\epsilon'} \LL$. We also know that the effective resistance between $i$ and $j$ in the approximate Schur complement is given by $(\indvec_i - \indvec_j)^\intercal \widetilde{\LL}^\dagger (\indvec_i - \indvec_j)$, where $\indvec_z$ is the $z$th standard basis vector. These two facts imply that the effective resistance between $i$ and $j$ in $\widetilde{\LL}$ is within an $e^{\pm \epsilon'}$ factor of what it was before taking the approximate Schur complement. Applying this inductively over the depth of the recursion, we get that the approximate effective resistances $\widetilde{R}_\text{eff}$ returned by the algorithm satisfy

\begin{align*}
e^{-\epsilon' (\lceil \log_2 n \rceil - 1)} R_\text{eff} &\leq \widetilde{R}_\text{eff} \leq e^{\epsilon' (\lceil \log_2 n \rceil - 1)} R_\text{eff} \\
e^{-\epsilon} R_\text{eff} &\leq \widetilde{R}_\text{eff} \leq e^{\epsilon} R_\text{eff} \\
\end{align*}

For runtime, let $n,m$ be the number of vertices and edges in the original graph, before any recursion is done. Consider any recursive call $c$. Let $n_c$ be the number of vertices of the graph $G$ that is given to $c$ as an argument, before any modifications within $c$ have been done. Let $s_c$ denote the number of pairs in the argument $S$ passed to the recursive call $c$. Finally, let $n'_c$ denote the number of vertices in $G$ after $G$ has been replaced with its Schur complement onto $V_0$ in the call. By Theorem~\ref{thm:combSchurApx}, the actual amount of work done in a recursive call of \helpestimatereff (other than the top level call) is $\widetilde{O}(n_c / \epsilon^2)$. Here and for the rest of this proof, $\widetilde{O}$ hides factors polylogarithmic in $n$, but does not hide anything that explicitly depends on on $n'_c$ or $\epsilon$.

%We claim that with proper amortization, the amount of work done in call $c$ becomes $\widetilde{O}(m_c + n'_c / \epsilon^2)$. To see this note, that \helpestimatereff (Algorithm~\ref{alg:helpEstimateReff}) could be modified to do the following without changing the runtime. Instead of taking the Schur complement onto $V_0$ at the beginning of the procedure, do it once for each recursive call just before the call is made. For example, instead of simply calling $\helpestimatereff(G_1,S,\epsilon)$, it would first take the Schur complement of $G_1$ onto  at the end  we could have given the following essentially equivalent description that has the same runtime. 

We claim that with proper amortization, the amount of work done in each recursive call is $\widetilde{O}(n'_c/\epsilon^2)$. To show this, define a potential function $\phi_c$ which is $\widetilde{\Theta}(n'_c / \epsilon^2)$. Then define the amortized cost of a recursive call as its true cost plus $(\phi_c -\phi_{\text{parent}(c)}/3)$. Since the recursion tree has branching factor $3$, the sum of the amortized costs of the calls upper bounds the total true cost.

Then we have that the amortized cost of a call $c$ is 
\[
\widetilde{O}(n_c/\epsilon^2) + (\phi_c -\phi_{\text{parent}(c)}/3) = \widetilde{O}(n_c/\epsilon^2) + (\phi_c -\phi_{\text{parent}(c)})/3 + (2/3)\phi_c \leq \widetilde{O}(n_c'/\epsilon^2).
\]

Recall that $n'_c$ is the number of vertices in the graph given to the call that are part of at least one pair in $S$. Thus, $n'_c \leq 2s_c$. Putting this all together, we get that the total amortized work done in the first level of \helpestimatereff is $\widetilde{O}(m+n/\epsilon^2)$, and for any subsequent level, it is given by
\[
\sum_{\text{calls $c$ in the level}} \widetilde{O}(n_c/\epsilon^2) + (\phi_c -\phi_{\text{parent}(c)}/3) \leq \sum_{\text{calls $c$ in the level}} \widetilde{O}(s_c/\epsilon^2) \leq \widetilde{O}(|S|/\epsilon^2).
\]

Summing over all levels gives the claimed bound of
\[
\widetilde{O}\left(m + \frac{n+|S|}{\epsilon^2}\right).
\]
\end{proof}

\subsection{Combintorial Sparsification and Leverage Score Estimation}
\label{sec:comblev}

In this subsection we prove Theorem~\ref{thm:combSchurApx}, a version of Theorem~\ref{thm:schurApx}
that only uses combinatorial algorithms, at the expense of more
$\log$s in the running time and sparsity of the output.

The only non-combinatorial elements of element of $\scElim$ is the
calls to $\levEst$ and $\sparsify$, which both use
Johnson-Lindenstrauss based leverage score estimation.
Thus, the key to obtaining a combinatorial version of $\scElim$ is to
replace $\levEst$ and $\sparsify$ with purely combinatorial
counterparts that still have running times of the form
$O((m +n \eps^{-2}) \operatorname{polylog}(n/\delta))$
and produce leverage scores/sparse graphs with
$O(n \eps^{-2} \operatorname{polylog}(n/\delta))$ sum/edges
respectively, for a failure probability $\delta$.

We observe that sufficient components are already known in the
literature:
If we combine the sparsifier algorithm of \cite{KyngPPS17} (Theorem
4.1 with $\eps$ set to a constant) 
with the leverage score estimation algorithm of \cite{KoutisLP15}
(Lemma 6.5 which takes an arbitrary sparsifier),
to give a purely combinatorial leverage score estimation algorithm
$\combLevEst$, we immediately get the following result.
\begin{lemma}
\label{lem:comblevest}
 Given a connected undirected multi-graph
  $G =(V,E)$, with positive edges weights 
  $w : E \to \rea_{+}$, and associated Laplacian $\LL$,
  and a scalar $0 < \delta < 1$
  the algorithm $\combLevEst (\LL,\delta)$ returns estimates $\tauEst_e$ for all the edges such that
  with probability $\geq 1-\delta$
	\begin{enumerate}
		\item For each edge $e$, we have
		$\tau_{e} \leq \widehat{\tau}_e \leq 1$ where
		$\tau_{e}$ is the true leverage score of $e$ in $G$.
        \item $ \sum_{e} \tauEst_e
			\leq n \operatorname{polylog}(n).$
	\end{enumerate}
The algorithm runs in time $O(m \operatorname{polylog}(n/\delta)
)$.
\end{lemma}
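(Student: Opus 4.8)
The plan is to realize $\combLevEst$ as the obvious two-stage composition and verify that each of the three guarantees is inherited from the two black boxes, with one extra rescale-and-clip at the end. First I would run the combinatorial spectral sparsifier of \cite{KyngPPS17} (Theorem~4.1) with its accuracy parameter fixed to an absolute constant, obtaining in $O(m\operatorname{polylog}(n/\delta))$ time a Laplacian $\LL_H$ with $\LL_H \approx_{c_0}\LL$ for some constant $c_0$ and supported on $O(n\operatorname{polylog}(n/\delta))$ edges; since $G$ is connected and $\approx_{c_0}$ preserves null spaces, $\LL_H$ has a one-dimensional kernel (spanned by $\vecone$), so $H$ is connected and admissible as input to the next stage. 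Then I would invoke the leverage-score estimator of \cite{KoutisLP15} (Lemma~6.5), a combinatorial routine explicitly built to take an arbitrary sparsifier as a black box: given $G$ and $H$, it returns for every edge $e$ of $G$ an estimate $\tauEst'_e$ that with probability $\ge 1-\delta/2$ approximates $\tau_e$ up to a $\operatorname{polylog}(n)$ factor on each side, in time $O(m\operatorname{polylog}(n/\delta))$ --- the $m$ term being the cost of reading $G$, while all remaining work uses only the sparse $\LL_H$. The reason a \emph{spectral} sparsifier is the right kind of input: $\LL_H\approx_{c_0}\LL$ with matching kernels gives $\LL_H^{\dagger}\approx_{c_0}\LL^{\dagger}$, so for each edge $e=(u,v)$ the leverage score of $e$ measured against $H$, namely $w_e\,\bb_{u,v}^{\top}\LL_H^{\dagger}\bb_{u,v}$, is within an $e^{\pm c_0}$ factor of $\tau_e = w_e R_{eff}(u,v)$.

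For correctness (item~1) I would fold the two polylog-and-constant slacks above into a single factor $\gamma=\operatorname{polylog}(n)$ with $\tau_e/\gamma \le \tauEst'_e \le \gamma\,\tau_e$, and output $\tauEst_e := \min\{1,\ \gamma\,\tauEst'_e\}$. Then $\gamma\,\tauEst'_e \ge \tau_e$ while clipping at $1$ is harmless because the leverage score of a single edge never exceeds $1$; hence $\tau_e \le \tauEst_e \le 1$, as required. For item~(2) I would use the exact identity $\sum_{e\in E(G)}\tau_e = n-1$ (the sum of edge leverage scores equals the rank of the weighted incidence matrix) together with the per-edge upper bound $\tauEst_e \le \gamma^2\,\tau_e$, giving $\sum_e\tauEst_e \le \gamma^2(n-1) = n\operatorname{polylog}(n)$; the $\min\{1,\cdot\}$ only decreases this. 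For item~(3), the running time is dominated by the two calls above plus $O(m)$ to emit the $m$ estimates, hence $O(m\operatorname{polylog}(n/\delta))$, and a union bound over the two randomized stages --- each tuned to failure probability $\delta/2$ --- gives total failure probability $\le\delta$.

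The main obstacle is bookkeeping rather than ideas: carefully propagating the sparsifier's constant $c_0$, the $\operatorname{polylog}(n)$ slack of \cite{KoutisLP15}, and the rescale-and-clip step so that the \emph{one-sided} guarantee $\tau_e\le\tauEst_e\le1$ and the $n\operatorname{polylog}(n)$ sum come out exactly as stated rather than as a generic two-sided approximation, and separating in the runtime accounting the unavoidable $O(m)$ cost of reading $G$ from the $O(n\operatorname{polylog})$ cost of everything performed on the sparsifier; neither the analysis of \cite{KyngPPS17} nor that of \cite{KoutisLP15} needs to be reopened, which is why the excerpt can call the result ``immediate''.
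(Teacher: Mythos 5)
Your proposal is correct and follows essentially the same route as the paper, which itself only states that the result is immediate from combining the sparsifier of \cite{KyngPPS17} (Theorem~4.1 with constant $\eps$) with the leverage-score estimator of \cite{KoutisLP15} (Lemma~6.5); you supply the same composition plus the routine rescale-and-clip and union-bound bookkeeping that the paper leaves implicit. Your derivation of the sum bound from $\sum_e \tau_e = n-1$ and of the one-sided guarantee $\tau_e \le \tauEst_e \le 1$ is sound.
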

If we then combine this with the sparsification of \cite{SpielmanS11},
get a combinatorial sparsification algorithm $\combSparsify$.
\begin{lemma}
\label{lem:combsparsify}
Given a connected undirected multi-graph
  $G =(V,E)$, with positive edges weights 
  $w : E \to \rea_{+}$, and associated Laplacian $\LL$,
and scalars $0<\eps\leq1/2$, $0< \delta < 1$,
$\combSparsify (\LL,\eps,\delta)$ returns a Laplacian $\LLtil$ s.t. 
with probability $\geq 1 - \delta$ it holds that
$\LLtil
\approx_{\eps} \LL$ and $\LLtil$ has $O(n \eps^{-2} \operatorname{polylog}(n/\delta) )$ edges.
The algorithm runs in time  $O( (m + n \eps^{-2}) \operatorname{polylog}(n/\delta) )$.
\end{lemma}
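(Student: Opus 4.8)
The plan is to run the effective-resistance sampling scheme of Spielman and Srivastava~\cite{SpielmanS11}, feeding it the combinatorially-computed leverage-score overestimates from $\combLevEst$ (Lemma~\ref{lem:comblevest}) in place of the exact leverage scores the scheme ordinarily requires. Concretely, $\combSparsify(\LL,\eps,\delta)$ first invokes $\combLevEst(\LL,\delta/2)$ to obtain estimates $\tauEst_e$ with $\tau_e \le \tauEst_e \le 1$ and $\sum_e \tauEst_e \le n\operatorname{polylog}(n)$, where $\tau_e$ is the true leverage score of multi-edge $e$. It then fixes sampling probabilities $p_e = \min\setof{1,\ C\eps^{-2}\log(n/\delta)\,\tauEst_e}$ for a suitable absolute constant $C$, independently retains each multi-edge $e$ with probability $p_e$, and rescales the weight of each retained edge by $1/p_e$; the output $\LLtil$ is the Laplacian of the resulting graph, which is manifestly a Laplacian.

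First I would establish $\LLtil \approx_\eps \LL$. Writing $\LL = \sum_e w_e \bb_e \bb_e^{\top}$ and letting $\PP$ denote the projection onto the complement of $\ker \LL$, the normalized contribution of $e$ is the rank-one PSD matrix $\PP (\LL^{\dagger})^{1/2} (w_e \bb_e \bb_e^{\top}) (\LL^{\dagger})^{1/2} \PP$, whose trace --- and hence spectral norm --- equals $\tau_e \le \tauEst_e$. Applying a matrix Chernoff/Bernstein bound to the independent, expectation-preserving rescaled samples (exactly as in~\cite{SpielmanS11}) shows that sampling each edge with probability $p_e \ge C\eps^{-2}\log(n/\delta)\tau_e$ yields, with probability $\ge 1-\delta/2$, a sampled normalized sum that is within a multiplicative $e^{\pm\eps}$ of $\PP$ in the Loewner order, i.e.\ $\LLtil \approx_\eps \LL$. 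The key observation is that concentration needs only a \emph{lower} bound on $p_e$ in terms of the true $\tau_e$, so using overestimates is harmless; the bounds $\tauEst_e \le 1$ and $\sum_e \tauEst_e \le n\operatorname{polylog}(n)$ play no role in correctness and are used only to control the output size.

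Next I would bound the size and the running time. The expected number of retained edges is $\sum_e p_e \le C\eps^{-2}\log(n/\delta)\sum_e \tauEst_e = O(n\eps^{-2}\operatorname{polylog}(n/\delta))$, and a standard Chernoff bound on this sum of independent indicators gives the same bound with probability $\ge 1-\delta/2$ (the additive $\log(1/\delta)$ slack is absorbed into $n\eps^{-2}\operatorname{polylog}(n/\delta)$ since $\eps\le 1/2$). For the runtime, the call to $\combLevEst$ costs $O(m\operatorname{polylog}(n/\delta))$ by Lemma~\ref{lem:comblevest}, and the sampling step is a single pass over the $m$ multi-edges plus the time to write down the $O(n\eps^{-2}\operatorname{polylog}(n/\delta))$ survivors, for a total of $O((m+n\eps^{-2})\operatorname{polylog}(n/\delta))$. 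A union bound over the failure of $\combLevEst$ and of the sampling concentration gives overall failure probability $\le \delta$. The only genuinely nontrivial point is verifying that the matrix-concentration argument of~\cite{SpielmanS11} goes through with overestimated leverage scores and in the multi-graph setting, but since oversampling only tightens concentration and the identity $\operatorname{tr}\big((\LL^{\dagger})^{1/2} w_e\bb_e\bb_e^{\top}(\LL^{\dagger})^{1/2}\big) = \tau_e$ is insensitive to how parallel edges are grouped, this is a routine adaptation; the remaining accounting is immediate.
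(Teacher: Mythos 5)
Your proposal is correct and follows exactly the route the paper intends: the paper offers no written proof of this lemma, merely asserting that it follows by combining $\combLevEst$ with the sampling scheme of Spielman--Srivastava, and your argument fills in precisely those details (oversampling with leverage-score overestimates preserves matrix-Chernoff concentration, while the bound $\sum_e \tauEst_e \leq n\operatorname{polylog}(n)$ controls the output size). No gaps; the observation that only a lower bound on $p_e$ in terms of the true $\tau_e$ is needed for correctness is the right key point.
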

\begin{proof}(of Theorem~\ref{thm:combSchurApx})
If we replace $\levEst$ and $\sparsify$ in $\scElim$ with
$\combLevEst$ and $\combSparsify$ respectively and adjust parameters
appropriately, we then immediately get a purely combinatorial
algorithm $\combScElim$ for Schur complement approximation, proving
the theorem.
\end{proof}

\section*{Acknowledgements}

We thank Richard Peng for extensive discussions and comments. We thank Michael Cohen, who independently observed that the running time of our algorithm could be improved by changing the error parameters of the algorithm, as we have done in this version of the paper.

\bibliographystyle{alpha}
\bibliography{ref}
% Make bibliography show up in table of contents

\end{document}